\documentclass[a4paper,12pt,table]{article}
\usepackage{amssymb,amsmath,latexsym,color,amsthm,authblk,mathpazo,palatino,bbding,setspace,datetime,breakcites,hyphenat,microtype,diagbox,tikz,xcolor,tkz-graph,subcaption,graphicx,titlesec,accents, svg}

\usepackage[round,authoryear]{natbib}
\usepackage{hyperref,theoremref}
\usepackage[titletoc,title]{appendix}
\usepackage[shortlabels]{enumitem}
\usepackage{float}
\usepackage{amssymb}
\usepackage{tikz}
\restylefloat{table}

\everymath{\displaystyle}

\definecolor{armygreen}{rgb}{0.29, 0.33, 0.13}
\definecolor{auburn}{rgb}{0.43, 0.21, 0.1}
\definecolor{burgundy}{rgb}{0.5, 0.0, 0.13}
\definecolor{medium red}{rgb}{.490,.298,.337}
\definecolor{dark red}{rgb}{.235,.141,.161}

\hypersetup{
	colorlinks = true,
	linkcolor = {burgundy},
	citecolor = {burgundy}, 		
	linkbordercolor = {white},
}

\let\OLDthebibliography\thebibliography
\renewcommand\thebibliography[1]{
	\OLDthebibliography{#1}
	\setlength{\parskip}{0pt}
	\setlength{\itemsep}{0pt plus 0.1ex}
}

\DeclareFontFamily{U}{mathx}{\hyphenchar\font45}
\DeclareFontShape{U}{mathx}{m}{n}{<-> mathx10}{}
\DeclareSymbolFont{mathx}{U}{mathx}{m}{n}
\DeclareMathAccent{\widebar}{0}{mathx}{"73}

\newcommand*{\TitleFont}{%
	\usefont{\encodingdefault}{\rmdefault}{b}{n}%
	\fontsize{16}{20}%
	\selectfont}

\titleformat{\section}[block]{\normalfont\scshape\normalsize\filcenter}{\thesection .}{1em}{}
\titleformat{\subsection}{\normalfont\scshape\normalsize}{\thesubsection}{1em}{}
\titleformat{\subsubsection}{\normalfont\scshape\normalsize}{\thesubsubsection}{1em}{}

\newtheorem{theorem}{Theorem}[section]
\newtheorem{proposition}{Proposition}[section]
\newtheorem{claim}{Claim}[section]
\newtheorem{lemma}{Lemma}[section]
\newtheorem{corollary}{Corollary}[section]

\theoremstyle{definition}
\newtheorem{definition}{Definition}[section]
\newtheorem{example}{Example}[section]
\theoremstyle{remark}
\newtheorem{remark}{\textsc{Remark}}[section]
\theoremstyle{observation}
\newtheorem{observation}{\textsc{Observation}}[section]

\newdateformat{monthyeardate}{\monthname[\THEMONTH], \THEYEAR}

\title{\TitleFont \textsc{Identifiable Preference Types and Optimal Menu-Choice Experiments}
	\thanks{The author would like to thank Florian Brandl, Albin Erlanson, Benny Moldovanu, Souvik Roy, Joel Watson and Sonal Yadav for their valuable discussions and suggestions. The author would also like to thank the participants of the trimester program ``Advances in Mechanism Design" held at the University of Bonn.}}

\author{Ujjwal Kumar\footnote{Contact: ujju489@gmail.com}}

\affil {Institute for Microeconomics, University of Bonn}

\date{\today}

\begin{document}
	
	\maketitle
	\vspace*{-10mm}		
	
	\begin{abstract}
		
A planner wants to elicit partial information about a subject's preference relation. Preferences are grouped into ``types" relevant to the planner's purpose. Can the planner identify the subject's type through menu-choice data or experiment, without distinguishing preferences within the same type? We provide a complete answer by characterizing identifiable type spaces. The characterization uses a novel geometric condition, called face consistency, which governs distinctions within and across square
and hexagonal faces of the permutohedron. For every identifiable type space,
we explicitly construct the unique irreducible identifying experiment
from its boundary. This experiment is optimal among all identifying experiments under any cost criterion for which adding menus cannot reduce cost. Using the equivalence between identifiability and elicitation with acts, as shown in \citet{appACH2021}, we study symmetric policies requesting unranked shortlists that contain only acceptable products, include every product preferred to a selected one, and are nonempty whenever an acceptable product exists. Only two such policies are elicitable with acts: requesting the favorite acceptable product or the entire acceptable set.

\end{abstract}
	
	\noindent \textit{Keywords}: Identifiable type spaces, irreducible menu-choice experiments, experimental design, incentive compatibility.
	
	\medskip
	\noindent \textit{JEL Classification}: C90, D01, D83.

	\newpage
\section{Introduction} A planner wants to elicit partial information about a subject's true preference over a finite set of alternatives, but not
necessarily her complete preference ranking.
For example, a product survey may seek the subject's favorite
product or the set of products she prefers to an outside option, without
eliciting her complete preference ranking. The planner may wish to leave other preference information
undisclosed to protect the subject's privacy or to limit elicitation to the
survey's purpose.\footnote{This concern is reflected in the data-minimisation principle
	of the European Union's General Data Protection Regulation.
	Article 5(1)(c) requires personal data to be ``adequate, relevant
	and limited to what is necessary in relation to the purposes
	for which they are processed".} The planner's objective is represented by a partition of the space
of preferences. Each element of the partition is a possible
\textit{type} of the subject, and the partition is a \textit{type space}. Preferences belong to the same type when the planner does not seek to
distinguish between them. The planner seeks \emph{type-relevant elicitation}:
to identify the subject's type without eliciting information that distinguishes
preferences belonging to the same type.

Menu-choice experiments, or simply experiments, are a natural elicitation device. An experiment consists of a collection of menus
of alternatives. The subject chooses her most-preferred alternative,
according to her true preference, from each menu. Each menu-choice observation rules out some preferences and leaves others possible. Therefore, an experiment induces a type space in which two preferences belong to the same type \textit{if and only if} they generate the same choices from every menu in the experiment. A type space $T$ is called \textit{identifiable} if it is induced by some experiment $\mathcal M$. In such cases, we say that $\mathcal M$ identifies $T$.

We ask which type spaces are identifiable and,
whenever identification is possible, which menus are indispensable for identification. This paper provides a complete answer by characterizing all identifiable type
spaces and explicitly determining their unique irreducible identifying
experiments. An experiment is irreducible if it contains no redundant menu,
where a menu is redundant if removing it leaves the induced type space
unchanged.

This identification problem also has a direct incentive interpretation.
In the framework of \citet{appACH2021}, the subject reports her type
and receives an alternative as a reward. The alternative she receives
depends on her report and the realized state of a randomization device.
Each report is therefore assigned an \emph{act}, specifying the
alternative awarded to the subject in each state. A type space is
\emph{elicitable with acts} if these acts can be chosen so that, for
every preference, truthful reporting yields a weakly preferred outcome
in every state and a strictly preferred outcome in at least one state
relative to each false report.\footnote{See
	Definition~\ref{IC_acts} for a formal definition of elicitability with acts.}
\citet{appACH2021} (Proposition 1) show that a type space is elicitable
with acts if and only if it is identifiable. This equivalence gives
an incentive foundation for characterizing identifiable type spaces.

Our characterization uses the geometry of the permutohedron graph (or simply, the permutohedron), whose
vertices represent preferences and whose (undirected) edges connect preferences
that differ only by swapping two consecutively ranked alternatives, that is, one adjacent swap of alternatives.
An experiment distinguishes an edge of the permutohedron if its endpoint preferences generate different choices from at least one menu in the experiment. A set of edges of the permutohedron is called \emph{identifiable} if there exists an experiment which distinguishes precisely those
edges. Our main result (Theorem \ref{Characterization_of_Identifiability}) establishes that a set of edges of the permutohedron is identifiable
if and only if it satisfies \emph{face consistency}, a condition
governing the pattern of distinguished edges within and across the
square and hexagonal faces of the permutohedron. Its proof also
explicitly constructs the unique irreducible experiment identifying
any such edge set.

The connection to identifiable type spaces rests on the observation (Observation \ref{Obs_1}) that an
experiment's distinguished edges completely determine its induced
type space. Each observed menu-choice imposes comparisons between the
chosen alternative and the other alternatives in the menu.
Preferences consistent with the same observations can be connected
by a sequence of adjacent swaps that preserve these comparisons.
Consequently, deleting the distinguished edges from the permutohedron
leaves connected components that are precisely the types induced by the
experiment. Conversely, the induced type space determines the
distinguished edges: they are precisely the edges whose endpoints
belong to different types. This establishes a one-to-one correspondence
between identifiable type spaces and identifiable sets of edges of the permutohedron. For a type space $T$, let its
\emph{boundary} $\partial T$ consist of the edges of the permutohedron whose endpoints
belong to different types. We call $T$ \emph{connected} if any two
preferences in the same type can be connected by a sequence of adjacent swaps that remain
within that type. Our main result (Theorem \ref{Characterization_of_Identifiability}) then yields the following
answer to the original identification problem: a type space is
identifiable if and only if it is connected and its boundary satisfies
face consistency.

The information revealed by observing a choice from a menu cannot be chosen arbitrarily.
Choosing an alternative from a menu reveals that it is preferred to every other
alternative in that menu. Consequently, a menu that distinguishes
an edge of the permutohedron may also distinguish other edges whose
endpoints the planner wishes to leave indistinguishable.
Face consistency captures these restrictions on sets of edges
through square consistency and hexagon consistency. To
understand these requirements, fix a set of edges
$\widehat{\mathcal E}$ of the permutohedron. A square of the permutohedron is a cycle of
four edges obtained from a preference by performing two adjacent
swaps involving disjoint pairs of alternatives.
Opposite edges in a square swap the same pair of alternatives. Performing one
swap leaves unchanged whether a menu distinguishes the other.
Thus, a menu distinguishes one edge of a square if and only if it
distinguishes the opposite edge. Square consistency requires the
proposed edge set to respect this restriction: opposite edges of
every square must either both belong to $\widehat{\mathcal E}$
or both lie outside it.

A hexagon of the permutohedron is a cycle of six edges obtained by varying only the order
of three alternatives occupying consecutive positions in a preference,
while leaving all other alternatives fixed. These three alternatives
form its \emph{changing triple}. Its \emph{higher} edges swap the
alternatives in the first two of these positions, while its
\emph{lower} edges swap those in the last two of these positions. Each pair of opposite
edges in a hexagon consists of one lower edge and one higher edge, both swapping
the same pair of alternatives. For a lower edge $e$, denote its opposite higher edge by $e^h$. Any menu distinguishing a lower edge $e$ must contain the swapped
pair and exclude every alternative ranked above that pair in
both endpoint preferences, including the third alternative
in the changing triple.
The same menu therefore also distinguishes the opposite higher edge $e^h$.
The first requirement of hexagon consistency is consequently that,
in every hexagon, a lower edge $e$ belonging to $\widehat{\mathcal E}$
has its opposite higher edge $e^h$ in $\widehat{\mathcal E}$.
The reverse implication need not hold. We call a hexagon
$\widehat{\mathcal E}$-\emph{active} if some lower edge lies outside
$\widehat{\mathcal E}$ while its opposite higher edge belongs to
$\widehat{\mathcal E}$.

The second requirement of hexagon consistency links different
hexagons and reveals why particular menus are indispensable.
Associate with each hexagon $H$ the set $A(H)$, consisting of its
changing triple and all alternatives ranked below that triple at
its vertices. An active hexagon $H$ is \emph{primitive} if no active
hexagon with the same changing triple has an associated set
strictly contained in $A(H)$. Consider a primitive hexagon $H$ and
a lower edge $e$ of $H$ outside $\widehat{\mathcal E}$ whose opposite higher
edge $e^h$ belongs to $\widehat{\mathcal E}$. Any experiment
identifying $\widehat{\mathcal E}$ must contain a menu $M$ that
distinguishes $e^h$ but not $e$. Such a menu must contain the entire
changing triple and satisfy $M \subseteq A(H)$. Primitivity forces
$M=A(H)$: otherwise, there would be an active hexagon with the same
changing triple and associated set $M$, contradicting the
primitivity of $H$. Thus, $A(H)$ must appear as a menu in every
identifying experiment. Since the menu $A(H)$ distinguishes every higher
edge of every hexagon with that same associated set, the second
requirement of hexagon consistency requires all these edges to
belong to $\widehat{\mathcal E}$.

These restrictions specify an explicit candidate experiment for
identifying $\widehat{\mathcal E}$. Binary menus are forced whenever
an edge in $\widehat{\mathcal E}$ swaps the two bottom-ranked
alternatives: only the menu consisting of the swapped pair can distinguish such an edge, because every other alternative is ranked above both members of the pair.
Collect these binary menus together with the associated sets of all
primitive hexagons, and denote the resulting experiment by
$\mathcal M^*$. Our main result
(Theorem~\ref{Characterization_of_Identifiability}) shows that,
whenever $\widehat{\mathcal E}$ satisfies face consistency,
$\mathcal M^*$ distinguishes precisely the edges in
$\widehat{\mathcal E}$ and hence identifies it. Thus, $\mathcal M^*$ is the common indispensable core of all identifying experiments and also identifies $\widehat{\mathcal E}$. This establishes that
$\mathcal M^*$ is the unique irreducible identifying experiment
and that any additional menus in an identifying experiment are
redundant. Hence, $\mathcal M^*$ is optimal among all experiments
identifying $\widehat{\mathcal E}$ under any cost criterion for
which adding menus cannot reduce the cost of an experiment.
Costs may depend on the identities and sizes of menus or on
interactions among them; neither additivity nor symmetry across
menus is required. The same experiment $\mathcal M^*$ is optimal
under every cost criterion in this class.

We apply our characterization to the elicitation of unranked shortlists
of acceptable products. The set of alternatives consists of $m\geq 2$
products and an outside option. A product is acceptable
if the subject prefers it to the outside option. A shortlist policy
assigns a set of products to each preference. The planner specifies
this policy in advance and asks the subject to report the set of
products prescribed by her preference, without reporting the ordering
of its elements. We require the policy to select only acceptable
products, include every product preferred to a selected one, and
select at least one product whenever an acceptable product exists.
We also impose symmetry under relabeling products while keeping
the outside option fixed. These requirements allow the shortlist's
length to vary with the number of acceptable products. Each policy
induces a type space by grouping preferences that prescribe the
same shortlist. We ask which such type spaces are identifiable or, equivalently,
elicitable with acts, allowing the planner to learn exactly the
prescribed shortlist.

We show that a shortlist policy satisfying these requirements induces an
identifiable type space if and only if it is the
\emph{first-choice shortlist policy} or the
\emph{full-acceptability shortlist policy}. The former selects the
subject's favorite product whenever any product is acceptable;
the latter selects all acceptable products. Both select the empty
set when no product is acceptable. This characterization concerns
the entire policy: the same rule must apply at every preference.
In particular, consider a policy requesting the best $k$ acceptable
products, or all acceptable products if fewer than $k$ exist.
Its induced type space is not identifiable whenever $1<k<m$.
Equivalently, no incentive compatible mechanism with acts can
elicit exactly that shortlist.

The proof of this policy characterization uses both connectedness
and face consistency. Hexagon consistency rules out selecting more
than one but fewer than all acceptable products, while connectedness
and a further application of hexagon consistency rule out switching between the two
policies as the number of acceptable products varies. The
characterization also determines the unique irreducible identifying
experiment for each feasible policy. For first-choice shortlist policy,
it consists of the grand menu containing all products and the
outside option. For full-acceptability shortlist policy, it consists of
the binary menus pairing each product with the outside option.
The latter experiment determines the acceptable set while leaving the ordering
within both the acceptable and unacceptable sets undisclosed.
We also provide incentive compatible direct mechanisms with acts
in which the subject reports only the prescribed shortlist.
For each of the two characterized policies, the identifying experiment and the direct mechanism
reveal the prescribed shortlist without distinguishing preferences
that prescribe the same shortlist.

\subsection{Related literature}

The papers closest to ours are \citet{appACH2021} and
\citet{healy2026experiments}. \citet{appACH2021} study which
type spaces admit incentive compatible elicitation using acts
or objective lotteries. Their Proposition~1 establishes that
a type space is elicitable with acts if and only if it is
generated by top elements of menus, or identifiable in our
terminology. Their Proposition~2 also establishes that every
identifiable type space admits a unique irreducible identifying
experiment, which they call a ``unique independent list of
menus''. They prove the existence of a unique independent list of
menus using an anti-exchange closure operator.
We provide a geometric characterization of identifiability (equivalently, elicitability with acts) through connectedness and face
consistency and explicitly construct the unique irreducible identifying experiment from the
boundary of the type space, specifying the menus forced by
the prescribed preference distinctions.

\citet{healy2026experiments} take a type space as given and characterize
the experiments that reveal enough information to determine
the subject's type, allowing the experiment to distinguish preferences within the
same type. Their main result, Theorem~1, expressed in our terminology,
shows that an experiment determines the subject's type if
and only if it distinguishes every edge in the boundary of
the given type space. The experiment may also distinguish
edges outside this boundary, thereby distinguishing
preferences belonging to the same type. Their
labeled permutohedron assigns to each edge all menus that
distinguish its endpoint preferences, yielding a set-cover formulation
for finding experiments of minimum size or cost. Our
identification requirement rules out these additional
distinctions. We use the relationships among edges within
and across square and hexagonal faces of the permutohedron
to characterize all type spaces that admit an experiment which determines the subject's type without distinguishing preferences within the same type. Moreover, for every such type space, we explicitly construct the unique irreducible identifying experiment, which uniquely minimizes the number of menus among all experiments satisfying our identification requirement.

The incentive interpretation connects our paper to the broader
literature on incentives in experiments and mechanisms with
acts. \citet{azrieli2018incentives} study incentive compatible
payment rules for a given collection of menus and justify
rewarding the subject with her choice from a randomly selected
menu under statewise monotonicity.
\citet{azrieli2020incentives} show how objective probabilities can enlarge the class of incentive compatible payment rules
relative to payments with acts. \citet{bahel2020strategyproof}
characterize strategyproof and unanimous rules for choosing
acts when agents report their (subjective expected utility) preferences.

Related work studies the identification and elicitation of
information about probability distributions.
\citet{caradonna2026identification} study identification in
stochastic choice, characterizing distributions over preferences
that generate identical menu-choice probabilities.
In the elicitation literature, \citet{lambert2008eliciting}
study which properties of a distribution can be truthfully
elicited without requesting the entire distribution.
\citet{lambert2011elicitation} provides a geometric characterization of elicitable properties taking finitely many values, showing that such a property is elicitable if and only if it partitions the simplex of distributions into a power diagram.

Restrictions on the information communicated also arise in
mechanism design. \citet{nisan2006communication} establish
communication requirements for efficient allocation, showing
that communication sufficient to determine an efficient
allocation must also reveal supporting prices.
\citet{haeringer2009constrained} study the strategic
consequences of restricting the number of acceptable schools
that students may list in school-choice mechanisms, a
restriction related to the reporting constraints in our
shortlist application.

The requirement of type-relevant elicitation also connects
our paper to privacy and information disclosure.
\citet{strack2024privacy} characterize privacy-preserving signals, which leave the posterior probabilities of specified events unchanged after every signal realization. \citet{haupt2026contextually} study dynamic elicitation protocols that limit information disclosed
beyond what is needed to implement a choice rule.

\citet{morton2009convex} also use the geometry of the permutohedron to study partitions of rankings underlying statistical rank tests. In \citet{kumarelicitation}, type spaces elicitable with objective lotteries are characterized
in terms of the existence of positive edge weights satisfying balance conditions on basic cycles of a type graph derived from the permutohedron. The adjacency structure of the permutohedron also appears in social choice theory: \citet{carroll2012local}, \citet{sato2013sufficient} and \citet{kumar2021local} establish conditions under which local incentive compatibility,
which rules out profitable misreports to adjacent preferences,
is equivalent to full incentive compatibility.

\subsection{Organization of the paper}

Section \ref{Model} presents the model and the one-to-one
correspondence between identifiable type spaces and identifiable
sets of edges of the permutohedron.
Section~\ref{Ex_illustrate_results} uses examples with three and
four alternatives to motivate the geometric conditions and
illustrate the characterization of identifiable sets of edges of the permutohedron and the explicit construction
of the unique irreducible identifying experiment. Section~\ref{Main} states the main result that characterizes identifiable sets of edges of the permutohedron and gives the explicit
construction. Section \ref{Char_Id_Typespace} derives the
characterization of identifiable type spaces as a corollary
of our main result. Section \ref{sec:applications-acts-shortlists}
applies this corollary to elicitation with acts and unranked
shortlists. 
 %Section~7 concludes. 

\section{The Model}\label{Model}
Let $X$ be a finite set of alternatives with $|X|=n\geq 3$. A \textit{preference} $P$ is an antisymmetric, complete and transitive binary relation over $X$, i.e. a {\it linear order}. For any $x,y \in X$, by $xPy$ we denote that $x$ is strictly preferred to $y$ according to the preference $P$. Let $\mathcal{P}$ be the set of all preferences.

A \textit{type} is a non-empty set of preferences. A \textit{type space} $T=\{t_1,\ldots,t_k\}$ is a collection of types that forms a partition of $\mathcal{P}$, that is, $t_i \cap t_j =\emptyset$ for all $i \neq j$ and $\mathcal{P} = \cup_{i=1}^k t_i$. Given a type space, for each $P\in\mathcal P$, let $t(P)$ denote the unique type containing $P$.

The planner's objective is \textit{type-relevant elicitation}, that is, to identify $t(P)$ without eliciting information that distinguishes $P$ from other preferences in $t(P)$. Thus, preferences belonging to different types should be observationally distinguishable, whereas preferences belonging to the same type should remain observationally indistinguishable. 

Menus are the planner's elicitation device. A \textit{menu} $M$ is a subset of $X$ such that $|M| \geq 2$. A \textit{menu-choice experiment} $\mathcal{M}$ is a (possibly empty) set of menus. To simplify terminology, we refer to menu-choice experiments simply as \textit{experiments} throughout the paper. Formally, an experiment $\mathcal{M} \subseteq \{M \subseteq X : |M|\geq 2\}$.

When the planner offers a menu $M$, the subject chooses her most-preferred (according to her true preference)
alternative in $M$.  For any preference $P$ and menu $M$, let $\mbox{\textit{dom}}_P(M)$ denote the unique most-preferred alternative in $M$ according to $P$. Formally, $\mbox{\textit{dom}}_P(M)= x$ if $x \in M$ and $xPy$ for all $y \in M \setminus \{x\}$. Given any experiment $\mathcal{M}$, let $\mathcal R_{\mathcal{M}}$ denote the type space induced by $\mathcal{M}$: two preferences $P$ and $P'$ belong to the same type \textit{if and only if} $\mbox{\textit{dom}}_P(M)=\mbox{\textit{dom}}_{P'}(M)$ for all $M \in \mathcal{M}$. A menu $M \in \mathcal{M}$ is \textit{redundant} if $\mathcal R_{\mathcal{M}}=\mathcal R_{\mathcal{M}\setminus \{M\}}$. An experiment $\mathcal{M}$ is called \textit{irreducible} if it contains no redundant menu.

\begin{definition}
	A type space $T$ is \textit{identifiable} if there exists an experiment $\mathcal{M}$ such that $T=\mathcal R_{\mathcal{M}}$. In this case, we say that $\mathcal{M}$ identifies $T$.
\end{definition}

This paper characterizes identifiable type spaces and, for every identifiable type space, determines explicitly its \textit{unique} irreducible identifying experiment. The next subsection presents some important observations about identifiable type spaces that facilitate our results.

\subsection{Preliminary observations} The goal of this subsection is to highlight the key equivalence used throughout the paper: characterizing identifiable type spaces is \textit{equivalent} to characterizing identifiable sets of edges of the associated permutohedron (described below).

We say that a pair of alternatives $\{a,b\}\subseteq X$ is
\textit{reversed} between preferences $P$ and $P'$ if either $aPb$ and
$bP'a$, or $bPa$ and $aP'b$. Let
$P\vartriangle P'=\{\{a,b\}\subseteq X: a \mbox{ and } b \text{ are reversed between }
P\text{ and }P'\}$ be the set of all reversed pairs of alternatives between $P$ and $P'$.
Two preferences $P$ and $P'$ are called \textit{adjacent} if
$|P\vartriangle P'|=1$.\footnote{Equivalently, the Kemeny distance between $P$ and $P'$ is one.} Thus, adjacent preferences are obtained from one another by a single adjacent swap and therefore rank exactly one pair of alternatives differently. Consider the (undirected) graph $G=\langle \mathcal{P}, \mathcal{E} \rangle$ where $\{P,P'\} \in \mathcal{E}$ \textit{if and only if} $P$ and $P'$ are adjacent preferences. We call $G$ the permutohedron graph, or simply the permutohedron. We label each edge of the permutohedron with the pair of alternatives reversed between the two adjacent preferences. For example, see Figure \ref{fig1:permutohedron-3}, where $X=\{x,y,z\}$.\footnote{By a preference $P=xyz$, we mean $xPyPz$.} 

\begin{figure}[h]
	\centering
	\begin{tikzpicture}[
		scale=2.2,
		every node/.style={font=\small},
		pref/.style={circle, draw, inner sep=2pt}
		]
		
		% vertices
		\node[pref] (xyz) at (90:1)   {$xyz$};
		\node[pref] (xzy) at (30:1)   {$xzy$};
		\node[pref] (zxy) at (-30:1)  {$zxy$};
		\node[pref] (zyx) at (-90:1)  {$zyx$};
		\node[pref] (yzx) at (-150:1) {$yzx$};
		\node[pref] (yxz) at (150:1)  {$yxz$};
		
		% edges with labels
		\draw (xyz) -- node[midway, above right, fill=white, inner sep=1pt] {$\{y,z\}$} (xzy);
		\draw (xzy) -- node[midway, right,       fill=white, inner sep=1pt] {$\{x,z\}$} (zxy);
		\draw (zxy) -- node[midway, below right, fill=white, inner sep=1pt] {$\{x,y\}$} (zyx);
		\draw (zyx) -- node[midway, below left,  fill=white, inner sep=1pt] {$\{y,z\}$} (yzx);
		\draw (yzx) -- node[midway, left,        fill=white, inner sep=1pt] {$\{x,z\}$} (yxz);
		\draw (yxz) -- node[midway, above left,  fill=white, inner sep=1pt] {$\{x,y\}$} (xyz);
		
	\end{tikzpicture}
	\caption{The permutohedron for three alternatives.}
	\label{fig1:permutohedron-3}
\end{figure} 

A menu $M$ \textit{distinguishes} an edge of the permutohedron $\{P,P'\} \in \mathcal{E}$ if $P$ and $P'$ choose different alternatives from $M$, that is, if
$\mbox{\textit{dom}}_P(M)\neq \mbox{\textit{dom}}_{P'}(M)$. An experiment
$\mathcal{M}$ distinguishes an edge if at least one menu in $\mathcal{M}$
distinguishes it. Given an experiment $\mathcal{M}$, let $\widehat{\mathcal{E}}_{\mathcal{M}} \subseteq \mathcal{E}$ be the set of \textit{all} edges that the experiment $\mathcal{M}$ distinguishes.

For any $\widehat{\mathcal{E}} \subseteq \mathcal{E}$, let $G_{\widehat{\mathcal{E}}}$ be
the graph obtained from the permutohedron by deleting all edges in $\widehat{\mathcal{E}}$, formally, $G_{\widehat{\mathcal{E}}}=\langle \mathcal{P}, \mathcal{E} \setminus \widehat{\mathcal{E}} \rangle$. A \textit{path} $\pi=(P^1,\ldots,P^s)$ in $G_{\widehat{\mathcal{E}}}$ is a sequence of distinct preferences in $\mathcal P$ satisfying the property that $\{P^i,P^{i+1}\} \in \mathcal{E} \setminus \widehat{\mathcal{E}}$ for all $i=1,\ldots,s-1$. In such scenarios, we say that $\pi$ is a path in $G_{\widehat{\mathcal{E}}}$ between $P^1$ and $P^s$. Two preferences $P,P' \in \mathcal P$ are \textit{connected} in $G_{\widehat{\mathcal{E}}}$ if there exists a path in $G_{\widehat{\mathcal{E}}}$ between $P$ and $P'$. A \textit{connected component} of $G_{\widehat{\mathcal{E}}}$ is a maximal subset of $\mathcal P$ in which every pair of preferences is connected in $G_{\widehat{\mathcal{E}}}$.

 Let $\mathcal R_{\widehat{\mathcal{E}}}$ be the type space whose types are the connected components of $G_{\widehat{\mathcal{E}}}$. We often refer to $\mathcal R_{\widehat{\mathcal{E}}}$ as the type space obtained from the permutohedron by deleting all edges in $\widehat{\mathcal{E}}$. Fix an experiment $\mathcal M$. Recall that $\widehat{\mathcal{E}}_{\mathcal{M}}$ is the set of all edges that the experiment $\mathcal{M}$ distinguishes. Consider the type space $\mathcal R_{\widehat{\mathcal{E}}_{\mathcal{M}}}$. Deleting all edges of $\widehat{\mathcal{E}}_{\mathcal{M}}$ ensures that every preference belonging to the same type (connected component) make same choices over all menus of $\mathcal M$. 
 
 Next, we observe that given any experiment $\mathcal M$, the type space obtained from the permutohedron by deleting all edges in $\widehat{\mathcal{E}}_{\mathcal{M}}$ is the same as the type space induced by $\mathcal M$.

\begin{observation} \label{Obs_1}
	For every experiment $\mathcal{M}$, $\mathcal R_{\mathcal{M}}=\mathcal R_{\widehat{\mathcal{E}}_{\mathcal{M}}}$.
\end{observation}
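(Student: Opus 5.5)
We must show that two preferences $P,P'$ make identical choices from every menu of $\mathcal M$ if and only if they are connected in $G_{\widehat{\mathcal E}_{\mathcal M}}$. The two directions are handled separately.

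\emph{Connected implies same type.} Suppose $\pi=(P^1,\ldots,P^s)$ is a path in $G_{\widehat{\mathcal E}_{\mathcal M}}$ from $P$ to $P'$. Each edge $\{P^i,P^{i+1}\}$ lies outside $\widehat{\mathcal E}_{\mathcal M}$, so no menu of $\mathcal M$ distinguishes it. Hence $\mbox{\textit{dom}}_{P^i}(M)=\mbox{\textit{dom}}_{P^{i+1}}(M)$ for every $M\in\mathcal M$, and chaining these equalities along $\pi$ shows that $P$ and $P'$ lie in the same type of $\mathcal R_{\mathcal M}$.

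\emph{Same type implies connected.} Suppose $\mbox{\textit{dom}}_P(M)=\mbox{\textit{dom}}_{P'}(M)$ for all $M\in\mathcal M$. We argue by induction on $|P\vartriangle P'|$, the case $0$ being trivial. If $P\neq P'$, there are two alternatives $a,b$ ranked consecutively in $P$, say $aPb$, with $bP'a$; this is the standard fact that a linear order differing from another has an adjacent inversion. Let $Q$ be obtained from $P$ by swapping $a$ and $b$. Then $Q\vartriangle P'=(P\vartriangle P')\setminus\{\{a,b\}\}$, so $|Q\vartriangle P'|$ is one smaller.

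The key step is to show that the edge $\{P,Q\}$ is not distinguished by $\mathcal M$. Since $P$ and $Q$ differ only in the order of $a$ and $b$, a menu $M$ distinguishes them only if $\{a,b\}\subseteq M$ and $\mbox{\textit{dom}}_P(M)=a$, in which case $\mbox{\textit{dom}}_Q(M)=b$. But $bP'a$ and $b\in M$ give $\mbox{\textit{dom}}_{P'}(M)\neq a=\mbox{\textit{dom}}_P(M)$, contradicting our hypothesis. So $\{P,Q\}\notin\widehat{\mathcal E}_{\mathcal M}$.

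Moreover, $Q$ still agrees with $P'$ on every menu of $\mathcal M$: for any $M\in\mathcal M$, $\mbox{\textit{dom}}_Q(M)=\mbox{\textit{dom}}_P(M)=\mbox{\textit{dom}}_{P'}(M)$, since $\{P,Q\}$ is undistinguished. By induction, $Q$ and $P'$ are connected in $G_{\widehat{\mathcal E}_{\mathcal M}}$, and prepending $P$ gives a walk from $P$ to $P'$. Removing repeated vertices turns this walk into a path, as required by the definition of a path.

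The main obstacle is the key step above: choosing the swap so that the edge it traverses is guaranteed undistinguished. Taking an adjacent pair in $P$ that is reversed in $P'$ achieves this. Since both directions hold, the types of $\mathcal R_{\mathcal M}$ coincide with the connected components of $G_{\widehat{\mathcal E}_{\mathcal M}}$.
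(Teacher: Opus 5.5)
Your proof is correct and follows essentially the paper's route. For the harder direction you, like the paper, walk from $P$ to $P'$ along a shortest path of the permutohedron that reverses only pairs in $P\vartriangle P'$; the difference is that you build this path by induction and check each swapped edge directly, which makes your argument self-contained, whereas the paper cites the existence of such a path and shows that every $\tilde P$ with $P\vartriangle\tilde P\subseteq P\vartriangle P'$ makes the same choice as $P$ from every menu in $\mathcal M$.
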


 The proof of this observation is relegated to Appendix \ref{B}. %\uk{maybe give a direct proof. $\implies$ direction: if two preference $P$ and $P'$ belong to the same type in $\mathcal R_{\mathcal{M}}$, then for every reversed pair $\{a,b\}$ in $P \vartriangle P'$ contained in any menu $M$, $M$ must contain some alternative $x$ such that $x \in X \setminus P \vartriangle P'$. This implies that for any $\tilde P$ in between $P$ and $P'$, $\tilde P$ must also belong to the same type. Hence, there is a shortest path from $P$ to $P'$ fully contained in the same type, hence connected.} 

\begin{figure}[h]
	\centering
	\begin{tikzpicture}[
		scale=2.2,
		every node/.style={font=\small},
		pref/.style={circle, draw, inner sep=2pt}
		]
		
		% vertices
		\node[pref] (xyz) at (90:1)   {$xyz$};
		\node[pref] (xzy) at (30:1)   {$xzy$};
		\node[pref] (zxy) at (-30:1)  {$zxy$};
		\node[pref] (zyx) at (-90:1)  {$zyx$};
		\node[pref] (yzx) at (-150:1) {$yzx$};
		\node[pref] (yxz) at (150:1)  {$yxz$};
		
		% edges with labels
		\draw (xyz) -- node[midway, above right, fill=white, inner sep=1pt] {$\{y,z\}$} (xzy);
		%\draw (xzy) -- node[midway, right,       fill=white, inner sep=1pt] {$\{x,z\}$} (zxy);
		\draw (zxy) -- node[midway, below right, fill=white, inner sep=1pt] {$\{x,y\}$} (zyx);
		%\draw (zyx) -- node[midway, below left,  fill=white, inner sep=1pt] {$\{y,z\}$} (yzx);
		\draw (yzx) -- node[midway, left,        fill=white, inner sep=1pt] {$\{x,z\}$} (yxz);
		%\draw (yxz) -- node[midway, above left,  fill=white, inner sep=1pt] {$\{x,y\}$} (xyz);
		
	\end{tikzpicture}
	\caption{The graph $G_{\widehat{\mathcal{E}}_{\mathcal{M}}}$, where $\mathcal{M}=\{\{x,y,z\}\}$.}
	\label{fig2:permutohedron-3}
\end{figure}

We illustrate Observation \ref{Obs_1} in the case of three alternatives. Let
$X=\{x,y,z\}$ and consider the experiment $\mathcal{M}=\{\{x,y,z\}\}$. Then $\mathcal R_{\mathcal{M}}=\{\{xyz,xzy\},\{yxz,yzx\},\{zxy,zyx\}\}$. The graph 
$G_{\widehat{\mathcal{E}}_{\mathcal{M}}}$ is shown in Figure \ref{fig2:permutohedron-3}, where $\widehat{\mathcal{E}}_{\mathcal{M}}=\{\{xyz,yxz\}, \{xzy,zxy\}, \{zyx,yzx\}\}$.
Hence, $$\mathcal R_{\widehat{\mathcal{E}}_{\mathcal{M}}}
=
\{\{xyz,xzy\},\{yxz,yzx\},\{zxy,zyx\}\}
=
\mathcal R_{\mathcal{M}}.
$$

\begin{definition}
	
	A set of edges $\widehat{\mathcal{E}} \subseteq \mathcal{E}$ of the permutohedron is called \textit{identifiable} if there exists an experiment $\mathcal{M}$ such that $\widehat{\mathcal{E}}=\widehat{\mathcal{E}}_{\mathcal{M}}$.\footnote{We purposefully use the term ``identifiable" here in the context of set of edges to emphasize that there is a one to one correspondence between identifiable type spaces and identifiable set of edges, as discussed below in Remark \ref{equiv}.} In this case, we say that $\mathcal{M}$ identifies $\widehat{\mathcal{E}}$.
	
\end{definition}

%It is worth noting that if $\widehat{\mathcal{E}}$ is identifiable by an experiment $\mathcal{M}$ then, by Observation \ref{Obs_1}, the type space $\mathcal R_{\widehat{\mathcal{E}}}$ is induced by the experiment $\mathcal{M}$. 

\begin{remark}\label{equiv}
Observation~\ref{Obs_1} yields a one-to-one correspondence between identifiable
type spaces and identifiable sets of edges of the permutohedron. Indeed, if an experiment
$\mathcal M$ identifies a type space $T$, then
$\widehat{\mathcal E}_{\mathcal M}$ consists precisely of those edges of the
permutohedron whose endpoints belong to different types in $T$. Hence
$\widehat{\mathcal E}_{\mathcal M}$ is uniquely determined by $T$. Conversely,
if $\widehat{\mathcal E}$ is identifiable, then any experiment identifying
$\widehat{\mathcal E}$ induces the type space
$\mathcal R_{\widehat{\mathcal E}}$. Therefore, characterizing identifiable
type spaces is equivalent to characterizing identifiable sets of edges of the
permutohedron. 
\end{remark}

%a type space $T$ is identifiable \textit{if only if} there exists an identifiable set of edges $\widehat{\mathcal{E}} \subseteq \mathcal{E}$ such that $T=\mathcal R_{\widehat{\mathcal{E}}}$. So, the problem of characterizing identifiable type spaces is equivalent to the problem of characterizing identifiable set of edges of the permutohedron.

It is worth noting that not every set of edges of the permutohedron is identifiable. We illustrate this via an example (Example \ref{Ex_1}) below. 

\begin{example}\label{Ex_1}
Consider the permutohedron with $X=\{x,y,z\}$, as shown in Figure \ref{fig1:permutohedron-3}. Consider the set of edges $\widehat{\mathcal{E}}=\{\{xyz,xzy\}, \{zxy,zyx\}\}$. We show that $\widehat{\mathcal{E}}$ is not identifiable. Suppose, to the contrary, that some experiment $\mathcal M$ identifies $\widehat{\mathcal{E}}$. This implies that there exists a menu $M \in \mathcal M$ which distinguishes the edge $\{xyz,xzy\} \in \widehat{\mathcal{E}}$. Then, it must be the case that $M=\{y,z\}$ because $\{y,z\}$ is the only menu that distinguishes the edge $\{xyz,xzy\}$. Therefore, the menu $M$ must also distinguish the edge $\{yzx,zyx\} \notin \widehat{\mathcal{E}}$, a contradiction to our assumption that $\mathcal M$ identifies $\widehat{\mathcal{E}}$. Hence
	$\widehat{\mathcal E}$ is not identifiable. Here, the edges in $\widehat{\mathcal E}$ are precisely those whose endpoints belong to different types of the type space
	$$
	\mathcal R_{\widehat{\mathcal E}}
	=\{\{xzy,zxy\},\{xyz,yxz,yzx,zyx\}\}.
	$$ By Remark \ref{equiv}, any experiment identifying this type space would also identify $\widehat{\mathcal E}$. Since $\widehat{\mathcal E}$ is not identifiable, neither is $\mathcal R_{\widehat{\mathcal E}}$. \hfill $\square$
\end{example}

Moreover, Observation \ref{Obs_1} immediately rules out some type spaces from being identifiable. For example, let $X=\{x,y,z\}$ and
consider
$$
T=\{\{xyz,zyx\},\{yxz,xzy,yzx,zxy\}\}.
$$
To show that $T$ is identifiable, by Observation \ref{Obs_1}, it is equivalent to showing that there exists an identifiable set of edges $\widehat{\mathcal{E}} \subseteq \mathcal{E}$ such that $T=\mathcal R_{\widehat{\mathcal{E}}}$. Consider the permutohedron as shown in Figure \ref{fig1:permutohedron-3}. We claim that there does not exist any set of edges $\widehat{\mathcal{E}} \subseteq \mathcal{E}$ such that $T=\mathcal R_{\widehat{\mathcal{E}}}$.\footnote{Notice that we show a stronger statement here, not only there is no identifiable set of edges $\widehat{\mathcal{E}}$ with $T=\mathcal R_{\widehat{\mathcal{E}}}$, we show that there is no set of edges $\widehat{\mathcal{E}}$ satisfying $T=\mathcal R_{\widehat{\mathcal{E}}}$.} To see this, suppose, to the contrary, that $T=\mathcal R_{\widehat{\mathcal{E}}}$ for some $\widehat{\mathcal{E}} \subseteq \mathcal{E}$. Since the preferences $xyz$ and $zyx$ belong to the same type (connected component), at least one of the two
paths (clockwise or anti-clockwise) between $xyz$ and $zyx$ must remain in $t(xyz)$. But either path contains
intermediate vertices different from $xyz$ and $zyx$, which would then also belong to $t(xyz)$, a contradiction to $T=\mathcal R_{\widehat{\mathcal{E}}}$. Hence, there does not exist any set of edges $\widehat{\mathcal{E}} \subseteq \mathcal{E}$ such that $T=\mathcal R_{\widehat{\mathcal{E}}}$. Therefore, by Observation \ref{Obs_1}, $T$ is not identifiable.

\section{Examples Illustrating the Main Result}\label{Ex_illustrate_results}
This section illustrates the characterization of identifiable sets of edges of the permutohedron via examples. First, we present the characterization result for $n=3$. Then, we present it for $n=4$, which also continues to hold for $n>4$.

\subsection{Three alternatives} 
In this subsection, we assume $X=\{x,y,z\}$. The permutohedron is depicted in Figure \ref{fig4:permutohedron-3}. Since there are only three alternatives, the permutohedron is a hexagon. In Figure \ref{fig4:permutohedron-3}, we additionally label each edge in the hexagon in the following manner:
\begin{enumerate}[(i)]
	\item An edge is labelled $\boldsymbol{\ell}$ if the pair of alternatives reversed between the two vertices (adjacent preferences) occupy the bottom two positions at both of these adjacent preferences.
	\item An edge is labelled $\boldsymbol{h}$ if the pair of alternatives reversed between the two vertices (adjacent preferences) occupy the top two positions at both of these adjacent preferences.
\end{enumerate}

\begin{figure}[h]
	\centering
	\begin{tikzpicture}[
		scale=2.2,
		every node/.style={font=\small},
		pref/.style={circle, draw, inner sep=2pt}
		]
		
		% vertices
		\node[pref] (xyz) at (90:1)   {$xyz$};
		\node[pref] (xzy) at (30:1)   {$xzy$};
		\node[pref] (zxy) at (-30:1)  {$zxy$};
		\node[pref] (zyx) at (-90:1)  {$zyx$};
		\node[pref] (yzx) at (-150:1) {$yzx$};
		\node[pref] (yxz) at (150:1)  {$yxz$};
		
		% edges with labels
		\draw (xyz) -- 
		node[midway, above right, fill=white, inner sep=1pt] {$\{y,z\}$}
		node[midway, fill=white, inner sep=0.8pt, xshift=-8pt, yshift=-6pt] {$\boldsymbol{\ell}$}
		(xzy);
		
		\draw (xzy) -- 
		node[midway, right, fill=white, inner sep=1pt] {$\{x,z\}$}
		node[midway, fill=white, inner sep=0.8pt, xshift=-8pt] {$\boldsymbol{h}$}
		(zxy);
		
		\draw (zxy) -- 
		node[midway, below right, fill=white, inner sep=1pt] {$\{x,y\}$}
		node[midway, fill=white, inner sep=0.8pt, xshift=-8pt, yshift=6pt] {$\boldsymbol{\ell}$}
		(zyx);
		
		\draw (zyx) -- 
		node[midway, below left, fill=white, inner sep=1pt] {$\{y,z\}$}
		node[midway, fill=white, inner sep=0.8pt, xshift=8pt, yshift=6pt] {$\boldsymbol{h}$}
		(yzx);
		
		\draw (yzx) -- 
		node[midway, left, fill=white, inner sep=1pt] {$\{x,z\}$}
		node[midway, fill=white, inner sep=0.8pt, xshift=8pt] {$\boldsymbol{\ell}$}
		(yxz);
		
		\draw (yxz) -- 
		node[midway, above left, fill=white, inner sep=1pt] {$\{x,y\}$}
		node[midway, fill=white, inner sep=0.8pt, xshift=8pt, yshift=-6pt] {$\boldsymbol{h}$}
		(xyz);
		
	\end{tikzpicture}
	\caption{The permutohedron for three alternatives, with lower and higher edges labelled.}
	\label{fig4:permutohedron-3}
\end{figure}

The labellings $\boldsymbol{\ell}$ and $\boldsymbol{h}$ reflect the position at which the adjacent swap occurs. Along a lower edge, the swap affects only the two alternatives that occupy bottom two positions. Along a higher edge, the swap affects the two alternatives that occupy top two positions.

For a lower edge $e$ of the hexagon labelled by the pair $\{u,v\}$, where $u,v \in X$, by $e^h$ we denote the unique opposite higher edge of the hexagon that is also labelled by $\{u,v\}$. For example, $\{xyz,xzy\}$ is a lower edge labelled with $\{y,z\}$, and $\{yzx,zyx\}$ is the unique opposite higher edge that is also labelled $\{y,z\}$. 

Consider any lower ($\boldsymbol{\ell}$)-edge, say $\{xyz,xzy\}$. Since the preferences $xyz$ and $xzy$ differ only in the ranking of alternatives $y$ and $z$ at bottom two positions, for any menu $M$ that distinguishes this edge, it must be the case that $M=\{y,z\}$. Therefore, lower edges force binary menus, in this case the edge $\{xyz,xzy\}$ forces the menu $M=\{y,z\}$. Further, notice that the menu $M=\{y,z\}$ must also distinguish the opposite higher ($\boldsymbol{h}$)-edge $\{yzx,zyx\}$. 

Let $\widehat{\mathcal E}$ be an identifiable set of edges. Therefore, from the discussion in the previous paragraph it follows that if a lower edge $e \in \widehat{\mathcal E}$, then $e^h \in \widehat{\mathcal E}$. Next, we analyze what happens if for some lower edge $e \notin \widehat{\mathcal E}$ but $e^h \in \widehat{\mathcal E}$.

Consider any lower edge, say, $e=\{xyz,xzy\}$. Therefore, $e^h=\{yzx,zyx\}$. Suppose $e \notin \widehat{\mathcal E}$ but $e^h \in \widehat{\mathcal E}$. Since $\widehat{\mathcal E}$ is identifiable, there exists an experiment $\mathcal M$ that identifies $\widehat{\mathcal E}$. Therefore, there exists a menu $M \in \mathcal M$ such that $M$ distinguishes $e^h$ but does not distinguish $e$. Since the preferences $yzx$ and $zyx$ differ only in the ranking of alternatives $y$ and $z$, the fact that menu $M$ distinguishes $e^h=\{yzx,zyx\}$ implies $\{y,z\} \subseteq M$. If $M=\{y,z\}$, then $M$ also distinguishes the edge $e$, which will be a contradiction to the fact that $e \notin \widehat{\mathcal E}$. Therefore, it must be the case that $M=\{x,y,z\}$. It is worth noting here that the exclusion of a lower edge but inclusion of the opposite higher edge forces a ternary menu, in this case $e=\{xyz,xzy\} \notin \widehat{\mathcal E}$ but $e^h=\{yzx,zyx\} \in \widehat{\mathcal E}$, hence it forces the menu $M=\{x,y,z\}$. Thus, $M$ distinguishes \textit{all} higher edges of the hexagon. Hence, all higher edges of the hexagon must belong to $\widehat{\mathcal E}$. 

We call a hexagon $\widehat{\mathcal{E}}$-active if for some lower edge $e$ of the hexagon, $e \notin \widehat{\mathcal{E}}$ but $e^h \in \widehat{\mathcal{E}}$. Therefore, by the discussion in the previous paragraph, it follows that for identifiability, all higher edges of an $\widehat{\mathcal{E}}$-active hexagon must be included in $\widehat{\mathcal{E}}$. Moreover, if the hexagon is $\widehat{\mathcal{E}}$-active, then it must force the menu $M=\{x,y,z\}$. Next, we define a consistency condition which completely characterizes identifiable set of edges for the case of three alternatives.

\begin{definition}\label{Hex_consistency_3}
A set of edges $\widehat{\mathcal{E}}$ satisfies
\textit{hexagon consistency} if the following conditions hold:

\begin{enumerate}[(i)]
	\item For every lower edge $e$ of the hexagon, $e\in \widehat{\mathcal{E}} \implies e^h\in \widehat{\mathcal{E}}$.
	
	\item If the hexagon is $\widehat{\mathcal{E}}$-active, then every higher edge of the hexagon belongs to $\widehat{\mathcal{E}}$.
\end{enumerate}	
\end{definition}

Before presenting our main result for the case of three alternatives, let us first illustrate the notion of \textit{irreducible} experiments. Consider two experiments $\mathcal M=\{\{x,y\},\{y,z\},\{x,z\}\}$ and $\mathcal M'=\{\{x,y\},\{y,z\},\{x,z\},\{x,y,z\}\}$. Notice that the type space induced by both these experiments are same, that is,  
$$\mathcal R_{\mathcal M}=\{\{xyz\},\{xzy\},\{zxy\},\{zyx\},\{yzx\},\{yxz\}\}=\mathcal R_{\mathcal M'}.$$
Thus the menu $\{x,y,z\}$ is redundant in $\mathcal M'$. Removing it does not change the induced type space. By contrast, no menu in $\mathcal M$ is redundant: each binary menu distinguishes a pair of adjacent preferences that the other two binary menus do not distinguish. Hence, $\mathcal M$ is irreducible, while $\mathcal M'$ is not. Formally, we say that an experiment is irreducible if it contains no redundant menus.

Now, we are ready to present our main result for the case of three alternatives. 
\begin{proposition} Let $|X|=3$, and let $\widehat{\mathcal{E}}$ be a set of edges of the hexagon. Then,
$\widehat{\mathcal{E}}$ is identifiable if and only if $\widehat{\mathcal{E}}$
satisfies hexagon consistency. Moreover, if $\widehat{\mathcal{E}}$ is
identifiable, then there exists a unique irreducible experiment that identifies it.
\end{proposition}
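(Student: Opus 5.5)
Necessity of hexagon consistency has already been argued in the discussion preceding Definition~\ref{Hex_consistency_3}, so the plan is to make that formal and then concentrate on sufficiency and uniqueness. For necessity, suppose $\mathcal M$ identifies $\widehat{\mathcal E}$. First, any menu distinguishing a lower edge $e$ labelled $\{u,v\}$ must equal $\{u,v\}$, because the third alternative is ranked above both $u$ and $v$ at both endpoints of $e$. This menu also distinguishes $e^h$, which gives condition (i). Second, if the hexagon is active via some lower edge $e\notin\widehat{\mathcal E}$ with $e^h\in\widehat{\mathcal E}$, there is a menu in $\mathcal M$ distinguishing $e^h$ but not $e$. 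Such a menu contains $\{u,v\}$ and differs from $\{u,v\}$, so it must be $X$. The grand menu $X$ distinguishes every higher edge, since top choices differ along exactly the higher edges. Hence all higher edges lie in $\widehat{\mathcal E}$, which is condition (ii).

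For sufficiency, I will construct the candidate experiment explicitly. Let $\mathcal M^*$ consist of every binary menu $\{u,v\}$ whose lower edge (the unique lower edge labelled $\{u,v\}$) lies in $\widehat{\mathcal E}$. If the hexagon is active, also add the menu $X$. The task is to verify $\widehat{\mathcal E}_{\mathcal M^*}=\widehat{\mathcal E}$ by checking each of the six edges directly. The binary menu $\{u,v\}$ distinguishes exactly the two edges labelled $\{u,v\}$, namely the lower edge $e$ and its opposite $e^h$. The menu $X$ distinguishes exactly the three higher edges.
\begin{itemize}
\item[] A lower edge is distinguished by $\mathcal M^*$ iff its binary menu was included, i.e.\ iff it lies in $\widehat{\mathcal E}$.
\item[] A higher edge $e^h$ is distinguished iff either $e\in\widehat{\mathcal E}$ or the hexagon is active. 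By (i), the first case forces $e^h\in\widehat{\mathcal E}$, and by (ii), so does the second.
\item[] Conversely, if $e^h\in\widehat{\mathcal E}$, then either $e\in\widehat{\mathcal E}$ and the binary menu covers $e^h$, or $e\notin\widehat{\mathcal E}$, the hexagon is active, and $X$ covers $e^h$.
\end{itemize}

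For uniqueness of the irreducible identifying experiment, I will show that every identifying experiment $\mathcal M$ contains $\mathcal M^*$. The necessity argument already yields this: each lower edge in $\widehat{\mathcal E}$ forces its binary menu into $\mathcal M$, and activity forces $X\in\mathcal M$. Since $\mathcal M^*$ itself identifies $\widehat{\mathcal E}$, any menu of $\mathcal M$ outside $\mathcal M^*$ is redundant. Strictly, removing several menus at once needs care, so I will argue that any menu outside $\mathcal M^*$ distinguishes only edges already distinguished by $\mathcal M^*$, because $\mathcal M^*\subseteq\mathcal M$ and both experiments identify $\widehat{\mathcal E}$. Thus an irreducible identifying experiment contains no such menu and equals $\mathcal M^*$. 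Finally, $\mathcal M^*$ is itself irreducible: each of its binary menus is the only menu that can distinguish its lower edge, and $X$ is the only menu distinguishing the active $e^h$ without distinguishing $e$. Observation~\ref{Obs_1} and Remark~\ref{equiv} translate the result to type spaces if needed.

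The main obstacle is the step showing that any identifying experiment contains $\mathcal M^*$, specifically that the menu covering an active $e^h$ must be exactly $X$. With three alternatives this is immediate, since the only supersets of $\{u,v\}$ are $\{u,v\}$ and $X$. I expect the difficulty to lie in the generalization to larger $n$, not in this case.
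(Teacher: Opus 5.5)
Your proposal is correct and is the paper's own argument from the proof of Theorem~\ref{Characterization_of_Identifiability} (Appendix~\ref{A}), specialized to the single hexagon. It uses the same candidate $\mathcal M^*$ (binary menus from included lower edges, plus $X=A(H)$ when the hexagon is active, hence primitive), the same forcing arguments for necessity, and the same uniqueness route via $\mathcal M^*\subseteq\mathcal M$ for every identifying $\mathcal M$; the monotonicity lemma and the induction on $|A(f)|$ collapse to your direct six-edge check. One step deserves to be made explicit: concluding that $M_0$ is redundant from the fact that ``removing $M_0$ leaves the distinguished edges unchanged'' relies on Observation~\ref{Obs_1}, which the paper also uses implicitly at that point.
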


For the case of three alternatives, hexagon consistency completely characterizes identifiability. Further, whenever a set of edges is identifiable, there exists a \textit{unique} irreducible experiment that identifies it and we explicitly determine it. It is worth noting that this unique irreducible experiment is contained in every identifying experiment, hence it is the common indispensable core of all identifying experiments.\footnote{We discuss this in more details in the next subsection (case of four alternatives).} Below, we illustrate our main result via an example.

Consider the edge set $\widebar{\mathcal E}$ that contains every edge of the hexagon except the edge $\{xyz,xzy\}$. In Figure \ref{fig5:permutohedron-3}, the edges in $\widebar{\mathcal E}$ are highlighted in red. It can be visually verified in Figure \ref{fig5:permutohedron-3} that $\widebar{\mathcal E}$ satisfies hexagon consistency.

\begin{figure}[h]
	\centering
	\begin{tikzpicture}[
		scale=2.2,
		every node/.style={font=\small},
		pref/.style={circle, draw, inner sep=2pt},
		highlight/.style={draw=red, line width=1.2pt}
		]
		
		% vertices
		\node[pref] (xyz) at (90:1)   {$xyz$};
		\node[pref] (xzy) at (30:1)   {$xzy$};
		\node[pref] (zxy) at (-30:1)  {$zxy$};
		\node[pref] (zyx) at (-90:1)  {$zyx$};
		\node[pref] (yzx) at (-150:1) {$yzx$};
		\node[pref] (yxz) at (150:1)  {$yxz$};
		
		% edges with labels
		\draw (xyz) -- 
		node[midway, above right, fill=white, inner sep=1pt] {$\{y,z\}$}
		node[midway, fill=white, inner sep=0.8pt, xshift=-8pt, yshift=-6pt] {$\boldsymbol{\ell}$}
		(xzy);
		
		\draw[highlight] (xzy) -- 
		node[midway, right, fill=white, inner sep=1pt, text=red] {$\{x,z\}$}
		node[midway, fill=white, inner sep=0.8pt, xshift=-8pt, text=red] {$\boldsymbol{h}$}
		(zxy);
		
		\draw[highlight] (zxy) -- 
		node[midway, below right, fill=white, inner sep=1pt, text=red] {$\{x,y\}$}
		node[midway, fill=white, inner sep=0.8pt, xshift=-8pt, yshift=6pt, text=red] {$\boldsymbol{\ell}$}
		(zyx);
		
		\draw[highlight] (zyx) -- 
		node[midway, below left, fill=white, inner sep=1pt, text=red] {$\{y,z\}$}
		node[midway, fill=white, inner sep=0.8pt, xshift=8pt, yshift=6pt, text=red] {$\boldsymbol{h}$}
		(yzx);
		
		\draw[highlight] (yzx) -- 
		node[midway, left, fill=white, inner sep=1pt, text=red] {$\{x,z\}$}
		node[midway, fill=white, inner sep=0.8pt, xshift=8pt, text=red] {$\boldsymbol{\ell}$}
		(yxz);
		
		\draw[highlight] (yxz) -- 
		node[midway, above left, fill=white, inner sep=1pt, text=red] {$\{x,y\}$}
		node[midway, fill=white, inner sep=0.8pt, xshift=8pt, yshift=-6pt, text=red] {$\boldsymbol{h}$}
		(xyz);
		
	\end{tikzpicture}
	\caption{Identifiable set of edges $\widebar{\mathcal{E}}$.}
	\label{fig5:permutohedron-3}
\end{figure}

Define an experiment 
$$\mathcal M^*=\mathcal M_2^* \cup \mathcal M_3^*,$$
where $\mathcal M_2^*=\{\{x,y\},\{x,z\}\}$ is a collection of binary menus and $\mathcal M_3^*=\{\{x,y,z\}\}$ is a collection of ternary menus. Notice that from our previous discussions in this subsection,
\begin{enumerate}[(i)]
	\item the lower edge $\{zxy,zyx\}$ forces the binary menu $\{x,y\}$,
	\item the lower edge $\{yzx,yxz\}$ forces the binary menu $\{x,z\}$, and
	\item the hexagon is $\widebar{\mathcal{E}}$-active (because the lower edge $\{xyz,xzy\}$ is excluded but the opposite higher edge $\{yzx,zyx\}$ is included) and thus forces the menu $\{x,y,z\}$.
\end{enumerate}

Note that none of the menus in $\mathcal M^*$ distinguishes the omitted edge $\{xyz,xzy\}$. Therefore, $\mathcal M^*$ is the unique irreducible experiment that identifies $\widebar{\mathcal E}$. Consequently, for every experiment $\mathcal M$ that identifies $\widebar{\mathcal E}$, it must be the case that $\mathcal M^* \subseteq \mathcal M$, that is, $\mathcal M^*$ is the common indispensable core of \textit{all} experiments that identify $\widebar{\mathcal E}$.

\subsection{Four Alternatives} \label{4 alt}
Throughout this subsection, let $X=\{w,x,y,z\}$.\footnote{For the simplicity of presentation, we consider only four alternatives in this subsection. The results continue to hold for more than four alternatives.} Figure \ref{fig:permutohedron-4} depicts the corresponding permutohedron.\footnote{The figure is adapted from David Eppstein's public-domain Wikimedia Commons file \textit{Permutohedron.svg}; the vertex labels have been relabelled to match the notation used in this paper.} In addition to hexagons, the permutohedron contains squares. For example, see the square $(wyxz,ywxz,ywzx,wyzx,wyxz)$ in Figure \ref{fig:permutohedron-4}. Therefore, we need some kind of consistency condition on both squares and hexagons to characterize identifiability. First, we present a consistency condition on squares, which will be used to characterize identifiability.

\begin{figure}[h]
	\centering
	\includegraphics[width=.85\textwidth]{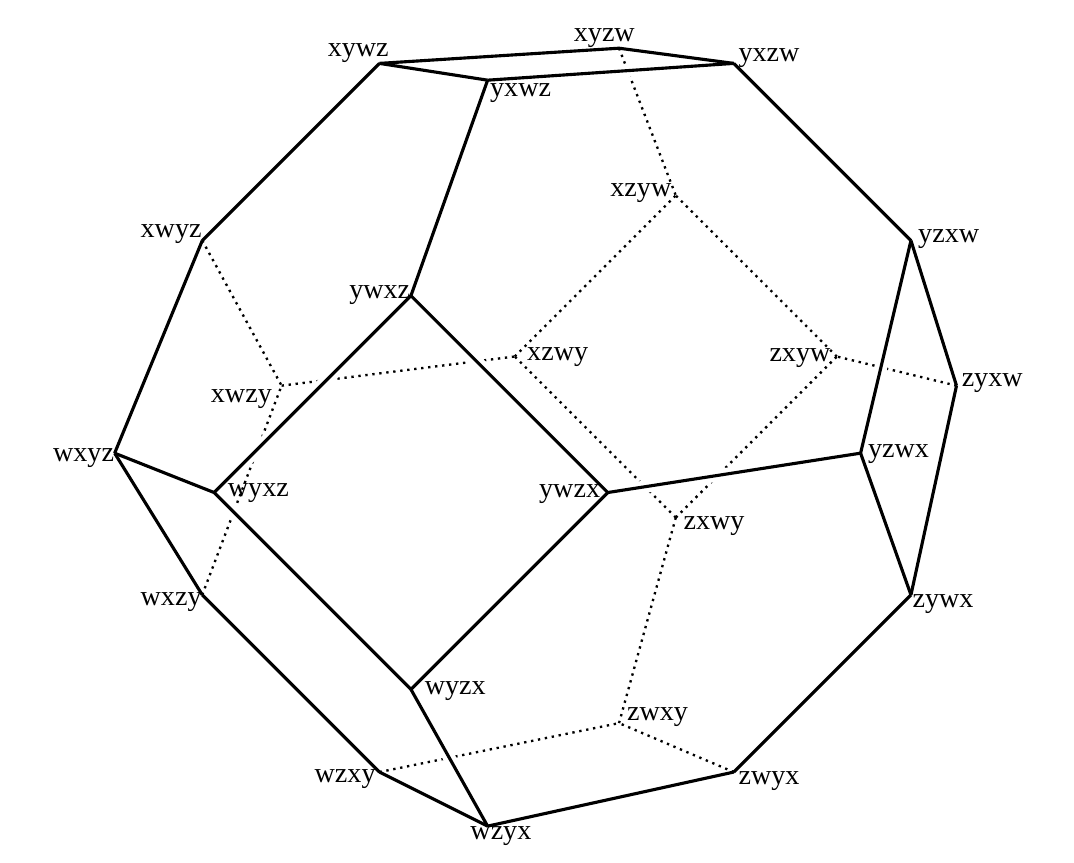}
	\caption{The permutohedron for four alternatives.}
	\label{fig:permutohedron-4}
\end{figure}

Consider any square, say $(wyxz,ywxz,ywzx,wyzx,wyxz)$ in Figure \ref{fig:permutohedron-4}, and any edge of the square, say $e=\{wyxz,ywxz\}$. In this square, $f=\{wyzx,ywzx\}$ is the opposite edge of $e$. Notice that since the preferences $wyxz$ and $ywxz$ differ only in the ranking of $w$ and $y$, any menu $M$ that distinguishes the edge $e$ must contain both $w$ and $y$, that is, $\{w,y\} \subseteq M$. Therefore, the menu $M$ must also distinguish the edge $f$. Using the same logic, we can say that any menu that distinguishes the edge $f$ must also distinguish the edge $e$. This phenomena holds for all squares of the permutohedron. Consequently, for any square and any menu $M$, opposite edges of the square are either both distinguished by $M$ or both are not distinguished by $M$. Now, we define square consistency.

\begin{definition}
	A set of edges $\widehat{\mathcal{E}}$ satisfies
	\textit{square consistency} if for every square of the permutohedron, opposite edges of the square are either both in
	$\widehat{\mathcal{E}}$ or both outside $\widehat{\mathcal{E}}$.
	
\end{definition}

Next, we present a consistency condition on hexagons, which will be used to characterize identifiability. In the previous subsection, for the case of three alternatives, the permutohedron was a single hexagon. For four (or more) alternatives, the permutohedron consists of multiple hexagons, see Figure \ref{fig:permutohedron-4}. We start with a similar analysis as in the previous subsection (for the case of three alternatives), however, the consistency condition on hexagons needed to characterize identifiability is \textit{not} the same when there are four (or more) alternatives. 

Consider the hexagon $(wxyz, wyxz, wyzx, wzyx, wzxy, wxzy, wxyz)$ of the permutohedron in Figure \ref{fig:permutohedron-4}. Here, the changing triple is $\{x,y,z\}$, that is, $x,y$ and $z$ are the only three alternatives whose ranks vary along the vertices of this hexagon. This hexagon is shown in Figure \ref{fig6:permutohedron-3-4} with the edge labels.

\begin{figure}[h]
	\centering
	\begin{tikzpicture}[
		scale=2.2,
		every node/.style={font=\small},
		pref/.style={circle, draw, inner sep=2pt}
		]
		
		% vertices
		\node[pref] (xyz) at (90:1)   {$wxyz$};
		\node[pref] (xzy) at (30:1)   {$wxzy$};
		\node[pref] (zxy) at (-30:1)  {$wzxy$};
		\node[pref] (zyx) at (-90:1)  {$wzyx$};
		\node[pref] (yzx) at (-150:1) {$wyzx$};
		\node[pref] (yxz) at (150:1)  {$wyxz$};
		
		% edges with labels
		\draw (xyz) -- 
		node[midway, above right, fill=white, inner sep=1pt] {$\{y,z\}$}
		node[midway, fill=white, inner sep=0.8pt, xshift=-8pt, yshift=-6pt] {$\boldsymbol{\ell}$}
		(xzy);
		
		\draw (xzy) -- 
		node[midway, right, fill=white, inner sep=1pt] {$\{x,z\}$}
		node[midway, fill=white, inner sep=0.8pt, xshift=-8pt] {$\boldsymbol{h}$}
		(zxy);
		
		\draw (zxy) -- 
		node[midway, below right, fill=white, inner sep=1pt] {$\{x,y\}$}
		node[midway, fill=white, inner sep=0.8pt, xshift=-8pt, yshift=6pt] {$\boldsymbol{\ell}$}
		(zyx);
		
		\draw (zyx) -- 
		node[midway, below left, fill=white, inner sep=1pt] {$\{y,z\}$}
		node[midway, fill=white, inner sep=0.8pt, xshift=8pt, yshift=6pt] {$\boldsymbol{h}$}
		(yzx);
		
		\draw (yzx) -- 
		node[midway, left, fill=white, inner sep=1pt] {$\{x,z\}$}
		node[midway, fill=white, inner sep=0.8pt, xshift=8pt] {$\boldsymbol{\ell}$}
		(yxz);
		
		\draw (yxz) -- 
		node[midway, above left, fill=white, inner sep=1pt] {$\{x,y\}$}
		node[midway, fill=white, inner sep=0.8pt, xshift=8pt, yshift=-6pt] {$\boldsymbol{h}$}
		(xyz);
		
	\end{tikzpicture}
	\caption{A hexagon of the permutohedron for four alternatives where the changing triple is ranked at bottom three positions.}
	\label{fig6:permutohedron-3-4}
\end{figure}

Consider any lower ($\boldsymbol{\ell}$)-edge in Figure \ref{fig6:permutohedron-3-4}, say $\{wxyz,wxzy\}$. Since the preferences $wxyz$ and $wxzy$ differ only in the ranking of alternatives $y$ and $z$ at bottom two positions, for any menu $M$ that distinguishes this edge, it must be the case that $M=\{y,z\}$. Therefore, lower edges of this particular hexagon force binary menus, in this case the edge $\{wxyz,wxzy\}$ forces the menu $M=\{y,z\}$. Further, notice that the menu $M=\{y,z\}$ must also distinguish the opposite higher ($\boldsymbol{h}$)-edge $\{wyzx,wzyx\}$. Hence, similar to the case of three alternatives, it is still true that if a lower edge $e$ of a hexagon is distinguished by a menu $M$, then $M$ also distinguishes the opposite higher edge $e^h$. 

It is worth noting that there are multiple hexagons when there are four (or more) alternatives. \textit{Only} the lower edges of a hexagon whose changing triple occupies the bottom three positions force binary menus.\footnote{Equivalently, only those edges that swap the bottom two positions force binary menus.} For example, consider the hexagon $(xywz, yxwz, ywxz, wyxz, wxyz, xwyz, xywz)$, where the changing triple $\{x,y,w\}$ is \textit{not} ranked at bottom three positions, as shown in Figure \ref{fig7:permutohedron-3-4}. Pick any lower edge of this hexagon, say, $\{ywxz, yxwz\}$. The preferences $ywxz$ and $yxwz$ differ only in the ranking of alternatives $w$ and $x$ but these two alternatives are not ranked at bottom two positions and $z$ is ranked at the bottom position at both preferences. Therefore, only two menus, namely $\{x,w\}$ and $\{x,w,z\}$, distinguish this lower edge. Hence, the lower edge $\{ywxz, yxwz\}$ cannot force a binary menu unlike the case of three alternatives. Moreover, notice that the two menus, namely $\{x,w\}$ and $\{x,w,z\}$, also distinguish the opposite higher edge $\{wxyz, xwyz\}$.

\begin{figure}[h]
	\centering
	\begin{tikzpicture}[
		scale=2.2,
		every node/.style={font=\small},
		pref/.style={circle, draw, inner sep=2pt}
		]
		
		% vertices
		\node[pref] (xyz) at (90:1)   {$xywz$};
		\node[pref] (xzy) at (30:1)   {$xwyz$};
		\node[pref] (zxy) at (-30:1)  {$wxyz$};
		\node[pref] (zyx) at (-90:1)  {$wyxz$};
		\node[pref] (yzx) at (-150:1) {$ywxz$};
		\node[pref] (yxz) at (150:1)  {$yxwz$};
		
		% edges with labels
		\draw (xyz) -- 
		node[midway, above right, fill=white, inner sep=1pt] {$\{y,w\}$}
		node[midway, fill=white, inner sep=0.8pt, xshift=-8pt, yshift=-6pt] {$\boldsymbol{\ell}$}
		(xzy);
		
		\draw (xzy) -- 
		node[midway, right, fill=white, inner sep=1pt] {$\{x,w\}$}
		node[midway, fill=white, inner sep=0.8pt, xshift=-8pt] {$\boldsymbol{h}$}
		(zxy);
		
		\draw (zxy) -- 
		node[midway, below right, fill=white, inner sep=1pt] {$\{x,y\}$}
		node[midway, fill=white, inner sep=0.8pt, xshift=-8pt, yshift=6pt] {$\boldsymbol{\ell}$}
		(zyx);
		
		\draw (zyx) -- 
		node[midway, below left, fill=white, inner sep=1pt] {$\{y,w\}$}
		node[midway, fill=white, inner sep=0.8pt, xshift=8pt, yshift=6pt] {$\boldsymbol{h}$}
		(yzx);
		
		\draw (yzx) -- 
		node[midway, left, fill=white, inner sep=1pt] {$\{x,w\}$}
		node[midway, fill=white, inner sep=0.8pt, xshift=8pt] {$\boldsymbol{\ell}$}
		(yxz);
		
		\draw (yxz) -- 
		node[midway, above left, fill=white, inner sep=1pt] {$\{x,y\}$}
		node[midway, fill=white, inner sep=0.8pt, xshift=8pt, yshift=-6pt] {$\boldsymbol{h}$}
		(xyz);
		
	\end{tikzpicture}
	\caption{A hexagon of the permutohedron for four alternatives where the changing triple is \textit{not} ranked at bottom three positions.}
	\label{fig7:permutohedron-3-4}
\end{figure}

Let $\widehat{\mathcal E}$ be an identifiable set of edges. Therefore, from the discussion in the preceding two paragraphs it follows that for a hexagon, if a lower edge $e$ of the hexagon belongs to $\widehat{\mathcal E}$, then $e^h \in \widehat{\mathcal E}$, the same conclusion we got for the case of three alternatives. Next, we analyze what happens if a hexagon is $\widehat{\mathcal{E}}$-active, that is, some lower edge $e \notin \widehat{\mathcal E}$ but $e^h \in \widehat{\mathcal E}$. Following a similar analysis to the case of three alternatives, we get the same conclusion, except for the uniqueness of menu. We illustrate these via an example below.  

Consider the hexagon depicted in Figure \ref{fig7:permutohedron-3-4} and pick any lower edge, say, $e=\{ywxz,yxwz\}$. Therefore, $e^h=\{wxyz,xwyz\}$. Suppose $e \notin \widehat{\mathcal E}$ but $e^h \in \widehat{\mathcal E}$. Since $\widehat{\mathcal E}$ is identifiable, there exists an experiment $\mathcal M$ that identifies $\widehat{\mathcal E}$. Therefore, there exists a menu $M \in \mathcal M$ such that $M$ distinguishes $e^h$ but does not distinguish $e$. Since the preferences $wxyz$ and $xwyz$ differ only in the ranking of alternatives $x$ and $w$, the fact that menu $M$ distinguishes $e^h=\{wxyz,xwyz\}$ implies $\{x,w\} \subseteq M$. If $M$ does not contain the alternative $y$, that is, $M=\{x,w\}$ or $M=\{x,w,z\}$, then $M$ also distinguishes the edge $e$, which will be a contradiction to the fact that $e \notin \widehat{\mathcal E}$. Therefore, it must be the case that $y \in M$, that is, $M=\{x,w,y\}$ or $M=\{x,w,y,z\}$. Both these menus distinguish $e^h$ but does not distinguish $e$. It is worth noting here that, unlike the case of three alternatives where the exclusion of a lower edge but inclusion of the opposite higher edge forces a unique menu, in this case it does \textit{not} force a unique menu. Since, either $M=\{x,w,y\}$ or $M=\{x,w,y,z\}$, the menu $M$ distinguishes \textit{all} higher edges of the hexagon. Hence, all higher edges of the hexagon must belong to $\widehat{\mathcal E}$.

Therefore, by the discussion in the previous paragraph, it follows that for identifiability, all higher edges of an $\widehat{\mathcal{E}}$-active hexagon must be included in $\widehat{\mathcal{E}}$, the same conclusion we got for the case of three alternatives. Summarizing, identifiability implies that every hexagon of the permutohedron must satisfy hexagon consistency (as defined in the previous subsection). Given that hexagon consistency characterizes identifiability for the case of three alternatives, at this point one might be tempted to claim that square consistency together with hexagon consistency (as defined in previous subsection) satisfied by every hexagon of the permutohedron characterizes identifiability. However, this is \textit{false}. Below, we provide an example (Example \ref{3 not 4}) of a set of edges of the permutohedron that is \textit{not} identifiable but satisfies square consistency and hexagon consistency (as defined in the previous subsection) for every hexagon of the permutohedron.

\begin{example}\label{3 not 4}
Recall that each edge of the permutohedron is labelled with the pair of alternatives reversed between the two adjacent preferences at the endpoints of the edge. Consider the set of edges $\widetilde{\mathcal{E}}$ which contains
\begin{enumerate}[(i)]
	
	\item none of the edges with the label $\{w,z\}$,
	\item the two edges with label $\{x,y\}$, where the swap between $x$ and $y$ occurs at top two positions, and 
	\item all edges with labels $\{w,x\}, \{w,y\}, \{x,z\}$ and $\{y,z\}$.
\end{enumerate}

\begin{figure}[h]
	\centering
	\includegraphics[width=.85\textwidth]{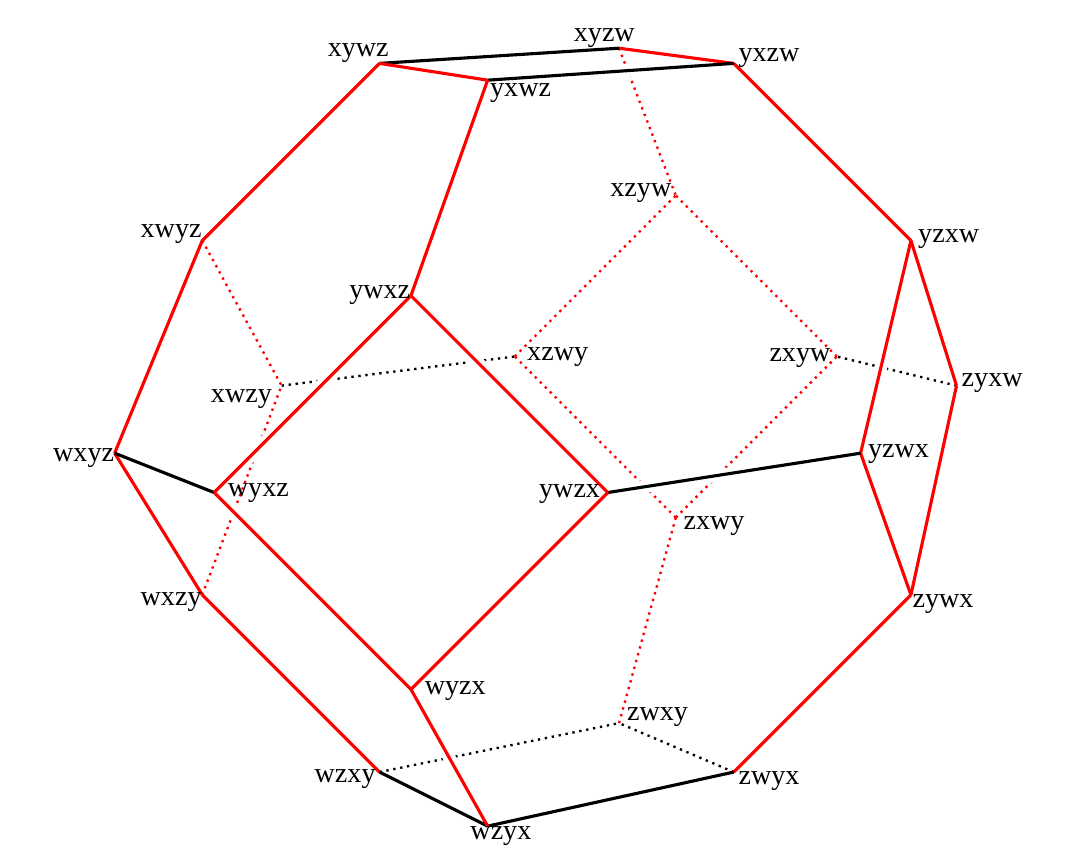}
	\caption{Edges in $\widetilde{\mathcal{E}}$ are depicted in red colour.}
	\label{fig:permutohedron-4_red}
\end{figure}

The edges of the permutohedron that belong to $\widetilde{\mathcal{E}}$ are shown in red colour in Figure \ref{fig:permutohedron-4_red}. It is not hard to check (visually) that $\widetilde{\mathcal{E}}$ satisfies square consistency and hexagon consistency (as defined in the previous subsection) for every hexagon of the permutohedron. Now, we argue that $\widetilde{\mathcal{E}}$ is \textit{not} identifiable. 

\begin{figure}[h]
	\centering
	\begin{tikzpicture}[
		scale=2,
		every node/.style={font=\small},
		pref/.style={circle, draw, inner sep=2pt},
		rededge/.style={draw=red, line width=0.9pt},
		blackedge/.style={draw=black, line width=0.9pt},
		redlabel/.style={midway, fill=white, inner sep=1pt, text=red},
		blacklabel/.style={midway, fill=white, inner sep=1pt, text=black},
		redlevel/.style={midway, fill=white, inner sep=0.8pt,
			text=red},
		blacklevel/.style={midway, fill=white, inner sep=0.8pt,
			text=black}
		]
		
		%==================================================
		% (a) Changing triple {x,y,z} at bottom three positions
		%==================================================
		\begin{scope}[shift={(-1,0)}]
			
			% vertices
			\node[pref] (axyz) at (90:1)   {$wxyz$};
			\node[pref] (axzy) at (30:1)   {$wxzy$};
			\node[pref] (azxy) at (-30:1)  {$wzxy$};
			\node[pref] (azyx) at (-90:1)  {$wzyx$};
			\node[pref] (ayzx) at (-150:1) {$wyzx$};
			\node[pref] (ayxz) at (150:1)  {$wyxz$};
			
			% edges with labels
			\draw[rededge] (axyz) --
			node[redlabel, above right] {$\{y,z\}$}
			node[redlevel, xshift=-8pt, yshift=-6pt]
			{$\boldsymbol{\ell}$}
			(axzy);
			
			\draw[rededge] (axzy) --
			node[redlabel, right] {$\{x,z\}$}
			node[redlevel, xshift=-8pt]
			{$\boldsymbol{h}$}
			(azxy);
			
			\draw[blackedge] (azxy) --
			node[blacklabel, below right] {$\{x,y\}$}
			node[blacklevel, xshift=-8pt, yshift=6pt]
			{$\boldsymbol{\ell}$}
			(azyx);
			
			\draw[rededge] (azyx) --
			node[redlabel, below left] {$\{y,z\}$}
			node[redlevel, xshift=8pt, yshift=6pt]
			{$\boldsymbol{h}$}
			(ayzx);
			
			\draw[rededge] (ayzx) --
			node[redlabel, left] {$\{x,z\}$}
			node[redlevel, xshift=8pt]
			{$\boldsymbol{\ell}$}
			(ayxz);
			
			\draw[blackedge] (ayxz) --
			node[blacklabel, above left] {$\{x,y\}$}
			node[blacklevel, xshift=8pt, yshift=-6pt]
			{$\boldsymbol{h}$}
			(axyz);
			
			\node at (0,-1.48) {\textnormal{(a) $H_1$}};
			
		\end{scope}
		
		%==================================================
		% (b) Changing triple {x,y,z} at top three positions
		%==================================================
		\begin{scope}[shift={(2,0)}]
			
			% vertices
			\node[pref] (bxyz) at (90:1)   {$xyzw$};
			\node[pref] (bxzy) at (30:1)   {$xzyw$};
			\node[pref] (bzxy) at (-30:1)  {$zxyw$};
			\node[pref] (bzyx) at (-90:1)  {$zyxw$};
			\node[pref] (byzx) at (-150:1) {$yzxw$};
			\node[pref] (byxz) at (150:1)  {$yxzw$};
			
			% edges with labels
			\draw[rededge] (bxyz) --
			node[redlabel, above right] {$\{y,z\}$}
			node[redlevel, xshift=-8pt, yshift=-6pt]
			{$\boldsymbol{\ell}$}
			(bxzy);
			
			\draw[rededge] (bxzy) --
			node[redlabel, right] {$\{x,z\}$}
			node[redlevel, xshift=-8pt]
			{$\boldsymbol{h}$}
			(bzxy);
			
			\draw[blackedge] (bzxy) --
			node[blacklabel, below right] {$\{x,y\}$}
			node[blacklevel, xshift=-8pt, yshift=6pt]
			{$\boldsymbol{\ell}$}
			(bzyx);
			
			\draw[rededge] (bzyx) --
			node[redlabel, below left] {$\{y,z\}$}
			node[redlevel, xshift=8pt, yshift=6pt]
			{$\boldsymbol{h}$}
			(byzx);
			
			\draw[rededge] (byzx) --
			node[redlabel, left] {$\{x,z\}$}
			node[redlevel, xshift=8pt]
			{$\boldsymbol{\ell}$}
			(byxz);
			
			\draw[rededge] (byxz) --
			node[redlabel, above left] {$\{x,y\}$}
			node[redlevel, xshift=8pt, yshift=-6pt]
			{$\boldsymbol{h}$}
			(bxyz);
			
			\node at (0,-1.48) {\textnormal{(b) $H_2$}};
			
		\end{scope}
		
		%==================================================
		% (c) Changing triple {w,x,z} at top three positions
		%==================================================
		\begin{scope}[shift={(5,0)}]
			
			% vertices
			\node[pref] (cwxz) at (90:1)   {$wxzy$};
			\node[pref] (cwzx) at (30:1)   {$wzxy$};
			\node[pref] (czwx) at (-30:1)  {$zwxy$};
			\node[pref] (czxw) at (-90:1)  {$zxwy$};
			\node[pref] (cxzw) at (-150:1) {$xzwy$};
			\node[pref] (cxwz) at (150:1)  {$xwzy$};
			
			% edges with labels
			\draw[rededge] (cwxz) --
			node[redlabel, above right] {$\{x,z\}$}
			node[redlevel, xshift=-8pt, yshift=-6pt]
			{$\boldsymbol{\ell}$}
			(cwzx);
			
			\draw[blackedge] (cwzx) --
			node[blacklabel, right] {$\{w,z\}$}
			node[blacklevel, xshift=-8pt]
			{$\boldsymbol{h}$}
			(czwx);
			
			\draw[rededge] (czwx) --
			node[redlabel, below right] {$\{w,x\}$}
			node[redlevel, xshift=-8pt, yshift=6pt]
			{$\boldsymbol{\ell}$}
			(czxw);
			
			\draw[rededge] (czxw) --
			node[redlabel, below left] {$\{x,z\}$}
			node[redlevel, xshift=8pt, yshift=6pt]
			{$\boldsymbol{h}$}
			(cxzw);
			
			\draw[blackedge] (cxzw) --
			node[blacklabel, left] {$\{w,z\}$}
			node[blacklevel, xshift=8pt]
			{$\boldsymbol{\ell}$}
			(cxwz);
			
			\draw[rededge] (cxwz) --
			node[redlabel, above left] {$\{w,x\}$}
			node[redlevel, xshift=8pt, yshift=-6pt]
			{$\boldsymbol{h}$}
			(cwxz);
			
			\node at (0,-1.48) {\textnormal{(c) $H_3$}};
			
		\end{scope}
		
	\end{tikzpicture}
	\caption{Three hexagonal faces with the inherited red colour edges.}
	\label{fig:three-hexagons-face-consistency}
\end{figure}

Suppose, to the contrary, that $\widetilde{\mathcal{E}}$ is identifiable. Consider the three hexagons $H_1, H_2$ and $H_3$ of the permutohedron, as shown in Figure \ref{fig:three-hexagons-face-consistency}. Note that only the red coloured edges of these hexagons belong to $\widetilde{\mathcal{E}}$. Since $\widetilde{\mathcal{E}}$ is identifiable, there exists an experiment $\mathcal M$ that identifies $\widetilde{\mathcal{E}}$. Therefore, for the hexagon $H_2$, there exists a menu $M \in \mathcal M$ such that $M$ does not distinguish the non-red lower ($\boldsymbol{\ell}$) edge $e=\{zyxw,zxyw\}$ but $M$ distinguishes the opposite higher ($\boldsymbol{h}$) edge $e^h=\{yxzw,xyzw\}$, as shown in Figure \ref{fig:three-hexagons-face-consistency} (b). The preferences at the endpoints of the edge $e^h$ differ only in the ranking of $x$ and $y$. Therefore, since $M$ distinguishes the edge $e^h$, it must be the case that $M$ contains both $x$ and $y$, that is, $\{x,y\} \subseteq M$. However, at both the endpoints of the edge $e$, the alternative $z$ is ranked above both $x$ and $y$. Consequently, if $z\notin M$, the menu $M$ would also distinguish $e$, contrary to $e\notin\widetilde{\mathcal E}$. Thus $z\in M$, and hence, either $M=\{x,y,z\}$ or $M=\{w,x,y,z\}$.

The possibility that $M=\{x,y,z\}$ is ruled out by the hexagon $H_1$ (Figure \ref{fig:three-hexagons-face-consistency} (a)). To see this, consider the non-red higher ($\boldsymbol{h}$) edge $f=\{wyxz, wxyz\}$ of $H_1$. If $M=\{x,y,z\}$, then $M$ distinguishes the edge $f$. Since $M \in \mathcal M$ and $f \notin \widetilde{\mathcal E}$, this leads to a contradiction to the fact that $\mathcal M$ identifies $\widetilde{\mathcal E}$. Therefore, it must be the case that $M=\{w,x,y,z\}$.

Finally, consider the hexagon $H_3$ (Figure \ref{fig:three-hexagons-face-consistency} (c)). Since $M=\{w,x,y,z\}$, the menu $M$ distinguishes all higher ($\boldsymbol{h}$) edges of $H_3$, in particular, the menu $M$ distinguishes the non-red higher edge $\{wzxy, zwxy\}$ of $H_3$. This contradicts to the fact that $\{wzxy, zwxy\} \notin \widetilde{\mathcal E}$. Therefore, $\widetilde{\mathcal E}$ is not identifiable. This completes the example. \hfill$\square$

\end{example}

The preceding example (Example \ref{3 not 4}) reveals the additional restriction required to characterize identifiability for four (or more) alternatives. It also foreshadows why uniqueness of the irreducible experiment, established for three alternatives in the previous subsection, continues to hold with four or more alternatives. Before making these results precise, we introduce some notations. 

For any hexagon $H$ of the permutohedron, let $C(H)$ denote its changing triple. For example, in Figure \ref{fig:three-hexagons-face-consistency}, $$C(H_1)=\{x,y,z\}=C(H_2) \mbox{ and } C(H_3)=\{w,x,z\}.$$ Also, let $A(H)$ denote the associated set of the hexagon $H$, which consists of all three alternatives in $C(H)$ together with all alternatives that are ranked below them at every vertex (preference) of the hexagon $H$. For example, in Figure \ref{fig:three-hexagons-face-consistency}, $$A(H_1)=\{x,y,z\} \mbox{ and } A(H_2)=\{x,y,z,w\}=A(H_3).$$ 

\begin{definition}\label{Primitive}
	Let $H$ be a hexagon and $\widehat{\mathcal E}$ be a set of edges. We say that $H$ is $\widehat{\mathcal E}$-primitive if 
	\begin{enumerate}[(i)]
		\item $H$ is $\widehat{\mathcal E}$-active, and
		\item there is no $\widehat{\mathcal E}$-active hexagon $H'$ such that $C(H')=C(H)$ and $A(H') \subsetneq A(H)$.\footnote{The notation ``$\subsetneq$" denotes a strict subset.}
	\end{enumerate}
\end{definition}

Recall that a hexagon $H$ is $\widehat{\mathcal E}$-active if for some lower edge $e$ of $H$, $e \notin \widehat{\mathcal{E}}$ but $e^h \in \widehat{\mathcal{E}}$, that is, a lower edge of $H$ is excluded but its opposite higher edge of $H$ is included. For a fixed changing triple, condition (ii) of Definition \ref{Primitive} orders active hexagons according to inclusion of their associated sets. A primitive hexagon is an inclusion-minimal occurrence of the active pattern: no active hexagon with the same changing triple has a strictly smaller associated set.

Going back to Example \ref{3 not 4}, in Figure \ref{fig:three-hexagons-face-consistency}, the hexagon $H_1$ is not $\widetilde{\mathcal E}$-active. Indeed, its excluded lower edge labelled $\{x,y\}$ has an excluded opposite higher edge, whereas all other edges are included. By contrast, $H_2$ is $\widetilde{\mathcal E}$-active: its lower edge labelled $\{x,y\}$ is excluded, whereas its opposite higher edge is included. Moreover, $H_2$ is $\widetilde{\mathcal E}$-primitive. This is because the only hexagon of the permutohedron with the same changing triple and a strictly smaller associated set is $H_1$, which is not $\widetilde{\mathcal E}$-active.

The arguments in Example \ref{3 not 4} explain the significance of primitive hexagons. The active pattern in $H_2$ requires an identifying experiment to contain a menu that distinguishes the included higher edge without distinguishing its excluded opposite lower edge. Such a menu must contain the changing triple $\{x,y,z\}$. The menu $\{x,y,z\}$ is ruled out by $H_1$, and therefore the menu is forced to be
$$
A(H_2)=\{w,x,y,z\}.
$$
Equivalently, if a proper subset of $A(H_2)$ could implement the required distinction in $H_2$, then the same active pattern would already occur in a hexagon with changing triple $C(H_2)$ and a strictly smaller associated set, contradicting the primitivity of $H_2$.

The implications of this forced menu ($A(H_2)$) are not only confined to $H_2$. Once the menu $A(H_2)=\{w,x,y,z\}$ is forced to be included in every identifying experiment, it distinguishes every higher edge of any hexagon $H$ satisfying $A(H)=A(H_2)$.
In particular, $A(H_3)=A(H_2)=\{w,x,y,z\}$,
so every higher edge of $H_3$ must belong to $\widetilde{\mathcal E}$. The failure of this closure requirement is precisely why $\widetilde{\mathcal E}$ is not identifiable.

Primitive hexagons therefore play two roles in the characterization. First, each primitive hexagon generates a restriction across all hexagons having the same associated set. Second, it forces the associated set itself to appear as a menu in \textit{every} identifying experiment. We show in the proof of Theorem \ref{Characterization_of_Identifiability} that primitive hexagons are precisely the source of the non-binary menus in the unique irreducible experiment: every primitive hexagon $H$ forces the menu $A(H)$, and every indispensable menu of size at least three arises in this way. Together with the forced binary menus (corresponding to lower edges of a hexagon whose changing triple are ranked at bottom three positions, as discussed earlier), these associated sets determine the unique irreducible experiment.

Now, we strengthen the hexagon consistency condition that we introduced for three alternatives in the previous subsection.

\begin{definition}\label{Hex_consistency_4}
	A set of edges $\widehat{\mathcal{E}}$ satisfies
	\textit{hexagon consistency} if the following conditions hold:
	
	\begin{enumerate}[(i)]
		\item For every hexagon $H$ of the permutohedron and every lower edge $e$ of
		$H, e\in \widehat{\mathcal E}
		\implies
		e^h\in \widehat{\mathcal E}$.
		
		\item For every $\widehat{\mathcal{E}}$-primitive hexagon
		$H$ and every hexagon $K$ such that $A(K)=A(H)$, every higher edge of $K$ belongs to $\widehat{\mathcal{E}}$.
	\end{enumerate}
\end{definition}

Condition (ii) of Definition \ref{Hex_consistency_4} captures the additional restriction revealed by Example~\ref{3 not 4}. A primitive hexagon forces the menu given by its associated set, and this menu necessarily distinguishes every higher edge of every hexagon having that same associated set. In particular, by taking $K=H$ in condition~(ii), every higher edge of an $\widehat{\mathcal E}$-primitive hexagon $H$ must belong to $\widehat{\mathcal E}$.

When there are three alternatives, the permutohedron consists of a single hexagon. Consequently, Definition~\ref{Hex_consistency_4} reduces to Definition~\ref{Hex_consistency_3}.

We are now ready to state the characterization condition. The square and hexagonal faces of the permutohedron considered above are precisely the two-dimensional faces of the permutohedron; see \citet{morton2009convex}. This motivates the following terminology. 

\begin{definition}
	A set of edges $\widehat{\mathcal{E}}$ of the permutohedron satisfies
\textit{face consistency} if $\widehat{\mathcal{E}}$ satisfies both square consistency
and hexagon consistency.
\end{definition}

Our main result says that face consistency completely characterizes identifiability. Moreover, whenever an edge set is identifiable, each of its identifying experiment has a common indispensable core: the unique irreducible identifying experiment.

\begin{proposition}\label{4_pro}
 Let $\widehat{\mathcal{E}}$ be a set of edges of the permutohedron. Then, $\widehat{\mathcal{E}}$ is identifiable if and only if $\widehat{\mathcal{E}}$ satisfies face consistency. Moreover, for every identifiable $\widehat{\mathcal{E}}$, there exists a unique irreducible experiment that identifies it.
 \end{proposition}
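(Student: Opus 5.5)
The plan rests on one elementary description of when a menu distinguishes an edge. For an edge $e$ labelled $\{a,b\}$, let $B(e)$ be the set of alternatives ranked below the pair $\{a,b\}$ at both endpoints, and put $D(e)=\{a,b\}\cup B(e)$. Then a menu $M$ distinguishes $e$ if and only if $\{a,b\}\subseteq M\subseteq D(e)$. All three parts of the proposition (necessity, sufficiency, uniqueness) will be read off from this fact.

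\emph{Necessity.} This is essentially already done in Subsection~\ref{4 alt}, and I would only formalize those arguments.
\begin{itemize}
\item Opposite edges of a square have the same label and the same set $D$, which gives square consistency.
\item In a hexagon with changing triple $\{a,b,c\}$, let $e$ be a lower edge labelled $\{a,b\}$. Then $D(e)\subseteq D(e^h)$, which gives hexagon condition (i).
\item Let $H$ be $\widehat{\mathcal E}$-primitive, with a lower edge $e\notin\widehat{\mathcal E}$ and $e^h\in\widehat{\mathcal E}$. Any menu distinguishing $e^h$ but not $e$ contains $C(H)$ and is contained in $A(H)$.
\item If such a menu $M$ were a proper subset of $A(H)$, then the hexagon with changing triple $C(H)$, with all of $M\setminus C(H)$ directly below the triple and $A(H)\setminus M$ moved above it, would be active with associated set $M$. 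This contradicts primitivity.
\item Hence $A(H)$ belongs to every identifying experiment. Since $A(H)$ distinguishes every higher edge of every hexagon $K$ with $A(K)=A(H)$, hexagon condition (ii) follows.
\end{itemize}
These forced-menu arguments, together with the fact that an edge in $\widehat{\mathcal E}$ with $B(e)=\emptyset$ forces the binary menu $\{a,b\}$, show that every identifying experiment contains
\[
\mathcal M^*=\bigl\{\{a,b\}: \text{some } e\in\widehat{\mathcal E} \text{ labelled } \{a,b\} \text{ has } B(e)=\emptyset\bigr\}\cup\bigl\{A(H): H \text{ is } \widehat{\mathcal E}\text{-primitive}\bigr\}.
\]

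\emph{Sufficiency.} Assume face consistency; I would show $\widehat{\mathcal E}_{\mathcal M^*}=\widehat{\mathcal E}$ in two steps.

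First, a structural lemma: whether $e\in\widehat{\mathcal E}$ depends only on the pair $(\{a,b\},B(e))$. Two edges with the same label and the same below set differ only by reorderings within the alternatives above the pair and within those below it. Such a reordering is a sequence of adjacent swaps disjoint from the swapped pair, so the two edges are joined by a chain of squares, and square consistency transports membership along the chain.

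Second, membership is upward monotone in $B$. Indeed, hexagon condition (i) says that $(\{a,b\},B)\in\widehat{\mathcal E}$ implies $(\{a,b\},B\cup\{c\})\in\widehat{\mathcal E}$ for any $c$ outside $B\cup\{a,b\}$. To realize this, take the hexagon with triple $\{a,b,c\}$, with $c$ directly above the pair in a lower edge with below set $B$; its opposite higher edge has below set $B\cup\{c\}$.

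\emph{$\widehat{\mathcal E}_{\mathcal M^*}\subseteq\widehat{\mathcal E}$.}
\begin{itemize}
\item A binary menu $\{a,b\}\in\mathcal M^*$ distinguishes exactly the edges labelled $\{a,b\}$. These all lie in $\widehat{\mathcal E}$ by monotonicity starting from $B=\emptyset$.
\item A menu $A=A(H)$ with $H$ primitive distinguishes exactly the edges with label $\{a,b\}\subseteq A$ and $B\supseteq A\setminus\{a,b\}$. By monotonicity it suffices to treat $B=A\setminus\{a,b\}$.
\item Such an edge is a higher edge of a hexagon $K$ with triple $\{a,b,c\}$ for any $c\in A\setminus\{a,b\}$ placed directly below the pair. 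Here $A(K)=A$, so hexagon condition (ii) puts the edge in $\widehat{\mathcal E}$.
\end{itemize}

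\emph{$\widehat{\mathcal E}\subseteq\widehat{\mathcal E}_{\mathcal M^*}$.} I would argue by induction on $|B(e)|$.
\begin{itemize}
\item If $B(e)=\emptyset$, the binary menu $\{a,b\}$ is in $\mathcal M^*$ and distinguishes $e$.
\item Otherwise, let $c$ be the alternative directly below the pair. Then $e$ is the higher edge $f^h$ of the hexagon with triple $\{a,b,c\}$, whose lower edge $f$ has $D(f)=D(e)\setminus\{c\}$.
\item If $f\in\widehat{\mathcal E}$, induction gives a menu in $\mathcal M^*$ distinguishing $f$. That menu lies inside $D(f)\subseteq D(e)$ and contains $\{a,b\}$, so it distinguishes $e$.
\item If $f\notin\widehat{\mathcal E}$, the hexagon is active. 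Choose an active hexagon $H'$ with the same triple and inclusion-minimal associated set contained in $D(e)$; it is primitive. Then $\{a,b\}\subseteq A(H')\subseteq D(e)$, so $A(H')\in\mathcal M^*$ distinguishes $e$.
\end{itemize}

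\emph{Uniqueness.} Every identifying experiment contains $\mathcal M^*$, and $\mathcal M^*$ itself identifies $\widehat{\mathcal E}$. So any menu outside $\mathcal M^*$ is redundant, and no menu of $\mathcal M^*$ can be redundant, since removing it breaks identification of some forced distinction. Hence $\mathcal M^*$ is the unique irreducible identifying experiment.

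I expect the main obstacle to be the structural lemma together with its use in the inclusion $\widehat{\mathcal E}_{\mathcal M^*}\subseteq\widehat{\mathcal E}$. One must check carefully that the square chains exist, and that hexagon condition (ii) covers edges whose label is not contained in $C(H)$. The latter works only because condition (ii) quantifies over all hexagons $K$ with $A(K)=A(H)$, regardless of their changing triple.
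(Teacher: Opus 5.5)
Your proposal is correct and follows essentially the same route as the paper. The shared steps are:
\begin{itemize}
\item the criterion that $M$ distinguishes $e$ iff $\lambda(e)\subseteq M\subseteq A(e)$ (your $D(e)$);
\item a monotonicity lemma built from square chains plus hexagon condition (i);
\item the same explicit $\mathcal M^*$;
\item induction on $|A(e)|$ for $\widehat{\mathcal E}\subseteq\widehat{\mathcal E}_{\mathcal M^*}$;
\item the containment $\mathcal M^*\subseteq\mathcal M$ for uniqueness.
\end{itemize}

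The one step you assert without checking is that the hexagon $H_M$, with triple $C(H)$ and associated set $M\subsetneq A(H)$, is active. Let $c$ be the element of $C(H)$ outside the label of $e$, and let $e'$ be the lower edge of $H_M$ with the same label as $e$. You must show $e'\notin\widehat{\mathcal E}$. This holds because any menu of the experiment distinguishing $e'$ lies in $M\setminus\{c\}\subseteq A(H)\setminus\{c\}=D(e)$, so it would also distinguish $e\notin\widehat{\mathcal E}$. This is exactly the paper's argument.
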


The uniqueness conclusion in Proposition \ref{4_pro} has a stronger implication. Suppose $\widehat{\mathcal E}$ is identifiable and let $\mathcal M^*$ denote the unique irreducible experiment identifying $\widehat{\mathcal E}$. Starting from any experiment $\mathcal M$ that identifies $\widehat{\mathcal E}$, one may successively remove redundant menus until an irreducible identifying subexperiment remains. By uniqueness, the resulting subexperiment must be $\mathcal M^*$. Hence, for every experiment $\mathcal M$ that identifies $\widehat{\mathcal E}$ it must be the case that
$$
\mathcal M^*\subseteq\mathcal M.
$$
Thus, $\mathcal M^*$ is the common indispensable core of all identifying experiments and we \textit{explicitly} determine $\mathcal M^*$. Every menu in $\mathcal M^*$ must appear in every experiment identifying $\widehat{\mathcal E}$. Moreover, if an identifying experiment contains menus in addition to those in $\mathcal M^*$, each such menu is redundant because removing it leaves an experiment that still contains $\mathcal M^*$ and hence still identifies $\widehat{\mathcal E}$.

In particular, $\mathcal M^*$ is the unique identifying experiment
with the minimum number of menus. Since $\mathcal M^*$ is contained
in every identifying experiment, it is also optimal under any cost
criterion for which adding menus cannot reduce the cost of an
experiment. This monotonicity requirement is the only restriction
placed on the cost criterion. In particular, the cost need not depend
only on the number of menus. Different menus may have different
costs: for example, one binary menu may be more costly to administer
than another binary menu. More generally, cost may depend on the
identities and sizes of the menus, or on interactions among them.
We impose neither additivity nor symmetry across menus.

The main result therefore solves both an identification problem and an experimental design problem. It characterizes which sets of distinctions can be generated through menu choices and, whenever identification is possible, determines explicitly the unique collection of menus that every identifying experiment must contain.

\section{The Main Result}\label{Main}
In this section we formally characterize identifiable sets of edges of the permutohedron. First, we define some terminologies that we will need.

For any preference $P$ and any $k \in \{1,\ldots,n\}$, let us denote by $r_k(P)$ the $k$-th ranked alternative according to $P$, that is, $r_k(P)=x$ if $|\{y \in X : yPx\}|=k-1$. In such scenarios, where $r_k(P)=x$, we say that $k$ is the rank of $x$ at the preference $P$. For $x,z \in X$, we say that $z$ is ranked above $x$ at the preference $P$ if $zPx$, equivalently if the rank of $z$ at $P$ is smaller than the rank of $x$ at $P$. Similarly, we say that $z$ is ranked below $x$ at $P$ if $xPz$, equivalently if the rank of $x$ at $P$ is smaller than the rank of $z$ at $P$.

For $i \in \{1,...,n-1\}$, let $P^i$ denote the preference obtained from $P$ by interchanging the $i$-th and $(i+1)$-th ranked alternative and keeping the ranks of all other alternatives the same, that is, $r_i(P^i)=r_{i+1}(P)$,  $r_{i+1}(P^i)=r_{i}(P)$, and  $r_k(P^i)=r_k(P)$ for all $k \in \{1,2,\ldots,n\} \setminus \{i,i+1\}$. Similarly, for $i,j \in \{1,...,n-1\}$, define $P^{ij}:= (P^i)^j$ as the preference obtained from $P^i$ by interchanging the $j$-th and $(j+1)$-th ranked alternative. Also, for $i, j, l \in \{1,...,n-1\}$, we similarly define $P^{ijl}$ in a similar manner.

A \textit{square} of the permutohedron is a cycle of the form $(P,P^i,P^{ij},P^j,P)$, where $|i-j|>1$. Notice that each consecutive pair of preferences are adjacent. Intuitively, squares arise from pairs of adjacent swaps that commute. Figure \ref{fig:square} illustrates a square $(P,P^i,P^{ij},P^j,P)$ of the permutohedron,  where $j-i>1$. Let $\mathcal{S}$ be the set of all squares of the permutohedron.

\begin{figure}[h]
	\centering
	\begin{tikzpicture}[
		x=4.4cm,
		y=3.2cm,
		every node/.style={font=\small},
		every path/.style={draw, line width=0.9pt}
		]
		
		% vertices
		\node (P)    at (-1, 0.75) {$P=\cdots x_i x_{i+1}\cdots x_j x_{j+1}\cdots$};
		\node (Pj)   at ( 1, 0.75) {$P^j=\cdots x_i x_{i+1}\cdots x_{j+1} x_j\cdots$};
		\node (Pij)  at ( 1,-0.75) {$P^{ij}=P^{ji}=\cdots x_{i+1} x_i\cdots x_{j+1} x_j\cdots$};
		\node (Pi)   at (-1,-0.75) {$P^i=\cdots x_{i+1} x_i\cdots x_j x_{j+1}\cdots$};
		
		% edges with labels
		\draw (P) -- 
		node[midway, above, fill=white, inner sep=1pt] {$\{x_j,x_{j+1}\}$}
		(Pj);
		
		\draw (Pj) -- 
		node[midway, right, fill=white, inner sep=1pt] {$\{x_i,x_{i+1}\}$}
		(Pij);
		
		\draw (Pij) -- 
		node[midway, below, fill=white, inner sep=1pt] {$\{x_j,x_{j+1}\}$}
		(Pi);
		
		\draw (Pi) -- 
		node[midway, left, fill=white, inner sep=1pt] {$\{x_i,x_{i+1}\}$}
		(P);
		
	\end{tikzpicture}
	\caption{A square of the permutohedron where $X=\{x_1,\ldots,x_i,x_{i+1},\dots,x_j, x_{j+1},\ldots,x_n\}$.}
	\label{fig:square}
\end{figure}

\begin{definition}\label{square_consistency}
	A set of edges $\widehat{\mathcal{E}}\subseteq \mathcal{E}$ satisfies
	\textit{square consistency} if, for every square
	$(P,P^i,P^{ij},P^j,P) \in \mathcal{S}$,
	\begin{enumerate}[(i)]
		\item $\{P,P^i\}\in \widehat{\mathcal{E}}$ \textit{if and only if} $\{P^{ij},P^j\}\in \widehat{\mathcal{E}}$, and
		\item $\{P,P^j\}\in \widehat{\mathcal{E}}$ \textit{if and only if} $\{P^i,P^{ij}\}\in \widehat{\mathcal{E}}$.
	\end{enumerate}
	
\end{definition}

Equivalently, square consistency says that opposite edges of every square are either both in
$\widehat{\mathcal{E}}$ or both outside $\widehat{\mathcal{E}}$.

A \textit{hexagon} of the permutohedron is a cycle of the form $(P,P^i,P^{i{(i+1)}},P^{i{(i+1)}i},P^{{(i+1)}i},P^{(i+1)},P)$, where $1 \leq i \leq n-2$, as shown in Figure \ref{fig:hexagon}. Note that each consecutive pair of preferences are adjacent. So, intuitively hexagons arise from adjacent swaps on three consecutive ranks, where the changing triple occupies ranks $i, i+1$ and $i+2$. We label each edge in the hexagon as a lower $(\boldsymbol{\ell})$-edge or a higher $(\boldsymbol{h})$-edge as illustrated in Figure \ref{fig:hexagon}. An edge is labelled $\boldsymbol{h}$ if the corresponding adjacent swap takes place at ranks $i$ and $i+1$. Similarly, an edge is labelled $\boldsymbol{\ell}$ if the corresponding adjacent swap takes place at ranks $i+1$ and $i+2$. As illustrated in Figure \ref{fig:hexagon}, every lower ($\boldsymbol{\ell}$)-edge has a unique higher ($\boldsymbol{h}$)-edge on the opposite side of the hexagon with the same reversed pair of alternatives. We call this higher edge as the opposite higher edge of the lower edge.
More formally, if a lower edge $e$ is labelled by the pair $\{u,v\}$, then its opposite higher edge, denoted by $e^h$, is the unique higher edge in the same hexagon that is also labelled by $\{u,v\}$. Let $\mathcal{H}$ be the set of all hexagons of the permutohedron.

\begin{figure}[h]
	\centering
	\begin{tikzpicture}[
		x=4.2cm,
		y=3.0cm,
		every node/.style={font=\small},
		every path/.style={draw, line width=0.9pt}
		]
		
		% vertices
		\node (P)        at (0,1)         {$P=\cdots x_i x_{i+1} x_{i+2}\cdots$};
		\node (Pip1)     at (1.15,0.58)   {$P^{({i+1})}=\cdots x_i x_{i+2} x_{i+1}\cdots$};
		\node (Pip1i)    at (1.15,-0.58)  {$P^{({i+1})i}=\cdots x_{i+2} x_i x_{i+1}\cdots$};
		\node (Piip1i)   at (0,-1.16)     {$P^{i({i+1})i}=\cdots x_{i+2} x_{i+1} x_i\cdots$};
		\node (Piip1)    at (-1.15,-0.58) {$P^{i({i+1})}=\cdots x_{i+1} x_{i+2} x_i\cdots$};
		\node (Pi)       at (-1.15,0.58)  {$P^i=\cdots x_{i+1} x_i x_{i+2}\cdots$};
		
		% edges with labels
		\draw (P) -- 
		node[midway, above right, fill=white, inner sep=1pt] {$\{x_{i+1},x_{i+2}\}$}
		node[midway, fill=white, inner sep=0.8pt, xshift=-8pt, yshift=-6pt] {$\boldsymbol{\ell}$}
		(Pip1);
		
		\draw (Pip1) -- 
		node[midway, right, fill=white, inner sep=1pt] {$\{x_i,x_{i+2}\}$}
		node[midway, fill=white, inner sep=0.8pt, xshift=-8pt] {$\boldsymbol{h}$}
		(Pip1i);
		
		\draw (Pip1i) -- 
		node[midway, below right, fill=white, inner sep=1pt] {$\{x_i,x_{i+1}\}$}
		node[midway, fill=white, inner sep=0.8pt, xshift=-8pt, yshift=6pt] {$\boldsymbol{\ell}$}
		(Piip1i);
		
		\draw (Piip1i) -- 
		node[midway, below left, xshift=-10pt, fill=white, inner sep=1pt] {$\{x_{i+1},x_{i+2}\}$}
		node[midway, fill=white, inner sep=0.8pt, xshift=8pt, yshift=6pt] {$\boldsymbol{h}$}
		(Piip1);
		
		\draw (Piip1) -- 
		node[midway, left, fill=white, inner sep=1pt] {$\{x_i,x_{i+2}\}$}
		node[midway, fill=white, inner sep=0.8pt, xshift=8pt] {$\boldsymbol{\ell}$}
		(Pi);
		
		\draw (Pi) -- 
		node[midway, above left, fill=white, inner sep=1pt] {$\{x_i,x_{i+1}\}$}
		node[midway, fill=white, inner sep=0.8pt, xshift=8pt, yshift=-6pt] {$\boldsymbol{h}$}
		(P);
		
	\end{tikzpicture}
	\caption{A hexagon of the permutohedron where $X=\{x_1,\ldots,x_i,x_{i+1},x_{i+2},\ldots,x_n\}$.}
	\label{fig:hexagon}
\end{figure}

For a hexagon
$$
H=\bigl(P,P^i,P^{i(i+1)},P^{i(i+1)i},P^{(i+1)i},P^{(i+1)},P\bigr)
\in \mathcal H,
$$
let
$
C(H)=\{r_i(P),r_{i+1}(P),r_{i+2}(P)\}
$
be the changing triple, that is, the set of three alternatives whose ranks vary along the
hexagon $H$. Let
$$
A(H)=\{r_i(P),r_{i+1}(P),\ldots,r_n(P)\}.
$$
Thus $A(H)$ is the associated set of the hexagon $H$, consisting of the three alternatives in
$C(H)$ together with all alternatives ranked below them at the preference $P$. For example, consider the hexagon $H$ as shown in Figure \ref{fig:hexagon}. Here, $C(H)=\{x_i,x_{i+1},x_{i+2}\}$ and $A(H)=\{x_i,x_{i+1},x_{i+2},x_{i+3},\ldots,x_n\}$. Notice that each alternative in $A(H) \setminus C(H)$ is ranked below every alternative in $C(H)$ at every vertex (preference) in the hexagon $H$. Moreover,
$C(H)$ and $A(H)$ are well-defined for the hexagon $H$, independent
of the particular vertex (preference) of $H$ used to represent it.

\begin{definition}
	Fix $\widehat{\mathcal E}\subseteq \mathcal E$. A hexagon $H\in\mathcal H$
	is called $\widehat{\mathcal E}$-\textit{active} if there exists a lower
	edge $e$ of $H$ such that $e\notin \widehat{\mathcal E}$ and $e^h\in \widehat{\mathcal E}$.
\end{definition}

\begin{definition}
	An $\widehat{\mathcal E}$-active hexagon $H$ is called
	$\widehat{\mathcal E}$-\textit{primitive} if there does not exist an
	$\widehat{\mathcal E}$-active hexagon $H'\in\mathcal H$ such that
	\begin{enumerate}[(i)]
		\item $C(H')=C(H)$, and
		\item $A(H')\subsetneq A(H)$\footnote{By $\subsetneq$, we mean strict subset.}.
	\end{enumerate}
	
\end{definition}

	When $\widehat{\mathcal E}$ is clear from the context, we simply say active
and primitive hexagons.

	\begin{definition} \label{hexagon_consistency}
	A set of edges $\widehat{\mathcal{E}}\subseteq \mathcal{E}$ satisfies
	\textit{hexagon consistency} if the following conditions hold:
	
	\begin{enumerate}[(i)]
		\item For every hexagon $H \in \mathcal{H}$ and every lower edge $e$ of
		$H, e\in \widehat{\mathcal E}
		\implies
		e^h\in \widehat{\mathcal E}$.
		
		\item For every $\widehat{\mathcal{E}}$-primitive hexagon
		$H\in\mathcal{H}$ and every hexagon $K\in\mathcal{H}$ such that $A(K)=A(H)$, every higher edge of $K$ belongs to $\widehat{\mathcal{E}}$.\footnote{Condition (ii), applied with $K=H$, requires every higher edge of
			each $\widehat{\mathcal E}$-primitive hexagon to belong to
			$\widehat{\mathcal E}$.}
	\end{enumerate}
\end{definition}

Condition $(i)$ is a monotonicity condition within each hexagon: if a lower
edge is included, then the corresponding higher edge with the same label
must also be included. Condition $(ii)$ is the global closure condition:
when an active pattern is primitive, it forces all higher edges in every
hexagon with the same associated set to be included.

The two-dimensional faces of the permutohedron are precisely squares and
hexagons.\footnote{This is a standard fact about the permutohedron; see, e.g.,
	\citet{morton2009convex}, who note that the
	two-dimensional faces of the permutohedron are squares and hexagons.} Thus, the following condition imposes consistency within and across these faces.

\begin{definition}\label{face_consistency}
	A set of edges $\widehat{\mathcal{E}}\subseteq \mathcal{E}$ satisfies
	\textit{face consistency} if it satisfies both square consistency
	and hexagon consistency.
\end{definition}

\begin{theorem}\label{Characterization_of_Identifiability}
	Let $\widehat{\mathcal{E}}\subseteq \mathcal{E}$ be a set of edges. Then,
	$\widehat{\mathcal{E}}$ is identifiable if and only if $\widehat{\mathcal{E}}$
	satisfies face consistency. Moreover, if $\widehat{\mathcal{E}}$ is
	identifiable, then there exists a unique irreducible experiment that identifies it.
\end{theorem}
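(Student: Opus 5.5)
The plan rests on one elementary description of when a menu distinguishes an edge. For an edge $e=\{P,P^k\}$ swapping the pair $\{u,v\}=\{r_k(P),r_{k+1}(P)\}$, let $L(e)=\{r_k(P),\ldots,r_n(P)\}$. This is the pair together with everything ranked below it, and it is the same at both endpoints. Then a menu $M$ distinguishes $e$ if and only if
$$\{u,v\}\subseteq M\subseteq L(e).$$
All the consistency conditions are read off from how $L(\cdot)$ behaves on faces:
\begin{itemize}
\item Opposite edges of a square have equal $L$.
\item In a hexagon $H$ with changing triple $C(H)=\{u,v,w\}$, where $e$ is the lower edge swapping $\{u,v\}$ (so $w$ is the third member of the triple), we have $L(e^h)=A(H)$ and $L(e)=A(H)\setminus\{w\}$.
\end{itemize}

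\emph{Necessity.} Let $\mathcal M$ identify $\widehat{\mathcal E}$.
\begin{itemize}
\item Square consistency and hexagon condition (i) follow at once from the description above, since any menu contained in $L(e)$ is contained in $L(e^h)$.
\item For (ii), let $H$ be primitive, with $e\notin\widehat{\mathcal E}$ and $e^h\in\widehat{\mathcal E}$. Some $M\in\mathcal M$ distinguishes $e^h$ but not $e$, which forces $C(H)\subseteq M\subseteq A(H)$.
\item Suppose $M\subsetneq A(H)$. Build a hexagon $H'$ from a vertex of $H$ by lifting the alternatives of $A(H)\setminus M$, in their current order, to just above the triple. Then $C(H')=C(H)$ and $A(H')=M$.
\item Its lower edge $e'$ with pair $\{u,v\}$ has $L(e')=M\setminus\{w\}\subseteq L(e)$. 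Hence any menu distinguishing $e'$ also distinguishes $e$, so $e'\notin\widehat{\mathcal E}$.
\item Meanwhile $M$ distinguishes $e'^h$ because $L(e'^h)=M$, so $e'^h\in\widehat{\mathcal E}$. Thus $H'$ is active with a strictly smaller associated set, contradicting primitivity.
\item Therefore $A(H)\in\mathcal M$ for every identifying $\mathcal M$. This menu distinguishes every higher edge of any $K$ with $A(K)=A(H)$, since such edges have $L=A(H)$. This gives (ii).
\item The same argument shows that every binary menu $\{u,v\}$ with $L(e)=\{u,v\}$ for some $e\in\widehat{\mathcal E}$ lies in every identifying experiment.
\end{itemize}

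\emph{Sufficiency.} Define $\mathcal M^*$ to consist of those forced binary menus together with $A(H)$ for every primitive $H$. The key step, which I expect to be the main obstacle, is a monotonicity lemma.

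\emph{Lemma.} If $e\in\widehat{\mathcal E}$ and $f$ swaps the same pair with $L(f)\supseteq L(e)$, then $f\in\widehat{\mathcal E}$.

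I would first fix the order of the alternatives outside the pair by square moves, which preserve membership. I would then transport alternatives from above the pair to below it one at a time. Passing an alternative $w$ across non-pair positions uses squares. Passing $w$ from directly above the pair to directly below it goes through a hexagon with triple $\{u,v,w\}$ and moves from its higher edge to its lower edge. This is the wrong direction for condition (i), so the transport should instead be organised as moving alternatives downward out of $f$'s configuration. Equivalently, one shows $f\in\widehat{\mathcal E}$ from $e$ by the contrapositive use of (i) on the hexagon relating them. Checking that orientation carefully is the delicate point.

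Given the lemma, $\widehat{\mathcal E}_{\mathcal M^*}\subseteq\widehat{\mathcal E}$ follows by cases on the menu:
\begin{itemize}
\item A binary menu $\{u,v\}$ distinguishes exactly the edges $f$ with pair $\{u,v\}$. All of these satisfy $L(f)\supseteq\{u,v\}=L(e)$, so the lemma applies.
\item A menu $A(H)$ distinguishes edges $f$ with pair $\subseteq A(H)\subseteq L(f)$. Choose a hexagon $K$ with $A(K)=A(H)$, obtained by placing $A(H)$ at the bottom with $f$'s pair in its top two positions. Its higher edge with that pair lies in $\widehat{\mathcal E}$ by (ii), and the lemma then yields $f$.
\end{itemize}

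For $\widehat{\mathcal E}\subseteq\widehat{\mathcal E}_{\mathcal M^*}$, induct on $|L(e)|$.
\begin{itemize}
\item If $|L(e)|=2$, the binary menu $L(e)$ is in $\mathcal M^*$ and distinguishes $e$.
\item Otherwise let $w$ be the alternative just below the pair, and let $H$ be the hexagon in which $e$ is the higher edge, so $A(H)=L(e)$. Let $e'$ be its opposite lower edge.
\item If $e'\in\widehat{\mathcal E}$, induction gives a menu $N\in\mathcal M^*$ with $N\subseteq L(e')\subseteq L(e)$, and $N$ distinguishes $e$.
\item If $e'\notin\widehat{\mathcal E}$, then $H$ is active. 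Choose a primitive $H_0$ with $C(H_0)=C(H)$ and $A(H_0)\subseteq A(H)$. Its menu satisfies $\{u,v\}\subseteq A(H_0)\subseteq L(e)$, so it distinguishes $e$.
\end{itemize}

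\emph{Uniqueness.} Necessity showed that every identifying experiment contains $\mathcal M^*$, and $\mathcal M^*$ itself identifies $\widehat{\mathcal E}$. Any menu of an identifying experiment outside $\mathcal M^*$ is therefore redundant. Hence $\mathcal M^*$ is the unique irreducible identifying experiment.
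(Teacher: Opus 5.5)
Your architecture is the paper's. Your $L(e)$ is the paper's $A(e)$, and your menu--edge criterion is Lemma~\ref{lem:menu-edge}. Necessity forces $M=A(H)$ for a primitive $H$ by building a hexagon with changing triple $C(H)$ and associated set $M$, exactly as in Claim~\ref{M=A(H)}. Your $\mathcal M^*$, the induction on $|L(e)|$, and uniqueness via $\mathcal M^*\subseteq\mathcal M$ for every identifying $\mathcal M$ also coincide with the paper, and all of these steps are sound.

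The one genuine gap is the monotonicity lemma. You rightly call it load-bearing, since it is what gives $\widehat{\mathcal E}_{\mathcal M^*}\subseteq\widehat{\mathcal E}$, but you leave it unproved. You assert that carrying $w$ from directly above the swapped pair to directly below it moves from the \emph{higher} edge of the hexagon on $\{u,v,w\}$ to its \emph{lower} edge. You call this the wrong direction for condition (i) and propose a ``contrapositive use'' of (i) instead. That remedy cannot fix the problem you describe. A contrapositive is logically equivalent to the implication it comes from, so it cannot reverse the direction in which (i) transports membership. If the orientation really were reversed, no reformulation of (i) would derive membership of the larger-$L$ edge from membership of the smaller one. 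As written, the lemma is open.

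The diagnosis itself is mistaken, as your own formulas $L(e)=A(H)\setminus\{w\}$ and $L(e^h)=A(H)$ already show. Suppose $w,u,v$ occupy positions $k-1,k,k+1$. The edge swapping $u,v$ in positions $k,k+1$ is a \emph{lower} edge of the hexagon on these three positions. Its opposite higher edge swaps $u,v$ in positions $k-1,k$, with $w$ in position $k+1$. So condition (i) gives $e\in\widehat{\mathcal E}\Rightarrow e^h\in\widehat{\mathcal E}$ with $L(e^h)=L(e)\cup\{w\}$, which is exactly the direction you need.

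The lemma then follows directly, as in the paper's Lemma~\ref{lem:monotonicity}:
\begin{enumerate}[(a)]
\item Swaps in positions not touching the pair lie in squares with the edge. Square consistency therefore lets you permute freely the alternatives above the pair and those below it without changing membership (Claim~\ref{A(e)=A(f)}).
\item Use this to place the next $z\in L(f)\setminus L(e)$ directly above the pair, then apply (i) to carry it below.
\item Repeat for every element of $L(f)\setminus L(e)$, and finish with square moves to reach $f$.
\end{enumerate}
With this correction your proof is complete and is essentially the paper's.
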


The proof of this theorem is relegated to Appendix \ref{A}. Theorem \ref{Characterization_of_Identifiability} establishes that face consistency completely characterizes identifiability. Moreover, if $\widehat{\mathcal{E}}$ is identifiable, then there exists a unique irreducible experiment (say, $\mathcal M^*$) that identifies $\widehat{\mathcal{E}}$. The proof in Appendix \ref{A} provides an explicit construction:
$$
\begin{aligned}
	\mathcal M^*={}&
	\big\{\{r_{n-1}(P),r_n(P)\}:P\in\mathcal P \mbox{ and }\;
	\{P,P^{n-1}\}\in\widehat{\mathcal E}\big\}\\
	&{}\cup
	\big\{A(H):H\in\mathcal H
	\text{ is }\widehat{\mathcal E}\text{-primitive}\big\}.
\end{aligned}
$$
The first collection contains the binary menus forced by included edges that swap the bottom two positions. The second collection contains the associated sets of primitive hexagons. As already illustrated in Section \ref{Ex_illustrate_results}, $\mathcal M^*$ is the common indispensable core of all experiments that identify $\widehat{\mathcal{E}}$. Thus, every menu in $\mathcal M^*$ must appear in every experiment that identifies $\widehat{\mathcal{E}}$, while any additional menu present in an identifying experiment is redundant for identification of $\widehat{\mathcal{E}}$.

\section{A Characterization of Identifiable Type Spaces} \label{Char_Id_Typespace}
Having characterized identifiable set of edges of the permutohedron in the previous section, we now move on to characterize identifiable type spaces. As noted in Remark \ref{equiv}, characterizing identifiable
type spaces is equivalent to characterizing identifiable sets of edges of the
permutohedron, we now explicitly provide a characterization of identifiable type spaces. 

First, we introduce the notion of connected type spaces. We say that a type space $T$ is connected,  if, for every type
$t\in T$ and every pair $P,P'\in t$, there is a path in the
permutohedron between $P$ and $P'$ that is contained in $t$.\footnote{By a path in the permutohedron, we mean a path whose consecutive vertices are adjacent preferences, that is, a path in $G_{\emptyset}$.}

For a type space $T$, define its boundary in the permutohedron by $\partial T=\bigl\{\{P,P'\}\in \mathcal E : t(P)\neq t(P')\bigr\}$, that is, $\partial T$ is the set of edges of the permutohedron whose endpoints belong to
different types. Verbally speaking, $\partial T$ is the set of edges of the permutohedron that connects two distinct types of $T$. Since every edge of the permutohedron has the form
$\{P,P^i\}$ for some $P\in\mathcal P$ and some
$i\in\{1,\ldots,n-1\}$, we equivalently define 
	$$\partial T=\bigl\{\{P,P^i\} : P\in\mathcal P, i \in \{1,\ldots,n-1\} \mbox{ and } t(P)\neq t(P^i)\bigr\}.$$

We now spell out what it means for the boundary $\partial T$ to satisfy
face consistency. This is only a restatement of square and hexagon
consistency in terms of edges that belong to the boundary $\partial T$.	

First, $\partial T$ satisfies \textit{square consistency} if for every
$P\in\mathcal P$ and every pair $i,j\in\{1,\ldots,n-1\}$ with
$|i-j|>1, \{P,P^i\}\in\partial T$ \textit{if and only if} $\{P^j,P^{ji}\}\in\partial T$.
Thus two opposite edges of a square either both belong to the boundary or neither does. Note that this is a literal translation of Definition \ref{square_consistency}.

Next, we define \textit{hexagon consistency} for $\partial T$. We say that $\partial T$ satisfies \textit{hexagon consistency} if the following conditions hold:
\begin{enumerate}[(i)]
	\item For every
	$P\in\mathcal P$ and every $i\in\{2,\ldots,n-1\}$,
	$\{P,P^i\}\in\partial T
	\Longrightarrow
	\{P^{(i-1)i},P^{(i-1)i(i-1)}\}\in\partial T$.
	\item For
	every $\partial T$-primitive hexagon $H \in \mathcal H$ and every hexagon $K \in \mathcal H$ with $A(H)=A(K)$, every higher edge of $K$ belongs to $\partial T$.
\end{enumerate}
Note that Condition (i) and Condition (ii) of hexagon consistency stated here are literal translations of Condition (i) and Condition (ii) of hexagon consistency provided in Definition \ref{hexagon_consistency}. The fact that Condition (i) is a literal translation of Condition (i) in Definition \ref{hexagon_consistency} is because $\{P^{(i-1)i},P^{(i-1)i(i-1)}\}$ is the opposite higher edge in a hexagon containing
$\{P,P^i\}$ as a lower edge.

Finally, we say that $\partial T$ satisfies \emph{face consistency} if it satisfies both square consistency and hexagon consistency. It is worth noting that face consistency defined here is a literal translation of the face consistency definition provided in Definition \ref{face_consistency}.

At this point, one might be tempted to conclude that a type space $T$ is identifiable if and only if $\partial T$ satisfies face consistency. This conclusion is \textit{false}. Recall that, for a set of edges $\widehat {\mathcal E}$, by $\mathcal R_{\widehat {\mathcal E}}$ we denote the type space obtained from the permutohedron by deleting all edges in $\widehat {\mathcal E}$. By Theorem \ref{Characterization_of_Identifiability} and Observation \ref{Obs_1}, it follows that given a type space $T$, the type space $\mathcal R_{\partial T}$ is identifiable if and only if $\partial T$ satisfies face consistency. However, the issue is that $T$ need not coincide with $\mathcal R_{\partial T}$.

For example, let $X=\{x,y,z\}$ and consider the following two type spaces:
\begin{align*}
	\widehat T
	&=\{\{xyz\},\{xzy\},\{zxy\},\{zyx\},\{yzx\},\{yxz\}\}, \mbox{ and }\\
	\widebar T
	&=\{\{xyz,zyx\},\{xzy\},\{zxy\},\{yzx\},\{yxz\}\}.
\end{align*} Notice that $\partial {\widehat T}=\partial {\widebar T}=\mathcal E$, that is, the boundary of both $\widehat T$ and $\widebar T$ includes all the edges of the permutohedron. Clearly $\mathcal E$ satisfies face consistency and hence Theorem \ref{Characterization_of_Identifiability} implies that $\mathcal E$ is identifiable.\footnote{The unique irreducible experiment identifying $\mathcal E$ is $\{\{x,y\},\{y,z\},\{x,z\}\}$.} Also, $\mathcal R_{\mathcal E}=\{\{xyz\},\{xzy\},\{zxy\},\{zyx\},\{yzx\},\{yxz\}\}=\widehat T$. Therefore, $\widehat T$ is identifiable.

On the other hand, $\widebar T \neq \mathcal R_{\mathcal E}=\mathcal R_{\partial {\widebar T}}$. Equivalently, deleting edges in the boundary of $\widebar T$ does not recover the
types of $\widebar T$. In particular, $\widebar T$ is not connected, and hence not identifiable. The example shows that face consistency of the boundary of a type space must be combined with a
recoverability requirement. Connectedness supplies such a requirement, that is, if $T$ is
connected, then $T=\mathcal R_{\partial T}$. 

Moreover, note that Observation \ref{Obs_1} already establishes that every identifiable type space must be connected.\footnote{It is worth noting that the converse, that is, connected type spaces must be identifiable, is \textit{not} true. For example, consider the connected type space $\mathcal R_{\widehat{\mathcal E}}=\{\{xzy,zxy\},\{xyz,yxz,yzx,zyx\}\}$ in Example \ref{Ex_1} which is not identifiable.} Therefore, when translating our characterization condition of identifiable set of  edges back into type spaces, it is
natural to restrict our attention to connected type spaces. Hence, for connected type spaces, Theorem
\ref{Characterization_of_Identifiability} gives the following characterization as a corollary.

\begin{corollary}\label{corollary}
	Let $T$ be a connected type space. Then $T$ is identifiable if and only if
	$\partial T$ satisfies face consistency. Moreover, if $T$ is identifiable,
	then there exists a unique irreducible experiment that identifies it.
\end{corollary}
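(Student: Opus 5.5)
The plan is to reduce Corollary~\ref{corollary} to Theorem~\ref{Characterization_of_Identifiability} through Observation~\ref{Obs_1} and Remark~\ref{equiv}. The only new ingredient is a recovery lemma: if $T$ is connected, then $T=\mathcal R_{\partial T}$.

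\textbf{Recovery lemma.} I will prove this first.
\begin{enumerate}[(i)]
	\item Suppose $P,P'$ lie in the same type $t$. Connectedness gives a path in the permutohedron from $P$ to $P'$ that stays inside $t$. No edge of this path joins different types, so no edge lies in $\partial T$. Hence the path lies in $G_{\partial T}$, and $P,P'$ are in the same component.
	\item Conversely, any path in $G_{\partial T}$ uses only edges whose endpoints share a type. By induction along the path, all its vertices lie in one type.
\end{enumerate}
So the types of $T$ are exactly the connected components of $G_{\partial T}$.

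\textbf{Sufficiency.} Let $T$ be connected and suppose $\partial T$ satisfies face consistency. Theorem~\ref{Characterization_of_Identifiability} gives an experiment $\mathcal M$ with $\widehat{\mathcal E}_{\mathcal M}=\partial T$. Observation~\ref{Obs_1} and the recovery lemma then give
$$\mathcal R_{\mathcal M}=\mathcal R_{\widehat{\mathcal E}_{\mathcal M}}=\mathcal R_{\partial T}=T,$$
so $T$ is identifiable.

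\textbf{Necessity.} Suppose $\mathcal M$ identifies $T$. By Remark~\ref{equiv}, $\widehat{\mathcal E}_{\mathcal M}$ is exactly the set of edges whose endpoints lie in different types of $T$, that is, $\partial T$. The check is as follows. Menu choices are constant on the types of $\mathcal R_{\mathcal M}=T$. So an edge inside a type is not distinguished. An edge between two types is distinguished, since its endpoints must differ on some menu. Thus $\partial T$ is identifiable, and Theorem~\ref{Characterization_of_Identifiability} shows it satisfies face consistency.

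\textbf{Uniqueness.} I will show that, for connected $T$, an experiment identifies $T$ if and only if it identifies the edge set $\partial T$.
\begin{itemize}
	\item If $\mathcal M$ identifies $T$, then $\widehat{\mathcal E}_{\mathcal M}=\partial T$, as in the necessity step.
	\item If $\widehat{\mathcal E}_{\mathcal M}=\partial T$, then $\mathcal R_{\mathcal M}=\mathcal R_{\partial T}=T$ by Observation~\ref{Obs_1} and the recovery lemma.
\end{itemize}
Redundancy of a menu $M$ means $\mathcal R_{\mathcal M}=\mathcal R_{\mathcal M\setminus\{M\}}$. By Observation~\ref{Obs_1} and Remark~\ref{equiv}, this is equivalent to $\widehat{\mathcal E}_{\mathcal M}=\widehat{\mathcal E}_{\mathcal M\setminus\{M\}}$. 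So irreducibility means the same thing in both settings. The unique irreducible experiment identifying $\partial T$, namely $\mathcal M^*$ from Theorem~\ref{Characterization_of_Identifiability}, is therefore the unique irreducible experiment identifying $T$.

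\textbf{Expected difficulty.} The main obstacle is only bookkeeping. The care needed is in the necessity step: I must use Observation~\ref{Obs_1} to identify $\widehat{\mathcal E}_{\mathcal M}$ with $\partial T$ exactly, and not merely show it is contained in $\partial T$.
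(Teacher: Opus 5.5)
Your proposal is correct and follows the paper's proof essentially step for step. It uses the same direct identification $\widehat{\mathcal E}_{\mathcal M}=\partial T$ for necessity, the same connectedness argument showing $\mathcal R_{\partial T}=T$ for sufficiency, and the same transfer of uniqueness from Theorem~\ref{Characterization_of_Identifiability}; your explicit check that redundancy defined via induced type spaces coincides with redundancy defined via distinguished edge sets is something the paper uses only implicitly, so it is a welcome addition.
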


The proof of this corollary is relegated to Appendix \ref{C}.

\section{Applications}\label{sec:applications-acts-shortlists}

A planner conducting a product survey may wish to learn a subject's
favorite product or the set of products she prefers to an outside option,
without eliciting her complete preference ranking. The planner can ask
the subject to provide the desired information and offer a reward
contingent on her answer. This raises an incentive question: which
informational objectives admit a reward rule that makes truthful reporting
optimal while requesting only the information of interest? We study this
question in the framework of elicitation with acts developed by
\citet{appACH2021}. Corollary~\ref{corollary} provides a geometric
characterization of the type spaces that can be elicited in this
framework. We then apply this characterization to unranked shortlists of
acceptable products. Under some natural requirements, only two
policies are feasible: requesting the subject's favorite product whenever
one is acceptable, or requesting her entire acceptable set.

\subsection{A geometric characterization of type spaces elicitable with acts}
\label{subsec:applications-acts}

Fix a type space $T$ representing the planner's informational objective.
When the subject's preference is $P$, the planner wants to learn
$t(P)$ without obtaining additional information that distinguishes
preferences within that type. The subject is therefore asked to report
an element $t\in T$. For example, if types are defined by the subject's
favorite product, reporting a type amounts to naming that product. The
planner commits in advance to a reward rule under which the subject
receives an alternative in $X$ as a function of her report and the outcome
of a randomization device. All alternatives in $X$ are assumed available
as rewards.

Following \citet{appACH2021}, we describe random rewards by acts, without specifying probabilities for the states of the randomization device, thus allowing
the subject to have her own beliefs about the randomization device. An
act specifies the alternative received at each possible state of that
device. Formally, a \emph{$T$-mechanism with acts} is a pair $(\Omega,f)$,
where $\Omega$ is a finite nonempty state space and
$f:T\to X^\Omega$. A report $t$ is rewarded with the act $f(t)$:
if state $\omega\in\Omega$ occurs, the subject receives
$f(t)(\omega)\in X$. The subject submits her report before the state is
realized, and the randomization device is unaffected by her report.

\begin{definition} \label{IC_acts}

Following \citet{appACH2021} (Definitions 5 and 6), a $T$-mechanism with
acts $(\Omega,f)$ is \emph{incentive compatible} if, for every
$P\in\mathcal P$ and every $t'\in T\setminus\{t(P)\}$,
$$
	\begin{gathered}
		f(t(P))(\omega)=f(t')(\omega)
		\quad\text{or}\quad
		f(t(P))(\omega)\,P\,f(t')(\omega)
		\quad\text{for every }\omega\in\Omega ,\mbox{ and}\\
		f(t(P))(\omega)\,P\,f(t')(\omega)
		\quad\text{for some }\omega\in\Omega.
	\end{gathered}
$$
\end{definition}
Thus, truthful reporting gives a weakly preferred outcome in every
state and a strictly preferred outcome in at least one state.
This requirement compares outcomes state by state and does not
require probabilities on $\Omega$. In particular, truth-telling is uniquely optimal for any subject who maximizes subjective expected utility, provided that her utility function over $X$ represents $P$ and her beliefs on $\Omega$ have full support. The planner need not
know these beliefs or require subjects to agree on the probabilities of
the randomization states. A type space $T$ is \emph{elicitable with acts}
if there exists an incentive compatible $T$-mechanism with acts.

\citet{appACH2021} (Proposition 1) show that a type space is elicitable
with acts if and only if it is identifiable.\footnote{\citet{appACH2021} use the term
	``generated by top elements'', which has the same definition as identifiability
	in our terminology.}
Combining their result with Observation~\ref{Obs_1} and
Corollary \ref{corollary} immediately yields the following proposition.

\begin{proposition}
	\label{prop:applications-equivalence}
	Let $T$ be a type space. The following statements are equivalent:
	\begin{enumerate}
		\item[(i)] $T$ is elicitable with acts.
		\item[(ii)] $T$ is identifiable.
		\item[(iii)] $T$ is connected and $\partial T$ satisfies face consistency.
	\end{enumerate}
\end{proposition}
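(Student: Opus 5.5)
The proof is a chain of three equivalences, and almost all of the work has already been done earlier in the paper. The equivalence (i)$\Leftrightarrow$(ii) is exactly Proposition~1 of \citet{appACH2021}. Their notion ``generated by top elements'' coincides verbatim with our definition of identifiability: some experiment $\mathcal M$ has $T=\mathcal R_{\mathcal M}$. I would simply invoke it, noting that their definition of incentive compatibility with acts is the one reproduced in Definition~\ref{IC_acts}.

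For (ii)$\Rightarrow$(iii), suppose $T=\mathcal R_{\mathcal M}$. By Observation~\ref{Obs_1}, $T=\mathcal R_{\widehat{\mathcal E}_{\mathcal M}}$, so every type is a connected component of $G_{\widehat{\mathcal E}_{\mathcal M}}$. Any two preferences in the same type are therefore joined by a path of adjacent preferences lying inside that type, and hence $T$ is connected. Knowing this, Corollary~\ref{corollary} applies, and its ``only if'' direction gives that $\partial T$ satisfies face consistency.

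I would also verify directly that $\partial T=\widehat{\mathcal E}_{\mathcal M}$, as in Remark~\ref{equiv}. If an edge's endpoints lie in different components, the edge was deleted, so it lies in $\widehat{\mathcal E}_{\mathcal M}$. Conversely, an edge in $\widehat{\mathcal E}_{\mathcal M}$ has endpoints making different choices from some menu of $\mathcal M$, so they lie in different types of $\mathcal R_{\mathcal M}=T$. This check makes the appeal to Theorem~\ref{Characterization_of_Identifiability} transparent.

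For (iii)$\Rightarrow$(ii), take $T$ connected with $\partial T$ face consistent; the ``if'' direction of Corollary~\ref{corollary} then yields identifiability. Should one want to unpack it, the only nontrivial point is $T=\mathcal R_{\partial T}$ for connected $T$. Each type $t$ is connected in $G_{\partial T}$, because a path inside $t$ uses only edges whose endpoints share a type, and such edges are not in $\partial T$. Distinct types are separated, since any edge joining them lies in $\partial T$. So the components are exactly the types. Theorem~\ref{Characterization_of_Identifiability} then supplies $\mathcal M$ with $\widehat{\mathcal E}_{\mathcal M}=\partial T$, and Observation~\ref{Obs_1} gives $\mathcal R_{\mathcal M}=\mathcal R_{\partial T}=T$.

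There is no real obstacle. The only care needed is to keep the connectedness hypothesis explicit, since Corollary~\ref{corollary} is stated only for connected type spaces. That is precisely why (ii)$\Rightarrow$(iii) first derives connectedness from Observation~\ref{Obs_1} before invoking the corollary.
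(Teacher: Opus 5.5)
Your proposal is correct and follows the paper's route exactly: (i)$\Leftrightarrow$(ii) is Proposition~1 of \citet{appACH2021}, and (ii)$\Leftrightarrow$(iii) comes from Observation~\ref{Obs_1}, which makes any identifiable type space connected, combined with Corollary~\ref{corollary}. Your additional checks that $\partial T=\widehat{\mathcal E}_{\mathcal M}$ and that $T=\mathcal R_{\partial T}$ for connected $T$ repeat Remark~\ref{equiv} and the proof of the corollary, and both are sound.
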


Proposition~\ref{prop:applications-equivalence} characterizes elicitation
with acts directly through the geometry of the desired type space.
Connectedness and face consistency determine whether the planner's
informational objective admits truthful elicitation with acts.
Theorem \ref{Characterization_of_Identifiability} also explicitly determines the unique irreducible
experiment identifying an elicitable with acts type space. This experiment is
contained in every identifying experiment and is therefore their
indispensable core.\footnote{\citet{appACH2021} (Proposition 2) also establish
	existence and uniqueness of this experiment, which they call a ``unique
	independent list of menus''. Their proof uses an anti-exchange closure operator to prove the existence, whereas
	our characterization determines this list explicitly from the boundary
	$\partial T$.}
Next, we apply Proposition \ref{prop:applications-equivalence} to policies that request an
unranked shortlist of acceptable products.

\subsection{Eliciting unranked shortlists}
\label{subsec:applications-shortlists}

Consider a planner who wants to elicit an unranked shortlist of a
subject's preferred products. Depending on the purpose of the survey,
the planner may ask for a single favorite product, all products preferred
to an outside option, or the best such products subject to a limit on the
number of products reported. The requested information is an unranked set
of products. To induce truthful reporting, the planner can make the
subject's reward depend on the reported set and the outcome of a
randomization device, as in the preceding subsection. We ask which
shortlist policies admit truthful elicitation with acts, allowing the
planner to learn exactly the prescribed shortlist.

Write $X=Y\cup\{o\}$, where $o\notin Y$ and $|Y|=m\geq2$. The elements
of $Y$ are products, and $o$ is a distinguished outside option, such as
receiving no product. All alternatives in $X$ are available as rewards, that is,
the elicitation mechanism may award a product or the outside option after
the subject submits her report. For a preference $P \in \mathcal P$, the \emph{acceptable set} at $P$ is
$$
\operatorname{Acc}(P)=\{x\in Y:xPo\}.
$$
A \emph{shortlist policy} is a map $S:\mathcal P\to2^Y$, where $2^Y$ is
the set of all subsets of $Y$. The planner specifies a shortlist policy $S$ in advance. A subject with preference $P$ is asked to report the set $S(P)$, without reporting the ordering of its elements. We consider policies satisfying the following natural properties:
\begin{enumerate}
	\item[(A1)] \emph{Acceptability:} $S(P)\subseteq \operatorname{Acc}(P)$ for every $P$.
	\item[(A2)] \emph{No skipped superior products:} if $x\in S(P)$ and
	$y\in Y$ satisfies $yPx$, then $y\in S(P)$.
	\item[(A3)] \emph{Non-emptiness:} if $\operatorname{Acc}(P)\neq\emptyset$, then
	$S(P)\neq\emptyset$.
	\item[(A4)] \emph{Symmetry:} for every permutation $\sigma$ of $Y$,
	extended to $X$ by $\sigma(o)=o$,
	$$
	S(\sigma P)=\sigma S(P).
	$$
	Here, $\sigma P$ is obtained by relabeling each product $x$ in $P$ as
	$\sigma(x)$, and $\sigma S(P)=\{\sigma(x):x\in S(P)\}$.
\end{enumerate}
Acceptability (A1) restricts the shortlist to products that the subject would
prefer compared to receiving the outside option. The requirement of no skipped
superior products (A2) expresses the objective of selecting the subject's most preferred
acceptable products: retaining a product requires retaining every product
she prefers to it. Together, (A1) and (A2) make the shortlist an initial
segment of the acceptable products, while allowing its length to vary.
Non-emptiness (A3) rules out an empty report when at least one acceptable
product exists. Symmetry (A4) requires the shortlist policy to depend on
the subject's ranking rather than on product names. The outside
option serves as a fixed acceptability threshold.\footnote{Condition (A4) leaves the outside option fixed. \citet{appACH2021} (Section 6, Proposition 9) characterize elicitation with acts for type spaces invariant under permutations of all alternatives in $X$; this additional invariance is not imposed here.}

Fix any shortlist policy $S$. Let $T_S$ denote the type space induced by $S$, in which two
preferences $P$ and $P'$ belong to the same type if and only if
$S(P)=S(P')$. Learning the subject's type in $T_S$ therefore determines
the prescribed shortlist. Distinguishing preferences within a type would
reveal additional information that is unnecessary for determining that
shortlist. The planner may wish to avoid acquiring this information to
protect the subject's privacy or to limit collection to the survey's
purpose. An experiment $\mathcal M$ meets this objective exactly when
$\mathcal R_{\mathcal M}=T_S$, that is, menu-choice outcomes of $\mathcal M$ determine the
shortlist, and any two preferences prescribing the same shortlist produce
the same menu-choice outcomes of $\mathcal M$. Identifiability of $T_S$ thus determines whether such
an experiment exists.

A $T_S$-mechanism with acts asks the subject to report a type
in $T_S$ and assigns an act to each report, as defined in the
preceding subsection. By Proposition~\ref{prop:applications-equivalence}, $T_S$ is elicitable with acts if and only if $T_S$ is
identifiable. Identifiability therefore also characterizes the shortlist
policies that admit truthful reporting with acts.

Properties (A1)-(A4) allow many shortlist policies. For example, a policy
may select up to two of the subject's most preferred acceptable products,
or vary the number of products selected with the number of acceptable products.
First, we define two simple shortlist policies, namely first-choice shortlist
policy and full-acceptability shortlist policy. Define the \emph{first-choice shortlist
	policy} $S^{\mathrm{first}}:\mathcal P\to2^Y$ by
\begin{equation*}\label{eq:applications-first-policy}
	S^{\mathrm{first}}(P)=
	\begin{cases}
		\{r_1(P)\},&r_1(P)\in Y,\\
		\emptyset,&r_1(P)=o,
	\end{cases}
\end{equation*}
and the \emph{full-acceptability shortlist policy} $S^{\mathrm{acc}}:\mathcal P\to2^Y$ by
\begin{equation*}\label{eq:applications-approval-policy}
	S^{\mathrm{acc}}(P)=\operatorname{Acc}(P).
\end{equation*}
The first-choice shortlist policy selects the subject's favorite product
whenever any product is acceptable, and the empty set otherwise. The
full-acceptability shortlist policy selects every acceptable product,
without their internal ordering. The next proposition shows that these are the only policies
satisfying (A1)--(A4) whose induced type spaces are identifiable.
By Proposition \ref{prop:applications-equivalence}, they are also the only such policies that
admit truthful elicitation with acts.

\begin{proposition}
	\label{prop:applications-shortlist-dichotomy}
	Suppose a shortlist policy $S$ satisfies \textnormal{(A1)-(A4)}. Then $T_S$ is identifiable
	if and only if $S=S^{\mathrm{first}}$ or $S=S^{\mathrm{acc}}$.
	Moreover, their unique irreducible identifying experiments are, respectively,
	\begin{equation*}\label{eq:applications-shortlist-cores}
		\mathcal M^*_{\mathrm{first}}=\{X\},
		\qquad
		\mathcal M^*_{\mathrm{acc}}=\{\{o,x\}:x\in Y\}.
	\end{equation*}
\end{proposition}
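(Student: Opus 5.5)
The plan is to reduce everything to the function that records how many acceptable products are shortlisted, then use Proposition~\ref{prop:applications-equivalence} and Theorem~\ref{Characterization_of_Identifiability} for the two directions.

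\emph{Reduction.} By (A1) and (A2), $S(P)$ is an initial segment of the acceptable products listed in $P$-order, so $S(P)$ is fixed by its length $k(P)\le |\operatorname{Acc}(P)|$. Let $a(P)=|\operatorname{Acc}(P)|$, so $o$ has rank $a(P)+1$. Any two preferences with the same value of $a$ differ by a relabeling $\sigma$ of $Y$, and (A4) gives $|S(\sigma P)|=|S(P)|$. Hence $k(P)=g(a(P))$ for some function $g$ with $g(0)=0$ and, by (A3), $1\le g(a)\le a$ for $a\ge1$. The claim becomes: $T_S$ is identifiable if and only if $g\equiv1$ on $\{1,\dots,m\}$ or $g(a)=a$ for all $a$.

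\emph{Sufficiency and the cores.} For $\mathcal M=\{X\}$, two preferences make the same choice from $X$ exactly when they have the same top alternative. This is exactly $S^{\mathrm{first}}$: top $o$ gives $\emptyset$, top $x\in Y$ gives $\{x\}$. So $\mathcal R_{\{X\}}=T_{S^{\mathrm{first}}}$. For $\mathcal M=\{\{o,x\}:x\in Y\}$, the choices reveal exactly which $x$ satisfy $xPo$, so $\mathcal R_{\mathcal M}=T_{S^{\mathrm{acc}}}$. Both experiments are irreducible:
\begin{itemize}
\item removing $X$ leaves the empty experiment, which induces the trivial type space;
\item removing $\{o,x\}$ merges the preferences $\cdots xo\cdots$ and $\cdots ox\cdots$ that differ only by the swap of $x$ and $o$.
\end{itemize}
By the uniqueness part of Theorem~\ref{Characterization_of_Identifiability}, transferred via Corollary~\ref{corollary}, these are the unique irreducible identifying experiments.

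\emph{Necessity, step 1 (no intermediate lengths).} Suppose $1<g(a)=k<a$ for some $a$. Take $P$ with $a$ acceptable products and products $x_k,x_{k+1},x_{k+2}$ at ranks $k,k+1,k+2\le a$. Consider the hexagon $H$ on these ranks, with $i=k$.
\begin{itemize}
\item Its lower edge swapping ranks $k+1,k+2$ keeps the top $k$ fixed, so it is not in $\partial T_S$.
\item The opposite higher edge swaps the same pair at ranks $k,k+1$. It changes which of them is in the shortlist, so it is in $\partial T_S$.
\end{itemize}
Thus $H$ is active. Among active hexagons with changing triple $C(H)$, pick one, $H_0$, with inclusion-minimal associated set; it is primitive. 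I then want a hexagon $K$ with $A(K)=A(H_0)$ having a higher edge outside the boundary.

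Since $k\ge2$, there is a product ranked above the triple. I would build $K$ by placing a triple at the same ranks whose higher swap occurs strictly inside the shortlisted block (ranks $<k$), or among unacceptable alternatives, while keeping the set of alternatives at ranks $\ge i$ equal to $A(H_0)$. Such a swap leaves $S$ unchanged. This contradicts hexagon consistency (ii), hence identifiability by Proposition~\ref{prop:applications-equivalence}.

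The bookkeeping of $A(H_0)$ versus available positions, including the case where $o$ belongs to the triple, is the step I expect to be the main obstacle. If it becomes awkward, I would fall back on the direct argument from the active pattern. A menu distinguishing the higher edge but not the lower edge must contain $C(H)$ and lie inside $A(H)$. I would then exhibit an edge outside the boundary that such a menu distinguishes, in the spirit of Example~\ref{3 not 4}.

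\emph{Necessity, step 2 (no mixing).} Now $g(a)\in\{1,a\}$ for every $a$, and the two options coincide only at $a=1$. Suppose $g(a)=1$ and $g(b)=b$ for some $a,b\ge2$. I would pick adjacent values $a$ and $a+1$ where the regime switches, and look at the edge moving $o$ across the product at rank $a+1$ or $a$.
\begin{itemize}
\item If $g(a)=a$ and $g(a+1)=1$, connectedness fails. Preferences with $a+1$ acceptable products and the same top product form one type. To join two of them that differ in their set of acceptable products by a path inside that type, one would have to pass through preferences with $a$ acceptable products, where the types are different.
\item In the other case, $g(a)=1$ and $g(a+1)=a+1$, I would apply hexagon consistency (i). Take the hexagon containing $o$ at rank $a+2$ whose lower edge swaps $o$ with the product at rank $a+1$; this edge lies in the boundary. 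Its opposite higher edge swaps the same pair one rank higher, at ranks $a,a+1$, and does not change the top product, so it is not in the boundary. This violates (i).
\end{itemize}
Hence $g$ is constant at $1$, or $g(a)=a$ throughout, i.e. $S\in\{S^{\mathrm{first}},S^{\mathrm{acc}}\}$.
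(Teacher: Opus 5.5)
Your reduction to the length function $g$ is sound. So is your sufficiency argument: you compute $\mathcal R_{\{X\}}$ and $\mathcal R_{\{\{o,x\}:x\in Y\}}$ directly, check irreducibility, and invoke uniqueness, which is shorter than the paper's route through connectedness and face consistency.

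The necessity direction, however, has a genuine gap in Step~1. Your hexagon, with changing triple at ranks $k,k+1,k+2$, carries the boundary edge as a \emph{higher} edge. It is therefore merely active, and you are pushed to condition (ii) of hexagon consistency. That route cannot work. Take the capped policy $S_k$ itself, with $a\ge k+2$.
\begin{itemize}
\item The only active hexagons with changing triple $C(H)$ sit at ranks $k,k+1,k+2$ with $o$ below the triple, so $H_0$ sits there as well.
\item Every hexagon $K$ with $A(K)=A(H_0)$ therefore has its triple at ranks $k,k+1,k+2$, and $o\in A(K)$.
\item Each higher edge of such a $K$ either exchanges the shortlisted product at rank $k$ with an acceptable unshortlisted one, or moves $o$ between ranks $k$ and $k+1$, switching the shortlist between the top $k-1$ and the top $k$ products.
\end{itemize}
Either way the edge lies in $\partial T_{S_k}$, so condition (ii) holds at $H_0$ and the $K$ you want does not exist.

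Your two recipes for $K$ are unavailable in general anyway. First, $A(K)=A(H_0)\subseteq A(H)$ puts $K$'s higher swap at ranks $\ge k$, not $<k$. Second, that swap never exchanges two unacceptable products, because $o\in A(H_0)$; otherwise the triple of $H_0$ is unacceptable at every vertex and $H_0$ is not active. Moreover, your hexagon does not exist when $k=a-1$, e.g.\ for $S_2$ with $m=3$ (Figure~\ref{fig:applications-capped-hexagon}).

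The missing idea is to shift the hexagon one rank up, to ranks $k-1,k,k+1$. These ranks are all acceptable since $k+1\le a$, and $k-1\ge 1$. Every lower edge then exchanges a shortlisted product with an unshortlisted acceptable one and lies in $\partial T_S$. Every higher edge exchanges two shortlisted products and does not. This violates condition (i) directly, as in Claim~\ref{C_1}.

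Your menu fallback can also be completed only by looking upward, not at hexagons with the same associated set as in Example~\ref{3 not 4}. A menu distinguishing the swap at ranks $k,k+1$ contains that pair. It therefore also distinguishes the swap of the same pair at ranks $1,2$ in a preference with $a$ acceptable products. That edge is outside $\partial T_S$ because $k\ge 2$.

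Step~2 also needs repair. Take the case $g(c)=c$, $g(c+1)=1$ with $c\ge 2$. Two preferences with $c+1$ acceptable products, the same top product and different acceptable sets need not be joined only through preferences with $c$ acceptable products. If $g(c+2)=1$, a path inside the type can first raise a product of the second set above $o$ and then lower one of the first. If $c+1=m$, no such pair exists at all.

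Instead, join a preference with $c+1$ acceptable products and top product $x$ to one whose only acceptable product is $x$. Both have shortlist $\{x\}$. The number of acceptable products changes by at most one along an edge, so any path between them meets a preference with exactly $c$ acceptable products, where $|S|=c\ge 2$. This is Claim~\ref{C_2}.

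In the other case, the higher edge you use moves $o$ from rank $c+1$ to rank $c$, so the count falls from $c$ to $c-1$. An unchanged top product then keeps that edge out of the boundary only if $g(c-1)=1$; this is automatic when $c=2$. Take the switch at the smallest $s_0\ge 2$ with $g(s_0)=s_0$, as in Claim~\ref{C_3}. Once the first case is correctly excluded, this condition holds automatically.
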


The proof of this proposition is relegated to Appendix \ref{D}.

Given a shortlist policy $S$, let
$\mathcal B_S=\{S(P):P\in\mathcal P\}$ be the set of possible shortlists.
Each type in $T_S$ corresponds to exactly one shortlist
$B\in\mathcal B_S$, so a $T_S$-mechanism with acts can equivalently
be written as a direct mechanism $(\Omega,f)$ with
$f:\mathcal B_S\to X^\Omega$. The subject reports a shortlist
$B\in \mathcal B_S$ and receives $f(B)(\omega)$ in state $\omega$.
In this representation, incentive compatibility is defined using Definition \ref{IC_acts}, with $t(P)$ replaced by $S(P)$
and $t'$ replaced by any $B\in\mathcal B_S\setminus\{S(P)\}$.
We now give incentive compatible direct mechanisms for the two
policies characterized in
Proposition~\ref{prop:applications-shortlist-dichotomy}.

For the first-choice shortlist policy,
the report space is
$$
\mathcal B_{\mathrm{first}}
:=\mathcal B_{S^{\mathrm{first}}}
=\{\emptyset\}\cup\{\{x\}:x\in Y\}.
$$ The first-choice mechanism is deterministic, represented by a
singleton state space $\Omega_{\mathrm{first}}=\{\omega_0\}$,
where $\omega_0$ denotes its sole state. Define
$f_{\mathrm{first}}:\mathcal B_{\mathrm{first}}
\to X^{\Omega_{\mathrm{first}}}$ by
\begin{equation*}\label{eq:applications-first-mechanism}
	f_{\mathrm{first}}(B)(\omega_0)=
	\begin{cases}
		x,&B=\{x\},\\
		o,&B=\emptyset.
	\end{cases}
\end{equation*}
If a product is preferred to $o$, the truthful report names the
subject's favorite product and awards that product. If no product
is acceptable, the truthful report is empty and awards $o$, her
favorite alternative in $X$. Every false report awards a strictly
worse alternative. Thus the mechanism is incentive compatible.

For the full-acceptability shortlist policy, the report space is
$\mathcal B_{\mathrm{acc}}:=\mathcal B_{S^{\mathrm{acc}}}=2^Y$. Take
$\Omega_{\mathrm{acc}}=Y$ and define
$f_{\mathrm{acc}}:\mathcal B_{\mathrm{acc}}
\to X^{\Omega_{\mathrm{acc}}}$ by
\begin{equation*}\label{eq:applications-acceptability-mechanism}
	f_{\mathrm{acc}}(B)(x)=
	\begin{cases}
		x,&x\in B,\\
		o,&x\notin B.
	\end{cases}
\end{equation*}
The state selects a product $x$. The subject receives that product
if she included it in her report, and receives $o$ otherwise.
Reporting $B=\operatorname{Acc}(P)$ selects her preferred alternative from $\{x,o\}$
in every state $x$. Any false report $B\neq \operatorname{Acc}(P)$ either omits an acceptable product or includes an unacceptable one. If $x\in \operatorname{Acc}(P)\setminus B$, reporting $B$ awards $o$ in state $x$, whereas truthful reporting awards the preferred product $x$. If $x\in B\setminus \operatorname{Acc}(P)$, reporting $B$ awards $x$ in state $x$, whereas truthful reporting awards the preferred outside option $o$. Thus every false report gives a strictly worse outcome in at least one state. It cannot improve the outcome in any state, because truthful reporting always awards the subject’s preferred alternative from $\{x,o\}$. This establishes incentive
compatibility without eliciting an ordering of the reported products.

Next, we examine an upper bound on the number of products reported. Fix
$1\leq k\leq m$ and define the shortlist policy
$S_k:\mathcal P\to2^Y$ by
$$
	S_k(P)=
	\{r_j(P):1\leq j\leq \min\{k,|\operatorname{Acc}(P)|\}\}, \mbox{ where } P \in \mathcal P
$$
such that $S_k(P)=\emptyset$ when $\operatorname{Acc}(P)=\emptyset$. Thus the subject reports
her best $k$ acceptable products if at least $k$ exist, and all acceptable
products otherwise.

\begin{corollary}
	\label{cor:applications-capped-lists} 
	The type space $T_{S_k}$ is identifiable if and only if $k=1$ or $k=m$.
\end{corollary}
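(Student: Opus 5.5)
The plan is to deduce the corollary directly from Proposition~\ref{prop:applications-shortlist-dichotomy} by checking which $S_k$ satisfy (A1)--(A4) and then matching $S_k$ with the two characterized policies.

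\emph{Step 1: each $S_k$ satisfies (A1)--(A4).} I will read $r_j(P)$ in the definition of $S_k$ as the $j$-th ranked \emph{acceptable} product, since that is the evident intent ("best $k$ acceptable products"). Then $S_k(P)$ is the initial segment of $\operatorname{Acc}(P)$ of length $\min\{k,|\operatorname{Acc}(P)|\}$, ordered by $P$. Acceptability (A1) holds by construction. For (A2), suppose $x\in S_k(P)$ and $yPx$ with $y\in Y$; then $yPxPo$, so $y$ is acceptable and ranked above $x$ among acceptable products, hence $y\in S_k(P)$. Non-emptiness (A3) holds because $k\ge1$. Symmetry (A4) holds because relabeling products while fixing $o$ commutes with taking the top acceptable products.

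\emph{Step 2: identify $S_1$ and $S_m$.} For $k=1$, the set $S_1(P)$ is $\{$the best acceptable product$\}$. This equals $\{r_1(P)\}$ whenever $r_1(P)\in Y$, since then $r_1(P)$ is acceptable and is the best product. If $r_1(P)=o$, no product is acceptable and $S_1(P)=\emptyset$. Hence $S_1=S^{\mathrm{first}}$. For $k=m$, we have $\min\{m,|\operatorname{Acc}(P)|\}=|\operatorname{Acc}(P)|$, so $S_m=S^{\mathrm{acc}}$. Proposition~\ref{prop:applications-shortlist-dichotomy} then gives the "if" direction, together with the irreducible experiments $\{X\}$ and $\{\{o,x\}:x\in Y\}$.

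\emph{Step 3: for $1<k<m$, show $S_k$ differs from both policies.} By Proposition~\ref{prop:applications-shortlist-dichotomy} this suffices for non-identifiability, so I only need one witness preference for each comparison.
\begin{itemize}
\item Take $P$ with every product ranked above $o$, so $\operatorname{Acc}(P)=Y$. Then $|S_k(P)|=k\ge2$, whereas $|S^{\mathrm{first}}(P)|=1$, so $S_k\ne S^{\mathrm{first}}$.
\item With the same $P$, $|S_k(P)|=k<m=|S^{\mathrm{acc}}(P)|$, so $S_k\ne S^{\mathrm{acc}}$.
\end{itemize}
Hence $T_{S_k}$ is not identifiable whenever $1<k<m$.

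The main obstacle is only notational: making sure $S_k$ is read as selecting acceptable products, which the literal $r_j(P)$ notation obscures when $o$ is ranked early. Once that reading is fixed, all the substantive work lies in Proposition~\ref{prop:applications-shortlist-dichotomy}, which is assumed.
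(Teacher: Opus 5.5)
Your proposal is correct and follows the paper's proof essentially verbatim. You check that each $S_k$ satisfies (A1)--(A4), identify $S_1=S^{\mathrm{first}}$ and $S_m=S^{\mathrm{acc}}$, and for $1<k<m$ use a preference with all products acceptable to show that $S_k$ differs from both, then invoke Proposition~\ref{prop:applications-shortlist-dichotomy}. Your notational worry is unnecessary. The products ranked above $o$ occupy exactly ranks $1,\ldots,|\operatorname{Acc}(P)|$, so the literal $r_j(P)$ with $j\le\min\{k,|\operatorname{Acc}(P)|\}$ never equals $o$ and already selects the best acceptable products.
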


\begin{proof}
	For each $1 \leq k \leq m$, the policy $S_k$ satisfies (A1)-(A4). For $k=1$,
	$S_1=S^{\mathrm{first}}$, and for $k=m$, $S_m=S^{\mathrm{acc}}$. If $1<k<m$, take a preference $\tilde P$ with all $m$ products acceptable. Then $|S_k(\tilde P)|=k$, whereas $|S^{\mathrm{first}}(\tilde P)|=1$ and $|S^{\mathrm{acc}}(\tilde P)|=m$. The policy $S_k$ is therefore neither of those characterized in
	Proposition~\ref{prop:applications-shortlist-dichotomy} and hence $T_{S_k}$ is not identifiable. Therefore, the type space $T_{S_k}$ is identifiable if and only if $k=1$ or $k=m$.
\end{proof}

\begin{figure}[htbp]
	\centering
	\begin{tikzpicture}[
		scale=2.0,
		pref/.style={align=center,font=\small,inner sep=4pt},
		boundary/.style={draw=red!70!black,line width=1.1pt},
		internal/.style={draw=black!65,line width=0.65pt}
		]
		\node[pref] (abc) at (90:1) {$abco$\\$\{a,b\}$};
		\node[pref] (acb) at (30:1) {$acbo$\\$\{a,c\}$};
		\node[pref] (cab) at (-30:1) {$cabo$\\$\{a,c\}$};
		\node[pref] (cba) at (-90:1) {$cbao$\\$\{b,c\}$};
		\node[pref] (bca) at (-150:1) {$bcao$\\$\{b,c\}$};
		\node[pref] (bac) at (150:1) {$baco$\\$\{a,b\}$};
		\draw[boundary] (abc)--node[above right,font=\small,text=red!70!black] {$\ell$}(acb);
		\draw[internal] (acb)--node[right,font=\small] {$h$}(cab);
		\draw[boundary] (cab)--node[below right,font=\small,text=red!70!black] {$\ell$}(cba);
		\draw[internal] (cba)--node[below left,font=\small] {$h$}(bca);
		\draw[boundary] (bca)--node[left,font=\small,text=red!70!black] {$\ell$}(bac);
		\draw[internal] (bac)--node[above left,font=\small] {$h$}(abc);
	\end{tikzpicture}
	\caption{A shortlist of at most two products with $Y=\{a,b,c\}$.
		Each vertex displays a preference and its shortlist. Only red edges
		belong to $\partial T_{S_2}$.}
	\label{fig:applications-capped-hexagon}
\end{figure}

Figure \ref{fig:applications-capped-hexagon} illustrates why $T_{S_2}$
is not identifiable when $Y=\{a,b,c\}$. Throughout the displayed hexagon of the permutohedron,
all three products are above $o$ and are therefore acceptable at every vertex. Exchanging
the first two products leaves their reported set unchanged, whereas exchanging
the second and third changes it. Consequently, all lower edges belong to
$\partial T_{S_2}$, whereas all higher edges lie outside $\partial T_{S_2}$. For example,
the lower edge $\{abco,acbo\}$ has different reports, $\{a,b\}$ and
$\{a,c\}$, at its endpoints, but its opposite higher edge
$\{bcao,cbao\}$ has the same report $\{b,c\}$ at both endpoints. Hence, the lower edge $\{abco,acbo\} \in \partial T_{S_2}$, but its opposite higher edge $\{bcao,cbao\} \notin \partial T_{S_2}$. Therefore, $\partial T_{S_2}$
violates hexagon consistency and hence violates face consistency. Therefore, Proposition \ref{prop:applications-equivalence} implies that $T_{S_2}$ is not identifiable.

	The two identifiable type spaces correspond to distinct survey objectives.
	First-choice reporting determines the subject's favorite alternative in
	$X$. Full-acceptability reporting determines which products she prefers
	to $o$, leaving the ordering within both the acceptable and unacceptable
	sets undisclosed. For each objective, the irreducible experiment in
	Proposition \ref{prop:applications-shortlist-dichotomy} induces exactly
	the desired type space, and the associated direct mechanism elicits the
	prescribed set without requesting further preference information.
	If $1<k<m$, then $S_k$ also defines a shortlist policy satisfying
	(A1)-(A4), but its type space $T_{S_k}$ is not identifiable. Equivalently,
	no incentive compatible mechanism with acts can elicit exactly that
	shortlist.\footnote{The incentive conclusion here concerns elicitation with acts, as defined in Definition \ref{IC_acts}. Elicitation with objective lotteries
		allows a larger class of type spaces; see Section 5 in \citet{appACH2021}.} Proposition \ref{prop:applications-shortlist-dichotomy} characterizes the two feasible policies and their
	indispensable experiments. Corollary \ref{cor:applications-capped-lists} rules out every
	intermediate cap on the number of reported products.

%\section{Conclusion}

	\begin{appendices}
		\section{Appendix}
		
		\subsection{Proof of Observation \ref{Obs_1}} \label{B}
		\vspace{2mm}
		
		\noindent
\begin{proof} Fix an experiment $\mathcal M$. Pick any type $t'$ in $\mathcal R_{\widehat{\mathcal{E}}_{\mathcal{M}}}$. Let $P,P' \in t'$. By the definition of the type space $\mathcal R_{\widehat{\mathcal{E}}_{\mathcal{M}}}$, it follows that there exists a path $(\tilde P_1=P,\ldots,\tilde P_s=P')$ in $G_{\widehat{\mathcal E}_{\mathcal M}}$ between $P$ and $P'$. By construction, no edge of this path is distinguished by $\mathcal M$. Hence, for every $M\in\mathcal M$, $\mbox{\textit{dom}}_{\tilde P_i}(M)=\mbox{\textit{dom}}_{\tilde P_{i+1}}(M)$ for all $i=1,\ldots,s-1$. This implies that $\mbox{\textit{dom}}_{P}(M)=\mbox{\textit{dom}}_{P'}(M)$
	for every $M\in\mathcal M$. Therefore, $P$ and $P'$ belong to the same type in $\mathcal R_{\mathcal M}$. Since $P,P' \in t'$ were arbitrary preferences in $t'$, it follows that there exists a type $t \in \mathcal R_{\mathcal M}$ such that $t' \subseteq t$. Since $t' \in \mathcal R_{\widehat{\mathcal{E}}_{\mathcal{M}}}$ was arbitrary, it follows that for every type $t' \in \mathcal R_{\widehat{\mathcal{E}}_{\mathcal{M}}}$, there exists a type $t \in \mathcal R_{\mathcal M}$ such that $t' \subseteq t$.

	Now, fix any pair of types $t' \in \mathcal R_{\widehat{\mathcal{E}}_{\mathcal{M}}}$ and $t \in \mathcal R_{\mathcal M}$ such that $t' \subseteq t$. We will show that $t'=t$, which will conclude that $\mathcal R_{\mathcal{M}}=\mathcal R_{\widehat{\mathcal{E}}_{\mathcal{M}}}$. To show that $t=t'$, it is enough to show that $t \subseteq t'$. 
	
	First, we show that every pair of preferences in $t$ is connected in $G_{\widehat{\mathcal E}_{\mathcal M}}$. Pick any $P,P' \in t$. Since $t \in \mathcal R_{\mathcal M}$, it follows that $\mbox{\textit{dom}}_{P}(M)=\mbox{\textit{dom}}_{P'}(M)$ for every $M \in \mathcal M$. Fix any menu $M\in\mathcal M$ and let $x=\mbox{\textit{dom}}_P(M)=\mbox{\textit{dom}}_{P'}(M)$.
	Then, for every $y\in M\setminus\{x\}$, $xPy$ and $xP'y$.
	Thus, $\{x,y\}\notin P\vartriangle P'$
	for every $y\in M\setminus\{x\}$. Consider any $\tilde P \in \mathcal P$ such that $P\vartriangle \tilde P \subseteq P\vartriangle P'$. Since $P\vartriangle \tilde P \subseteq P\vartriangle P'$ and $\{x,y\}\notin P\vartriangle P'$
	for every $y\in M\setminus\{x\}$, it follows that $\{x,y\}\notin P\vartriangle \tilde P$
	for every $y\in M\setminus\{x\}$. This, together with the fact that $x=\mbox{\textit{dom}}_P(M)$ implies $x=\mbox{\textit{dom}}_{\tilde P}(M)$. Since $M \in \mathcal M$ is arbitrary, it follows that $\tilde P \in t$ whenever $P\vartriangle \tilde P \subseteq P\vartriangle P'$. Consider a shortest path $\pi=(P_1=P,\ldots,P_s=P')$ between $P$ and $P'$ in $G_{\emptyset}$.\footnote{\label{shortest-path} A shortest path minimizes the number of edges among
		paths with the same endpoints. The proof of Proposition~4.1
		in \citet{sato2013sufficient} constructs a path without restoration between
		any two preferences. In the permutohedron, such paths are
		precisely the shortest paths: they reverse each pair ranked
		differently at the endpoints exactly once and no other pair.
		Consequently, every pairwise comparison on which the endpoints
		agree is preserved throughout the path; see also Remark~4.1
		in \citet{sato2013sufficient}.} Along the path $\pi$, every adjacent swap reverses a pair in $P\vartriangle P'$, and hence $P\vartriangle P_i\subseteq P\vartriangle P'$ for every $i=1,\ldots,s$. Hence, by the preceding argument, $P_i\in t$ for every $i=1,\ldots,s$. Therefore, no edge of the path $\pi$ is distinguished by $\mathcal M$, and consequently $\pi$ is a path between $P$ and $P'$ in $G_{\widehat{\mathcal E}_{\mathcal M}}$. Since $P,P' \in t$ were arbitrary preferences in $t$, it follows that every pair of preferences in $t$ is connected in $G_{\widehat{\mathcal E}_{\mathcal M}}$.

	Since every pair of preferences in $t$ is connected in $G_{\widehat{\mathcal E}_{\mathcal M}}$, all preferences in $t$ belong to the same connected component of $G_{\widehat{\mathcal E}_{\mathcal M}}$. This implies that all preferences in $t$ belong to the same type in $\mathcal R_{\widehat{\mathcal{E}}_{\mathcal{M}}}$. This, together with the fact that $t'$ is a type in $\mathcal R_{\widehat{\mathcal{E}}_{\mathcal{M}}}$ and $t' \subseteq t$, it follows that $t \subseteq t'$. This completes the proof of the observation.
\end{proof}

		\subsection{Proof of Theorem \ref{Characterization_of_Identifiability}} \label{A}
		\vspace{2mm}
		
		\noindent
\begin{proof} Let $\widehat{\mathcal{E}}\subseteq \mathcal{E}$ be a set of edges. First, we prove the sufficiency part of the theorem.
	
	\textbf{Sufficiency:}
	Let $\widehat{\mathcal E}\subseteq\mathcal E$ be a set of edges satisfying
	face consistency. If $\widehat{\mathcal E}=\emptyset$, then the empty
	experiment identifies $\widehat{\mathcal E}$. Since the empty experiment
	contains no menus, it is vacuously irreducible. Moreover, any nonempty experiment
	contains some menu $M$, and every menu distinguishes at least one edge of
	the permutohedron. Hence no nonempty experiment identifies the empty edge
	set. Thus the empty experiment is the unique irreducible experiment that
	identifies $\widehat{\mathcal E}$. For the remainder of the proof, assume
	$\widehat{\mathcal E}\neq\emptyset$.
	
	For an edge $e=\{P,P'\}\in\mathcal E$, let $\lambda(e)$ be the
	unique element of $P\triangle P'$. Thus,
	$P\triangle P'=\{\lambda(e)\}$, and $\lambda(e)$ is the
	two-element subset of $X$ reversed along the edge $e$. Define $U(e)=\{z\in X\setminus \lambda(e): zPw \text{ for all } w \in \lambda(e)\}$. Note that since $P$ and $P'$ differ only in the relative ranking of $x$ and $y$, where $x,y \in \lambda(e)$, the set $U(e)$ is independent of the
	choice of $P$ or $P'$. Finally, define $A(e)=X \setminus U(e)$.\footnote{We slightly abuse notation: the notation $A(\cdot)$ is also used for hexagons. The two definitions are analogous, however. In both cases, $A(\cdot)$ denotes the relevant set of alternatives: the changing pair for edges, or the changing triple for hexagons, together with all alternatives ranked below them.} Thus $A(e)$ consists of the reversed pair $\lambda(e)$ together with all
	alternatives ranked below both alternatives in $\lambda(e)$.
	
	First, we prove a simple and elementary fact that will be used repeatedly in the proof.
	
	\begin{lemma}\label{lem:menu-edge}
		Fix any menu $M$ and any edge $e\in\mathcal E$. The menu $M$
		distinguishes the edge $e$ if and only if $\lambda(e)\subseteq M\subseteq A(e)$.
	\end{lemma}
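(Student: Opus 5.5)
The plan is to work directly from the definition of $\mbox{\textit{dom}}$ and to compare choices at the two endpoints. Write $e=\{P,P'\}$ with $\lambda(e)=\{x,y\}$, and assume $xPy$ and $yP'x$. Since $P$ and $P'$ are adjacent, $x$ and $y$ occupy consecutive ranks, and every other alternative is ranked identically at both preferences. So each $z\in X\setminus\lambda(e)$ is either above both $x$ and $y$ at both preferences ($z\in U(e)$) or below both at both preferences ($z\in A(e)\setminus\lambda(e)$). The whole argument uses only this trichotomy.

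For sufficiency, suppose $\lambda(e)\subseteq M\subseteq A(e)$. Every element of $M\setminus\lambda(e)$ is then ranked below both $x$ and $y$ at $P$ and at $P'$. Hence $\mbox{\textit{dom}}_P(M)$ is the $P$-better of $x$ and $y$, namely $x$, while $\mbox{\textit{dom}}_{P'}(M)=y$. These differ, so $M$ distinguishes $e$.

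For necessity, I would argue by contrapositive and split into two cases.

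If $\lambda(e)\not\subseteq M$, then $M$ contains at most one of $x,y$. The restrictions of $P$ and $P'$ to $M$ therefore coincide, because the only pair on which they disagree is $\{x,y\}$. Consequently $\mbox{\textit{dom}}_P(M)=\mbox{\textit{dom}}_{P'}(M)$.

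If $\lambda(e)\subseteq M$ but $M\not\subseteq A(e)$, then $M$ contains some $z\in U(e)$. Take the best element of $M\cap U(e)$ at $P$. Since $P$ and $P'$ agree on $U(e)$, this is also the best element at $P'$, and it beats every element of $M$ outside $U(e)$ at both preferences. So both preferences choose it, and $M$ does not distinguish $e$.

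I do not expect a real obstacle here. The only point needing care is to state explicitly that adjacency places $x$ and $y$ in consecutive positions. That fact justifies both the trichotomy and the independence of $U(e)$ from the choice of endpoint, which the paper already notes.
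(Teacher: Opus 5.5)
Your proof is correct and follows essentially the same route as the paper: sufficiency because every element of $M\setminus\lambda(e)$ lies below both swapped alternatives, so the two endpoints choose $x$ and $y$ respectively; necessity by showing $\lambda(e)\subseteq M$ and $M\cap U(e)=\emptyset$. Your version is slightly more explicit than the paper's. It justifies why $M$ must contain both $x$ and $y$, why the swapped pair occupies consecutive ranks, and why a common best element of $M\cap U(e)$ is chosen at both endpoints.
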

	
	\begin{proof}
		Let $e=\{P,P'\}$, and let $\lambda(e)=\{x,y\}$ where $xPy$ and $yP'x$. Therefore, $P$ and $P'$ differ only in the relative ranking of $x$ and $y$. Suppose the menu $M$ distinguishes the edge $e$. Since $\lambda(e)=\{x,y\}$, this implies that both $x$ and $y$ must belong to $M$. Therefore, $\lambda(e)\subseteq M$. Further, if $M \cap U(e) \neq \emptyset$, then it must be the case that $\mbox{\textit{dom}}_P(M)=\mbox{\textit{dom}}_{P'}(M)$, which is a contradiction to our assumption that the menu $M$ distinguishes the edge $e$. Hence, it must be the case that $M \cap U(e)= \emptyset$. Therefore, $M\subseteq A(e)$. Consequently, $\lambda(e)\subseteq M\subseteq A(e)$.

		Conversely, suppose that $\lambda(e)\subseteq M\subseteq A(e)$. Then $M \cap U(e)= \emptyset$. Therefore, $\mbox{\textit{dom}}_P(M)=x$ and $\mbox{\textit{dom}}_{P'}(M)=y$. Hence, $M$ distinguishes $e$. This completes the proof of the lemma.
	\end{proof}
	
	Next, we prove an important lemma that will be used repeatedly in the proof.

	\begin{lemma}\label{lem:monotonicity}
		Let $e,f\in\mathcal E$ be two edges such that $\lambda(e)=\lambda(f)$ and $A(e)\subseteq A(f)$. Then, $e\in\widehat{\mathcal E} \implies f\in\widehat{\mathcal E}$.
	\end{lemma}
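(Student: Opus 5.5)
The plan is to reduce the comparison of $e$ and $f$ to a chain of elementary moves. Each move is either a square move, which preserves membership in $\widehat{\mathcal E}$ in both directions by square consistency, or a single hexagon move, which pushes membership upward by condition (i) of hexagon consistency. Throughout we are inside the sufficiency part, so $\widehat{\mathcal E}$ satisfies face consistency. Write $\lambda(e)=\lambda(f)=\{x,y\}$. Note that every edge $g$ with $\lambda(g)=\{x,y\}$ has the form $\{P,P^i\}$ with $\{r_i(P),r_{i+1}(P)\}=\{x,y\}$ and $i=|U(g)|+1$. Moreover, $A(g)=\{x,y\}\cup\{r_{i+2}(P),\ldots,r_n(P)\}$.

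\emph{Step 1 (square invariance).} I will show that membership in $\widehat{\mathcal E}$ of an edge $g$ with label $\{x,y\}$ depends only on $A(g)$. Suppose $g=\{P,P^i\}$ and $g'=\{Q,Q^i\}$ have the same label and the same associated set. Then $U(g)=U(g')$, so $P$ and $Q$ place the same set in ranks $1,\ldots,i-1$ and the same set in ranks $i+2,\ldots,n$. We may choose the representatives so that $x$ and $y$ occupy ranks $i,i+1$ in the same order. Then $Q$ is reached from $P$ by adjacent transpositions at positions $j\le i-2$ or $j\ge i+2$. Each such transposition satisfies $|j-i|>1$, so $(P,P^i,P^{ij},P^j,P)$ is a square. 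Square consistency then gives $\{P,P^i\}\in\widehat{\mathcal E}\iff\{P^j,P^{ji}\}\in\widehat{\mathcal E}$. Iterating along the sequence of transpositions yields the claim.

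\emph{Step 2 (one hexagon step).} Let $g$ be an edge with label $\{x,y\}$, and let $z\in U(g)$. I will show that $g\in\widehat{\mathcal E}$ implies $g'\in\widehat{\mathcal E}$ for any $g'$ with label $\{x,y\}$ and $A(g')=A(g)\cup\{z\}$. By Step 1, I may choose the representative $P$ of $g$ with $r_{i-1}(P)=z$, $r_i(P)=x$ and $r_{i+1}(P)=y$. This is done by reordering $U(g)$ so that $z$ comes last. Then $z,x,y$ occupy ranks $i-1,i,i+1$, and $P$ determines a hexagon with changing triple $\{x,y,z\}$. In this hexagon, $g=\{zxy\cdots,zyx\cdots\}$ is a lower edge. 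Its opposite higher edge is $g^h=\{xyz\cdots,yxz\cdots\}$ with the tail unchanged. Hence $A(g^h)=A(g)\cup\{z\}$. Condition (i) of hexagon consistency gives $g^h\in\widehat{\mathcal E}$, and Step 1 transfers this membership to any $g'$ with the same label and associated set.

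\emph{Step 3 (induction).} Since $A(e)\subseteq A(f)$, we have $U(f)\subseteq U(e)$. I will induct on $|U(e)\setminus U(f)|$. If this number is $0$, then $A(e)=A(f)$ and Step 1 concludes. Otherwise, pick $z\in U(e)\setminus U(f)$ and apply Step 2 to obtain an edge $e'$ with label $\{x,y\}$, $e'\in\widehat{\mathcal E}$, and $A(e')=A(e)\cup\{z\}\subseteq A(f)$. Then apply the induction hypothesis to $e'$ and $f$. The main obstacle is bookkeeping in Step 1: one must verify that the reordering transpositions never touch ranks $i$ and $i+1$, and that both orientations of the pair are covered. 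This is immediate once the representative is chosen with $x$ above $y$, since each edge contains both orientations among its endpoints.
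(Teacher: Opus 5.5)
Your proposal is correct and follows essentially the same route as the paper: your Step 1 is the paper's claim that, when $\lambda(\cdot)$ and $A(\cdot)$ agree, membership in $\widehat{\mathcal E}$ is invariant under square moves, and your Steps 2--3 reproduce its argument of bringing each $z\in A(f)\setminus A(e)$ to rank $i-1$ by such moves and then applying condition (i) of hexagon consistency to pass to the opposite higher edge. Two bookkeeping choices make your version slightly cleaner than the paper's iteration on the first rank where the two preferences disagree: you take the endpoint with $x$ above $y$ as the representative, and you reorder the top and bottom blocks using only transpositions at positions $j\le i-2$ or $j\ge i+2$.
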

	
	\begin{proof}
		Let $e,f\in\mathcal E$ be two edges such that $\lambda(e)=\lambda(f)$. First, we prove the lemma for the special case in which $A(e)=A(f)$. 
		
		\begin{claim} \label{A(e)=A(f)}
			If $A(e)= A(f)$ then $e\in\widehat{\mathcal E} \implies f\in\widehat{\mathcal E}$.
		\end{claim}
		
		\begin{proof}
			Suppose $e\in\widehat{\mathcal E}$. Let $\lambda(e)=\lambda(f)=\{x,y\}$. Since $A(e)=A(f)$, we also have
			$U(e)=U(f)$. Let $i=|U(e)|+1=|U(f)|+1$. Then, it must be the case that $e=\{P,P^i\}$ and $f=\{\widetilde P,\widetilde P^i\}$ for some preferences $P, \widetilde P \in \mathcal{P}$, where $r_i(P)=r_i(\widetilde P)=x$ and $r_{i+1}(P)=r_{i+1}(\widetilde P)=y$. Since $U(e)=U(f)$, $A(e)=A(f)$ and $\lambda(e)=\lambda(f)$, it must be the case that $\{r_1(P),\ldots,r_{i-1}(P)\}=\{r_1(\widetilde P),\ldots,r_{i-1}(\widetilde P)\}$ and $\{r_{i+2}(P),\ldots,r_n(P)\}=\{r_{i+2}(\widetilde P),\ldots,r_n(\widetilde P)\}$.
			Thus $P$ and $\widetilde P$ have the same set of alternatives ranked above $x$ and $y$, the same set of alternatives ranked below $x$ and $y$, and
			the same rank of $x$ and $y$ (at ranks $i$ and $i+1$, respectively). They may differ only
			in the ranking of alternatives within $U(e)$ (equivalently $U(f)$) and within $A(e) \setminus \lambda(e)$ (equivalently $A(f) \setminus \lambda(f)$).

			If $P=\widetilde P$, then there is nothing to
			prove. So, assume that $P\neq \widetilde P$. Therefore, $P\vartriangle \widetilde P \neq \emptyset$. Define 
			$$ k=\min\{j:r_j(P)\neq r_j(\widetilde P)\}.$$
			Since $P\neq \widetilde P$, it must be the case that $k \geq 1$. Since $P$ and $\widetilde P$ differ only
			in the ranking of alternatives within $U(e)$ (equivalently $U(f)$) and within $A(e) \setminus \lambda(e)$ (equivalently $A(f) \setminus \lambda(f)$), it must be the case that either $k \leq i-1$ or $k \geq i+2$.
			
			Let $z=r_k(\widetilde P)$, and let $a$ be the rank of $z$ at $P$. Since
			$P$ and $\widetilde P$ agree at all ranks above $k$, it must be the case that $a>k$. Moreover, since $A(e)=A(f)$ (equivalently $U(e)=U(f)$), it follows that if $k\leq i-1$, then $a \leq i-1$, and if $k\geq i+2$, then $a\geq i+2$. This, together with the fact that $a>k$ implies that $|a-1-i|>1$. Consider the square $Q=(P, P^i, P^{i(a-1)}, P^{a-1}, P)$. The two edges $\{P,P^i\}$ and $\{P^{a-1},P^{i(a-1)}\}$ are opposite edges of the square $Q$. Since $\{P,P^i\}\in \widehat{\mathcal E}$ and $\widehat{\mathcal E}$ satisfies square consistency, it follows that $\{P^{a-1},P^{i(a-1)}\}\in\widehat{\mathcal E}$. Also, $P^{i(a-1)}=P^{(a-1)i}$ because adjacent swaps in squares commute. Therefore, $\{P^{a-1},P^{(a-1)i}\} \in \widehat{\mathcal E}$. It is worth noting here that by the choice of $P^{a-1}$, it follows that $A(e)=A(\{P^{a-1},P^{(a-1)i}\})=A(f)$ (equivalently $U(e)=U(\{P^{a-1},P^{(a-1)i}\})=U(f)$) and $\lambda(e)=\lambda(\{P^{a-1},P^{(a-1)i}\})=\lambda(f)$.
			
			If $P^{a-1}=\widetilde P$, then the proof is complete because it would imply $f=\{\widetilde P, \widetilde P^i\} \in \widehat{\mathcal E}$.
			
			Suppose $P^{a-1}\neq \widetilde P$. Then, it must be the case that $|P^{a-1} \vartriangle \widetilde P|= |P\vartriangle \widetilde P|-1>0$. This is because $\{r_{a-1}(P),r_a(P)=z\} \in P\vartriangle \widetilde P$, but $\{r_{a-1}(P),r_a(P)=z\} \notin P^{a-1} \vartriangle \widetilde P$.
			
			As before, we analogously define 
			$$l=\min\{j:r_j(P^{a-1})\neq r_j(\widetilde P)\}.$$
			
			Using similar arguments as before we get hold of a preference $P^{(a-1)(b-1)}$, where $b$ is the rank of $r_l(\widetilde P)$ at $P^{a-1}$, such that 
			\begin{enumerate}[(i)]
				\item $\{P^{(a-1)(b-1)},P^{(a-1)(b-1)i}\} \in \widehat{\mathcal E}$,
				\item $A(e)=A(\{P^{(a-1)(b-1)},P^{(a-1)(b-1)i}\})=A(f)$ \\(equivalently $U(e)=U(\{P^{(a-1)(b-1)},P^{(a-1)(b-1)i}\})=U(f)$), and 
				\item $\lambda(e)=\lambda(\{P^{(a-1)(b-1)},P^{(a-1)(b-1)i}\})=\lambda(f)$.
				
			\end{enumerate}
			
			If $P^{(a-1)(b-1)}=\widetilde P$, then $f=\{P^{(a-1)(b-1)},P^{(a-1)(b-1)i}\}$ and we are done.
			
			Suppose $P^{(a-1)(b-1)} \neq \widetilde P$. Then, it must be the case that $|P^{(a-1)(b-1)} \vartriangle \widetilde P|=|P^{a-1} \vartriangle \widetilde P|-1= |P\vartriangle \widetilde P|-2>0$. This is because $\{r_{b-1}(P^{a-1}),r_b(P^{a-1})\} \in P^{a-1} \vartriangle \widetilde P$, but $\{r_{b-1}(P^{a-1}),r_b(P^{a-1})\} \notin P^{(a-1)(b-1)} \vartriangle \widetilde P$.
			
			Since $|P \vartriangle \widetilde P|$ is finite, repeating similar arguments as before finitely many times, it follows that $f=\{\widetilde P,\widetilde P^i\}\in\widehat{\mathcal E}$. This completes the proof of the claim.

		\end{proof}
		
		Now, we prove the lemma. Suppose $A(e) \subseteq A(f)$ and let $\lambda(e)=\lambda(f)=\{x,y\}$. Further, assume that $e \in \widehat{\mathcal E}$. We will show that $f \in \widehat{\mathcal E}$. We distinguish two cases:
		
		\textbf{Case (i):} Suppose $|A(f) \setminus A(e)|=0$. This implies $A(f)=A(e)$. Therefore, by Claim \ref{A(e)=A(f)}, it follows that $f \in \widehat{\mathcal E}$, thereby completing the proof of the lemma.
		
		\textbf{Case(ii):} Suppose $|A(f) \setminus A(e)|\geq 1$. Let $D=A(f) \setminus A(e)=\{z_1,\ldots,z_k\}$, where $k \geq 1$. We prove the following claim, which will complete the proof of the lemma.

		\begin{claim}\label{|A(f)|=|A(e)|+k}
			If $A(f)=A(e)\cup D$, then $e\in\widehat{\mathcal E} \implies f\in\widehat{\mathcal E}$.
		\end{claim}
		
		\begin{proof}
			
			Suppose $e=\{P, P^i\} \in \widehat{\mathcal E}$, for some $i \in \{k+1,\ldots,n-1\}$ and let $\lambda(e)=\{x,y\}$, where $r_i(P)=x$. Since $z \notin A(e)$ for each $z \in D$, it must be the case that $z \in U(e)$ for each $z \in D$. 
			
			First, consider $z_1 \in D$. Without loss of generality, assume that $wPz_1$ for all $w \in U(e) \setminus\{z_1\}$, that is, rank of $z_1$ at $P$ is $i-1$. This is because if the rank of $z_1$ at $P$ is $r \leq i-2$, then consider the edge $e'=\{\widebar P, \widebar P^i\}$ where $\widebar P=P^{r(r+1)\cdots(i-2)}$. Notice that $A(e)=A(e')$ and $\lambda(e)=\lambda(e')$. Therefore, by Claim \ref{A(e)=A(f)}, it follows that $e' \in \widehat{\mathcal E}$. The rank of $z_1$ at $\widebar P$ is $i-1$ and hence our assumption is without loss of generality.
			
			Consider the hexagon $H$ where $C(H)=\{x,y,z_1\}$ and $A(H)=A(e) \cup \{z_1\}$. Notice that $e$ is a lower ($\ell$) edge of $H$. Let $e_1=e^h=\{P_1,P_1^{i-1}\}$, where $e^h$ is the opposite higher edge of the lower edge $e$. This implies that $\lambda(e_1)=\lambda(e)=\{x,y\}$ and $A(e_1)=A(e) \cup \{z_1\}$. Since $e \in \widehat{\mathcal E}$ and $\widehat{\mathcal E}$ satisfies hexagon consistency, by condition (i) of hexagon consistency, it follows that $e_1 \in \widehat{\mathcal E}$.  
			
			Next, consider $z_2 \in D$. Once again, using similar arguments as before, without loss of generality assume that $wP_1z_2$ for all $w \in U(e_1) \setminus\{z_2\}$. Consider the hexagon $J$ such that $C(J)=\{x,y,z_2\}$ and $A(J)=A(e_1) \cup \{z_2\}$. Notice that $e_1$ is a lower ($\ell$) edge of the hexagon $J$. Let $e_2=e_1^h=\{P_2,P_2^{i-2}\}$, where $e_1^h$ is the opposite higher edge of the lower edge $e_1$. This implies that $\lambda(e_2)=\lambda(e_1)=\lambda(e)=\{x,y\}$and $A(e_2)=A(e_1) \cup \{z_2\}=A(e) \cup \{z_1,z_2\}$. Since $e_1 \in \widehat{\mathcal E}$ and $\widehat{\mathcal E}$ satisfies hexagon consistency, by condition (i) of hexagon consistency, it follows that $e_2 \in \widehat{\mathcal E}$. 
			
			Repeating similar arguments as before, we conclude that there exists an edge $e_k \in \widehat{\mathcal E}$ such that $\lambda(e_k)=\lambda(e)=\{x,y\}$ and $A(e_k)=A(e) \cup \{z_1,\ldots,z_k\}=A(e) \cup D$. Notice that $A(e_k)=A(e) \cup D=A(f)$ and $\lambda(f)=\lambda(e)=\lambda(e_k)=\{x,y\}$. Since $e_k \in \widehat{\mathcal E}$, by Claim \ref{A(e)=A(f)}, it follows that $f \in \widehat{\mathcal E}$. This completes the proof of the claim.
		\end{proof}
		
		This completes the proof of the lemma.
	\end{proof}
	
	Now, we prove the sufficiency part of the theorem. In what follows, we construct an irreducible experiment $\mathcal M^*$ that identifies $\widehat{\mathcal E}$.
	
	Define $\mathcal M^*_2=\{\lambda(e) : e \in \widehat{\mathcal E} \mbox{ and } A(e)=\lambda(e)\}$. Thus $\mathcal M^*_2$ is the collection of binary menus that are forced by edges in $\widehat{\mathcal E}$ for which $A(e)=\lambda(e)$. For such an edge, Lemma \ref{lem:menu-edge} implies that the only menu capable of
	distinguishing it is $\lambda(e)$ itself.
	
	For $m \in \{3,\ldots,n\}$, define $\mathcal M^*_m = \{A(H): H \in \mathcal H \mbox{ is } \widehat{\mathcal E} \mbox{-primitive}
	\mbox{ and } |A(H)|=m\}$. We index these collections by $m$ to emphasize the level-by-level structure
	of the construction. Hexagons whose changing triple occupies ranks
	$n-2, n-1$ and $n$ are the only hexagons that can generate menus of size
	$3$, whereas hexagons whose changing triple occupies ranks $n-3, n-2$ and $n-1$
	are the only hexagons that can generate menus of size $4$, and so on. This
	is precisely the order encoded by primitivity: a primitive hexagon is an
	active hexagon that is not already accounted for by an active hexagon with
	the same changing triple and a smaller associated set $A(\cdot)$. Thus $\mathcal M^*_m$
	collects the candidate menus of size $m$ generated at the corresponding
	level of the construction.
	
	Finally, define
	$$ \mathcal M^*= \mathcal M^*_2 \cup \mathcal M^*_3 \cup \cdots \cup \mathcal M^*_n.$$
	We will show that the experiment $\mathcal M^*$ identifies
	$\widehat{\mathcal E}$, thereby completing the proof of the sufficiency part of the theorem.
	
	\begin{claim}
		$\mathcal M^*$ identifies $\widehat{\mathcal E}$.
	\end{claim}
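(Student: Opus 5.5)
The plan is to prove the two inclusions $\widehat{\mathcal E}_{\mathcal M^*}\subseteq\widehat{\mathcal E}$ and $\widehat{\mathcal E}\subseteq\widehat{\mathcal E}_{\mathcal M^*}$ separately. Both will rest on Lemma~\ref{lem:menu-edge}, which says that a menu $M$ distinguishes $f$ iff $\lambda(f)\subseteq M\subseteq A(f)$. They will also use Lemma~\ref{lem:monotonicity}, which says that membership in $\widehat{\mathcal E}$ propagates from $e$ to $f$ whenever $\lambda(e)=\lambda(f)$ and $A(e)\subseteq A(f)$.

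\emph{No menu of $\mathcal M^*$ distinguishes an edge outside $\widehat{\mathcal E}$.} First take a binary menu $\lambda(e)\in\mathcal M^*_2$, where $e\in\widehat{\mathcal E}$ and $A(e)=\lambda(e)$. If it distinguishes $f$, then $\lambda(f)\subseteq\lambda(e)\subseteq A(f)$. Hence $\lambda(f)=\lambda(e)$ and $A(e)\subseteq A(f)$, and Lemma~\ref{lem:monotonicity} gives $f\in\widehat{\mathcal E}$.

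Next take $A(H)$ with $H$ primitive, and suppose it distinguishes $f$, so $\lambda(f)=\{u,v\}\subseteq A(H)\subseteq A(f)$. Since $|A(H)|\ge 3$, we can pick $w\in A(H)\setminus\{u,v\}$ and set $s=n-|A(H)|+1$. Build a preference $Q$ as follows:
\begin{itemize}
\item $X\setminus A(H)$ occupies ranks $1,\dots,s-1$;
\item $u,v,w$ occupy ranks $s,s+1,s+2$;
\item the rest of $A(H)$ occupies the lower ranks.
\end{itemize}
Let $g=\{Q,Q^{s}\}$. Then $g$ is a higher edge of the hexagon $K$ with $C(K)=\{u,v,w\}$ and $A(K)=A(H)$. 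Condition (ii) of hexagon consistency therefore gives $g\in\widehat{\mathcal E}$. Since $\lambda(g)=\lambda(f)$ and $A(g)=A(H)\subseteq A(f)$, Lemma~\ref{lem:monotonicity} yields $f\in\widehat{\mathcal E}$.

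\emph{Every $f\in\widehat{\mathcal E}$ is distinguished by $\mathcal M^*$.} Let $\lambda(f)=\{x,y\}$. Among all edges $g\in\widehat{\mathcal E}$ with $\lambda(g)=\{x,y\}$ and $A(g)\subseteq A(f)$, choose one with $|A(g)|$ minimal; $f$ itself is a candidate, so such a $g$ exists.

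If $A(g)=\lambda(g)$, then $\{x,y\}\in\mathcal M^*_2$, and this menu distinguishes $f$ by Lemma~\ref{lem:menu-edge}.

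Otherwise pick $z\in A(g)\setminus\{x,y\}$. By Lemma~\ref{lem:monotonicity} (or Claim~\ref{A(e)=A(f)}) applied with equal $A$, we may replace $g$ by the edge with the same $\lambda$ and $A$ in which $z$ sits immediately below the swapped pair. Then $g$ is a higher edge of the hexagon $K$ with $C(K)=\{x,y,z\}$ and $A(K)=A(g)$. The opposite lower edge $e$ (with $e^h=g$) has $\lambda(e)=\{x,y\}$ and $A(e)=A(g)\setminus\{z\}\subsetneq A(f)$. Minimality of $|A(g)|$ forces $e\notin\widehat{\mathcal E}$, so $K$ is active.

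By finiteness there is a primitive hexagon $H$ with $C(H)=\{x,y,z\}$ and $A(H)\subseteq A(K)$. Then $A(H)\in\mathcal M^*$, and $\{x,y\}\subseteq A(H)\subseteq A(g)\subseteq A(f)$. By Lemma~\ref{lem:menu-edge}, $A(H)$ distinguishes $f$.

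The main obstacle is the first inclusion for the non-binary menus $A(H)$. The menu $A(H)$ distinguishes edges whose label is an arbitrary pair inside $A(H)$, not necessarily inside $C(H)$. The key step is realizing that each such edge dominates, in the sense of Lemma~\ref{lem:monotonicity}, a higher edge of some hexagon $K$ with $A(K)=A(H)$ but a possibly different changing triple. This is exactly why condition (ii) of hexagon consistency is quantified over all $K$ with $A(K)=A(H)$. The remaining steps are routine applications of the two lemmas.
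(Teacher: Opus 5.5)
Your proof is correct and matches the paper's argument. The first inclusion is argued exactly as there: your explicit preference $Q$ just makes concrete the higher edge $g$ with $\lambda(g)=\lambda(f)$ and $A(g)=A(H)$ whose existence the paper asserts. Your choice of $g\in\widehat{\mathcal E}$ with minimal $|A(g)|$ below $f$ is a well-ordering repackaging of the paper's induction on $|A(f)|$ through the opposite lower edge $f^{\ell}$. Repositioning $z$ via the equal-$A$ case of Lemma~\ref{lem:monotonicity} is valid but avoidable: you may simply take $z$ to be the alternative ranked immediately below the swapped pair in $g$, as the paper does.
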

	
	\begin{proof} First, we show that $\mathcal M^*$ identifies $\widehat{\mathcal E}$,that is, $\widehat{\mathcal E}_{\mathcal M^*}=\widehat{\mathcal E}$, by proving $\widehat{\mathcal E}_{\mathcal M^*}\subseteq \widehat{\mathcal E}$ and $\widehat{\mathcal E} \subseteq \widehat{\mathcal E}_{\mathcal M^*}$. Let us first prove that $\widehat{\mathcal E}_{\mathcal M^*}\subseteq
		\widehat{\mathcal E}$. Let $f \in \widehat{\mathcal E}_{\mathcal M^*}$. Then, there exists $M\in\mathcal M^*$ such that $f$ is distinguished by $M$. We show that $f \in \widehat{\mathcal E}$.
		
		Suppose that $M=\{x,y\}\in\mathcal M^*_2$. By definition of
		$\mathcal M^*_2$, there exists an edge $e\in\widehat{\mathcal E}$ such that
		$\lambda(e)=\{x,y\}$ and $A(e)=\{x,y\}$. Since $|M|=2$ and $M$ distinguishes $f$, by
		Lemma \ref{lem:menu-edge}, we have
		$\lambda(f)=\{x,y\}\subseteq A(f)$. Hence, $A(e)\subseteq A(f)$, and Lemma
		\ref{lem:monotonicity} implies $f\in\widehat{\mathcal E}$.
		
		Now, suppose that $M \in \mathcal M^*_3 \cup \cdots \cup \mathcal M^*_n$. Therefore, $M=A(H)$ for some primitive hexagon $H$. Notice that $|M| \geq 3$. By Lemma
		\ref{lem:menu-edge}, $\lambda(f)\subseteq M\subseteq A(f)$.
		Choose an edge $g\in\mathcal E$ such that $\lambda(g)=\lambda(f)$
		and $A(g)=M$.
		Since $|M|\geq 3$, the edge $g$ is a higher edge of some hexagon
		$K\in\mathcal H$ with $A(K)=M=A(H)$. By Condition $(ii)$ of hexagon
		consistency, every higher edge of every hexagon $K$ with $A(K)=A(H)$
		belongs to $\widehat{\mathcal E}$. Hence $g\in\widehat{\mathcal E}$.
		Since $A(g)=M\subseteq A(f)$, Lemma \ref{lem:monotonicity} implies
		$f\in\widehat{\mathcal E}$. Therefore, we conclude that $\widehat{\mathcal E}_{\mathcal M^*}\subseteq
		\widehat{\mathcal E}$, that is, no menu in $\mathcal M^*$ distinguishes
		an edge outside $\widehat{\mathcal E}$.
		
		Now, we prove that $\widehat{\mathcal E} \subseteq \widehat{\mathcal E}_{\mathcal M^*}$. Let $f \in\widehat{\mathcal E}$. We proceed by induction on
		$|A(f)|$.
		
		Suppose that $|A(f)|=2$. Then
		$A(f)=\lambda(f)$, and by Lemma \ref{lem:menu-edge}, the menu $A(f)$ distinguishes $f$. Note that by construction of $\mathcal M^*$, the binary menu $A(f)$ belongs to
		$\mathcal M^*_2$. Therefore, $f \in \widehat{\mathcal E}_{\mathcal M^*}$.
		
		Fix $m\geq 3$ and suppose that every edge $e\in\widehat{\mathcal E}$
		with $|A(e)|<m$ is distinguished by some menu in $\mathcal M^*$. Let $|A(f)|=m$. Suppose $f=\{P,P^{n-m+1}\}$ for some $P \in \mathcal P$ and let $\lambda(f)=\{x,y\}$. We will show that $f \in \widehat{\mathcal E}_{\mathcal M^*}$, thereby establishing that $\widehat{\mathcal E} \subseteq \widehat{\mathcal E}_{\mathcal M^*}$.
		
		Since $m \geq 3$, it follows that $n-m+1 \leq n-2$. Define $z=r_{n-m+3}(P)$. Consider the hexagon $$H=(P,P^{n-m+1},P^{(n-m+1){(n-m+2)}},P^{(n-m+1){(n-m+2)}(n-m+1)},P^{{(n-m+2)}(n-m+1)},P^{(n-m+2)},P).$$ Note that $f$ is a higher edge of the hexagon $H$ and $C(H)=\{x,y,z\}$. Therefore, $A(H)=A(f)$ because $f$ is a higher edge of $H$. Let $f^{\ell}$ denote the lower edge of $H$ opposite to $f$. Therefore, $\lambda(f^{\ell})=\lambda(f)$ and $A(f^{\ell})=A(f) \setminus \{z\}$. 
		
		If $f^{\ell} \in \widehat{\mathcal E}$, then by the induction hypothesis $f^{\ell}$ is distinguished by some menu $M \in \mathcal M^*$. Therefore, using Lemma \ref{lem:menu-edge}, it follows that $\lambda(f^{\ell}) \subseteq M \subseteq A(f^{\ell})$. This, together with the facts that $\lambda(f^{\ell})=\lambda(f)$ and $A(f^{\ell})=A(f) \setminus \{z\}$ implies $\lambda(f^{\ell})=\lambda(f) \subseteq M \subseteq A(f^{\ell}) \subseteq A(f)$. By Lemma \ref{lem:menu-edge}, it follows that the menu $M$ distinguishes the edge $f$. Hence, $f \in \widehat{\mathcal E}_{\mathcal M^*}$.
		
		Suppose $f^{\ell} \notin \widehat{\mathcal E}$. Since $f \in \widehat{\mathcal E}$ and $f$ is the opposite higher edge of the lower edge $f^{\ell}$, the hexagon $H$ is $\widehat{\mathcal E}$-active. If $H$ is $\widehat{\mathcal E}$-primitive, then $A(H) \in \mathcal M^*$. Therefore, the menu $A(H)$ distinguishes the edge $f$. Hence, $f \in \widehat{\mathcal E}_{\mathcal M^*}$. If $H$ is not $\widehat{\mathcal E}$-primitive, then there exists an $\widehat{\mathcal E}$-primitive hexagon $H'$ such that $C(H')=C(H)$ and $A(H') \subsetneq A(H)$.\footnote{To see that why such primitive $H'$ must exist, note that if $H'$ is active but not primitive, then we get hold of another active hexagon (say $H''$) such that $C(H'')=C(H')$ and $A(H'')\subsetneq A(H')$. Since $A(\cdot)$ strictly decrease, this process must terminate at a primitive hexagon.} Since $H'$ is $\widehat{\mathcal E}$-primitive, by the construction of $\mathcal M^*$, it follows that $A(H') \in \mathcal M^*$. Since $C(H')=C(H)=\{x,y,z\}$ and $A(H') \subseteq A(H)=A(f)$, we have $\lambda(f)=\{x,y\} \subseteq A(H') \subseteq A(f)$. Therefore, by Lemma \ref{lem:menu-edge}, it follows that the menu $A(H')$ distinguishes the edge $f$. Since, $A(H') \in \mathcal M^*$, it follows that $f \in \widehat{\mathcal E}_{\mathcal M^*}$. This completes the induction on $|A(f)|$. Hence, $\widehat{\mathcal E} \subseteq \widehat{\mathcal E}_{\mathcal M^*}$.
		
		Since $\widehat{\mathcal E}_{\mathcal M^*}\subseteq
		\widehat{\mathcal E}$ and $\widehat{\mathcal E} \subseteq \widehat{\mathcal E}_{\mathcal M^*}$, we conclude that $\widehat{\mathcal E} = \widehat{\mathcal E}_{\mathcal M^*}$. Hence, the experiment $\mathcal M^*$ identifies $\widehat{\mathcal E}$. This completes the proof of the claim and hence completes the proof of the sufficiency part of the theorem.
		
	\end{proof}

	Next, we prove the necessity part of the theorem.
	
	\textbf{Necessity:} Suppose $\widehat{\mathcal{E}}$ is identifiable. This implies that there exists an experiment $\mathcal M$ that identifies $\widehat{\mathcal{E}}$, that is, $\widehat{\mathcal E}_{\mathcal M}=\widehat{\mathcal E}$. We show that $\widehat{\mathcal{E}}$ satisfies square consistency and hexagon consistency.
	
	First, we show that $\widehat{\mathcal E}$ satisfies square consistency. Let $Q=(P,P^i,P^{ij},P^j,P) \in \mathcal S$ be a square, where $P \in \mathcal P$ and $|i-j|>1$. Consider the two opposite edges $e=\{P,P^i\}$ and $f=\{P^j,P^{ji}\}$. Since $e$ and $f$ are opposite edges of the square $Q$, it follows that $\lambda(e)=\lambda(f)$ and $A(e)=A(f)$. Suppose $e \in \widehat{\mathcal E}$. Since
	$\widehat{\mathcal E}_{\mathcal M}=\widehat{\mathcal E}$, there exists a
	menu $M\in\mathcal M$ that distinguishes $e$. By Lemma
	\ref{lem:menu-edge}, $\lambda(e)\subseteq M\subseteq A(e)$. This, together with the facts that $\lambda(e)=\lambda(f)$ and $A(e)=A(f)$, it follows that $\lambda(f)\subseteq M\subseteq A(f)$. Therefore, Lemma
	\ref{lem:menu-edge} implies that the menu $M$ distinguishes the edge $f$. Hence,
	$f\in\widehat{\mathcal E}_{\mathcal M}=\widehat{\mathcal E}$. The reverse
	implication, that is, $f \in \widehat{\mathcal E}$ implies $e \in \widehat{\mathcal E}$, is identical. Since $Q \in \mathcal S$ is arbitrary, it follows that opposite edges of every square are either
	both in $\widehat{\mathcal E}$ or both outside it. Hence,
	$\widehat{\mathcal E}$ satisfies square consistency. 
	
	Next, we show that $\widehat{\mathcal E}$ satisfies hexagon consistency. Let $H \in \mathcal H$ be a
	hexagon, and consider a lower edge $e$ of the hexagon $H$. By $e^h$, we denote the opposite
	higher edge of $e$. Therefore, $\lambda(e)=\lambda(e^h)$ and $A(e) \subseteq A(e^h)$. Suppose $e\in\widehat{\mathcal E}$. Since $\widehat{\mathcal E}_{\mathcal M}=\widehat{\mathcal E}$, there exists a menu
	$M\in\mathcal M$ that distinguishes $e$. By Lemma \ref{lem:menu-edge}, $\lambda(e)\subseteq M\subseteq A(e)$. This, together with the facts that $\lambda(e)=\lambda(e^h)$ and $A(e) \subseteq A(e^h)$, it follows that $\lambda(e^h)\subseteq M\subseteq A(e^h)$. Therefore, Lemma \ref{lem:menu-edge} implies that the menu $M$ distinguishes the edge $e^h$. Therefore, $e^h\in\widehat{\mathcal E}_{\mathcal M}
	=\widehat{\mathcal E}$. Hence,  $\widehat{\mathcal E}$ satisfies Condition (i) of hexagon consistency.
	
	Next, we show that $\widehat{\mathcal E}$ satisfies Condition (ii) of hexagon consistency.  Let $H \in \mathcal H$ be
	a $\widehat{\mathcal E}$-primitive hexagon. Since $H$ is active, there exists a lower edge $f$ of
	$H$ such that $f \notin\widehat{\mathcal E}$ and $f^h\in\widehat{\mathcal E}$. Let $C(H)=\{x,y,z\}$ and $\lambda(f)=\lambda(f^h)=\{x,y\}$. Also, the fact that
	$f^h$ is the opposite higher edge of $f$ implies $A(f^h)=A(H)$ and $A(f)=A(H)\setminus\{z\}$. Since $f^h\in\widehat{\mathcal E}$ and $\mathcal M$ identifies
	$\widehat{\mathcal E}$, it follows that there exists a menu $M\in\mathcal M$ that
	distinguishes the edge $f^h$. Then, by Lemma \ref{lem:menu-edge}, $\{x,y\}\subseteq M\subseteq A(f^h)=A(H)$. Below, we prove a claim which shows that the menu $M$ has to be $A(H)$.
	
	\begin{claim}\label{M=A(H)}
		$M=A(H)$.
	\end{claim}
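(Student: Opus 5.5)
We need $M=A(H)$, given $\{x,y\}\subseteq M\subseteq A(H)$ and that $M$ distinguishes $f^h$. The argument has three steps.

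\emph{Step 1: $z\in M$.} Since $M\in\mathcal M$ and $f\notin\widehat{\mathcal E}=\widehat{\mathcal E}_{\mathcal M}$, the menu $M$ does not distinguish $f$. By Lemma~\ref{lem:menu-edge}, and because $\lambda(f)=\{x,y\}\subseteq M$, this forces $M\not\subseteq A(f)=A(H)\setminus\{z\}$. Combined with $M\subseteq A(H)$, we get $z\in M$. Hence $C(H)=\{x,y,z\}\subseteq M\subseteq A(H)$.

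\emph{Step 2: if $M\neq A(H)$, build a smaller active hexagon.} Suppose $M\subsetneq A(H)$. Take any preference $Q$ with three features: the alternatives in $X\setminus M$ occupy the top $n-|M|$ ranks; the triple $\{x,y,z\}$ occupies the ranks $i'=n-|M|+1$, $i'+1$, $i'+2$; and $M\setminus C(H)$ fills the bottom ranks. Within the triple, arrange $x,y$ so that there is a hexagon $H'$ through $Q$ on ranks $i',i'+1,i'+2$. Then $C(H')=C(H)$ and $A(H')=M\subsetneq A(H)$.

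Let $g^h$ be the higher edge of $H'$ labelled $\{x,y\}$, and $g$ its opposite lower edge. Then $A(g^h)=A(H')=M$ and $A(g)=M\setminus\{z\}$. By Lemma~\ref{lem:menu-edge}, $M$ distinguishes $g^h$, since $\lambda(g^h)\subseteq M\subseteq A(g^h)$. So $g^h\in\widehat{\mathcal E}_{\mathcal M}=\widehat{\mathcal E}$.

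\emph{Step 3: $g\notin\widehat{\mathcal E}$, which gives the contradiction.} This is the main obstacle, because some other menu of $\mathcal M$ might distinguish $g$. The way around it is to use that $\widehat{\mathcal E}$ is identifiable, so Lemma~\ref{lem:monotonicity} applies; its proof only uses square consistency and hexagon condition (i), which we have already established in the necessity part.

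The comparison is between $g$ and $f$. Both have $\lambda(g)=\lambda(f)=\{x,y\}$, and $A(g)=M\setminus\{z\}\subseteq A(H)\setminus\{z\}=A(f)$. If $g$ were in $\widehat{\mathcal E}$, Lemma~\ref{lem:monotonicity} would give $f\in\widehat{\mathcal E}$, contradicting $f\notin\widehat{\mathcal E}$. Hence $g\notin\widehat{\mathcal E}$ while $g^h\in\widehat{\mathcal E}$, so $H'$ is $\widehat{\mathcal E}$-active. This contradicts the primitivity of $H$, and therefore $M=A(H)$.

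The only routine check left is the construction of $Q$ in Step 2. One must confirm that the edge labelled $\{x,y\}$ at ranks $i'$ and $i'+1$ really has $A$-set equal to $M$, and that its opposite lower edge has $A$-set $M\setminus\{z\}$. Both follow directly from the choice of positions.
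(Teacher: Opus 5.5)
Your proof is correct and has the same skeleton as the paper's. Step 1 ($z\in M$) is the same. The auxiliary hexagon is the same: your $H'$ is the paper's $H_M$, with $C(H_M)=\{x,y,z\}$ and $A(H_M)=M$; you also spell out its construction, which the paper only asserts. The final contradiction with primitivity is the same.

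The one real difference is how you show that the lower edge $g$ of $H'$ lies outside $\widehat{\mathcal E}$. You invoke Lemma~\ref{lem:monotonicity}. That is legitimate here, because its proof uses only square consistency and condition (i) of hexagon consistency, and both are derived in the necessity part before this claim.

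The paper instead argues directly from the identifying experiment. If some $M'\in\mathcal M$ distinguished $g$, then by Lemma~\ref{lem:menu-edge} we would have $\{x,y\}\subseteq M'\subseteq A(g)=M\setminus\{z\}\subseteq A(H)\setminus\{z\}=A(f)$. So $M'$ would also distinguish $f$, contradicting $f\notin\widehat{\mathcal E}$.

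The ``main obstacle'' you flag therefore dissolves in one line. Under identifiability, Lemma~\ref{lem:monotonicity} is itself just this observation. The paper's route avoids importing a lemma proved under the sufficiency hypotheses, along with its square-and-hexagon chain arguments. What your route does show is that this step depends only on properties of the edge set already established, not on the particular experiment $\mathcal M$.
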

	
	\begin{proof}
		First, we show that $z\in M$. Suppose, to the contrary, that $z\notin M$. Then $M\subseteq A(H) \setminus\{z\}=A(f)$.
		Together with $\lambda(f)=\{x,y\}\subseteq M$, Lemma
		\ref{lem:menu-edge} implies that $M$ distinguishes the edge $f$. Hence, $f$ is a lower edge of the hexagon $H$ and 
		$f\in \widehat{\mathcal E}$, a
		contradiction to the fact that $f \notin\widehat{\mathcal E}$. Therefore, $\{x,y,z\}\subseteq M\subseteq A(H)$.
		
		Next, we prove that $M=A(H)$. Suppose, to the contrary, that
		$M\subsetneq A(H)$. Consider a hexagon $H_M \in \mathcal H$ such that
		$C(H_M)=\{x,y,z\}$ and $A(H_M)=M$. Notice that such a hexagon always exists. Let $e$ be the lower edge of $H_M$ with $\lambda(e)=\{x,y\}$, and let
		$e^h$ be its opposite higher edge. Note that since $\lambda(e^h)=\{x,y\}\subseteq \{x,y,z\}\subseteq M=A(H_M)=A(e^h)$, by Lemma \ref{lem:menu-edge}, it follows that $M$ distinguishes $e^h$. This, together with the facts that $M \in \mathcal M$ and $\mathcal M$ identifies $\widehat{\mathcal E}$ imply $e^h \in \widehat{\mathcal E}$. 
		
		We claim that $e \notin\widehat{\mathcal E}$. Suppose otherwise, that is, $e \in\widehat{\mathcal E}$. Since
		$\mathcal M$ identifies $\widehat{\mathcal E}$, there exists a menu
		$M'\in\mathcal M$ that distinguishes $e$. By Lemma
		\ref{lem:menu-edge}, $\{x,y\}\subseteq M'\subseteq A(e)=M\setminus\{z\}$.
		Since $M\subseteq A(H)$, it follows that $M'\subseteq A(H)\setminus\{z\}$. Recall that $A(f)=A(H)\setminus\{z\}$ and $\lambda(f)=\{x,y\}$.
		Thus Lemma \ref{lem:menu-edge} implies that $M'$ distinguishes the edge $f$,
		contradicting the fact that $f \notin \widehat{\mathcal E}$. Hence, it must be the case that
		$e \notin \widehat{\mathcal E}$.
		
		Therefore, $H_M$ is an active hexagon because $e \notin\widehat{\mathcal E}$ and $e^h \in\widehat{\mathcal E}$. Moreover, $C(H_M)=C(H)=\{x,y,z\}$ and $A(H_M)=M \subsetneq A(H)$, which is a contradiction to the fact that $H$ is $\widehat{\mathcal E}$-primitive. Thus $M=A(H)$. This completes the proof of the claim.
	\end{proof}
	
	We return back to showing that  $\widehat{\mathcal E}$ satisfies Condition (ii) of hexagon consistency. Let $K \in \mathcal H$ be any hexagon with $A(K)=A(H)$, and let $e$ be any higher
	edge of the hexagon $K$. This, together with Claim \ref{M=A(H)} implies $A(e)=A(K)=A(H)=M$.
	Since $\lambda(e)\subseteq A(e)=M$, Lemma \ref{lem:menu-edge} implies that
	the menu $M$ distinguishes the edge $e$. Hence $e\in\widehat{\mathcal E}_{\mathcal M}= \widehat{\mathcal E}$.
	Thus every higher edge of the hexagon $K$ belongs to
	$\widehat{\mathcal E}$. Since $K$ was any arbitrary hexagon with $A(H)=A(K)$, we conclude that that  $\widehat{\mathcal E}$ satisfies Condition (ii) of hexagon consistency. Therefore, $\widehat{\mathcal E}$ satisfies hexagon consistency. This completes the proof of the necessity part of the theorem.
	
	Finally, we prove the uniqueness statement of the theorem.\\
	\textbf{Uniqueness statement:} It remains to prove that if $\widehat{\mathcal E}$ is identifiable, then there exists a unique irreducible experiment that identifies it. Suppose $\widehat{\mathcal E}$ is identifiable. Consider the experiment $\mathcal M^*$ as defined above in the proof of the sufficiency part of the theorem, that is,
	
	$$ \mathcal M^*= \mathcal M^*_2 \cup \mathcal M^*_3 \cup \cdots \cup \mathcal M^*_n,$$
	where $\mathcal M^*_2=\{\lambda(e) : e \in \widehat{\mathcal E} \mbox{ and } A(e)=\lambda(e)\}$ and $\mathcal M^*_j = \{A(H): H \in \mathcal H \mbox{ is } \widehat{\mathcal E} \mbox{-primitive}
	\mbox{ and } |A(H)|=j\}$ for each $j \in \{3,\ldots,n\}$.
	
	We will show that $\mathcal M^*$ is the \textit{unique irreducible} experiment that identifies $\widehat{\mathcal E}$. Suppose $\mathcal M$ is \textit{any} experiment that identifies $\widehat{\mathcal E}$. First, we will show that $\mathcal M^*\subseteq \mathcal M$, which will then be used to deduce the uniqueness of $\mathcal M^*$.
	
	First, take any menu $\widebar M \in\mathcal M^*_2$. By definition of
	$\mathcal M^*_2$, there exists an edge $e \in \widehat{\mathcal E}$ such that $\widebar M=\lambda(e)=A(e)$.
	Since $\mathcal M$ identifies $\widehat{\mathcal E}$ and
	$e\in\widehat{\mathcal E}$, there exists a menu $M\in\mathcal M$ that
	distinguishes $e$. By Lemma \ref{lem:menu-edge}, $\lambda(e)\subseteq M\subseteq A(e)$.
	Since $A(e)=\lambda(e)=\widebar M$, we have $M=\lambda(e)=\widebar M$. Hence
	$\widebar M\in\mathcal M$.
	
	Next, take any menu $\widebar M\in\mathcal M^*_3\cup\cdots\cup\mathcal M^*_n$.
	Then $\widebar M=A(H)$ for some $\widehat{\mathcal E}$-primitive hexagon
	$H\in\mathcal H$. Since $H$ is active, there exists a lower edge $f$ of
	$H$ such that $f \notin\widehat{\mathcal E}$ and $f^h\in\widehat{\mathcal E}$. Since $f^h\in\widehat{\mathcal E}$ and $\mathcal M$ identifies
	$\widehat{\mathcal E}$, it follows that there exists a menu $M\in\mathcal M$ that
	distinguishes the edge $f^h$. Then, by Claim \ref{M=A(H)}, $M= \widebar M=A(H)$. Hence, $\widebar M \in \mathcal M$. This completes the proof of $\mathcal M^* \subseteq \mathcal M$.

	Since $\mathcal M^*$ itself identifies $\widehat{\mathcal E}$, $\mathcal M^* \subseteq \mathcal M$
	implies that $\mathcal M^*$ is irreducible. This is because if some menu
	$M_0\in\mathcal M^*$ were redundant, then
	$\mathcal M^*\setminus\{M_0\}$ would still identify
	$\widehat{\mathcal E}$, contradicting the fact that every experiment
	identifying $\widehat{\mathcal E}$ must contain $M_0$.
	
	Finally, let $\mathcal N^*$ be any \textit{irreducible} experiment that identifies
	$\widehat{\mathcal E}$. By the preceding argument, $\mathcal M^*\subseteq \mathcal N^*$.
	If the inclusion were strict, choose a menu
	$M_0\in\mathcal N^* \setminus\mathcal M^*$. Then $\mathcal M^*\subseteq \mathcal N^* \setminus\{M_0\}\subseteq\mathcal N^*$.
	Since both $\mathcal M^*$ and $\mathcal N^*$ identify
	$\widehat{\mathcal E}$, the intermediate experiment
	$\mathcal N^*\setminus\{M_0\}$ also identifies $\widehat{\mathcal E}$. Hence
	$M_0$ is redundant in $\mathcal N^*$, contradicting the irreducibility of
	$\mathcal N^*$. Therefore $\mathcal N^*=\mathcal M^*$. Thus $\mathcal M^*$ is the unique irreducible experiment that identifies
	$\widehat{\mathcal E}$. This completes the proof of the theorem.
\end{proof}

	\subsection{Proof of Corollary \ref{corollary}} \label{C}
\vspace{2mm}

\noindent
\begin{proof} Let $T$ be a connected type space. 
	
	\textbf{Only if part:} Suppose $T$ is identifiable. Then there exists an experiment
	$\mathcal M$ such that $T=\mathcal R_{\mathcal M}$. We claim that
	$\partial T=\widehat{\mathcal E}_{\mathcal M}$. To see this, let
	$\{P,P'\}\in\mathcal E$. Since $P$ and $P'$ are adjacent, $|P \vartriangle P'|=1$, that is, they differ in
	the ranking of exactly one pair of alternatives. Thus $P$ and $P'$ belong
	to different types in $\mathcal R_{\mathcal M}$ if and only if there exists a menu
	$M\in\mathcal M$ such that $\mbox{\textit{dom}}_P(M)\neq \mbox{\textit{dom}}_{P'}(M)$.
	Equivalently, $\{P,P'\} \in \partial T$ if and only if $\{P,P'\} \in\widehat{\mathcal E}_{\mathcal M}$. Hence,
	$\partial T=\widehat{\mathcal E}_{\mathcal M}$. Therefore, the set of edges $\partial T$ is identifiable. By Theorem \ref{Characterization_of_Identifiability}, it follows that $\partial T$ satisfies face
	consistency. This completes the proof of the only if part.
	
	\textbf{If part:} Suppose that the set of edges $\partial T$ satisfies face consistency. Theorem \ref{Characterization_of_Identifiability} implies that $\partial T$ is an identifiable set of edges. Hence, there
	exists an experiment $\mathcal M$ such that $\widehat{\mathcal E}_{\mathcal M}=\partial T$.
	This, together with Observation \ref{Obs_1} implies
	$$
	\mathcal R_{\mathcal M}=\mathcal R_{\widehat{\mathcal E}_{\mathcal M}}=\mathcal R_{\partial T}.
	$$
	It remains to show that $\mathcal R_{\partial T}=T$, which would imply that $T$ is identifiable, thereby completing the proof of the if part. Recall that $\mathcal R_{\partial T}$ is the type space whose types are connected components of $G_{\partial T}$. By definition, $\partial T$ is the set of edges of the permutohedron that connects two distinct types of $T$. Therefore, after deleting
	the edges of the permutohedron that belongs to $\partial T$, no remaining edge of the permutohedron connects two distinct types of
	$T$. Hence, every connected component of $G_{\partial T}$ is contained in
	some type of $T$, that is, for every connected component $C$ of $G_{\partial T}$, there exists a type $t \in T$ such that $C \subseteq t$. Therefore, for every type $t' \in \mathcal R_{\partial T}$, there exists a type $t \in T$ such that $t' \subseteq t$.
	
	Now, fix any pair of types $t' \in \mathcal R_{\partial T}$ and $t \in T$ such that $t' \subseteq t$. We will show that $t'=t$, which will conclude that $\mathcal R_{\partial T}=T$. To show that $t=t'$, it is enough to show that $t \subseteq t'$. If $|t|=1$, then non-emptiness of $t'$ implies that $t=t'$. Assume $|t| \geq 2$. Since $T$ is
	connected, for any two preferences $P,P'\in t$, there exists a path between $P$ and $P'$ all
	of whose vertices belong to $t$. No edge on this path belongs to
	$\partial T$, because both endpoints of each such edge are in the same type $t$.
	Therefore, the path remains in the graph $G_{\partial T}$. Thus all preferences in
	$t$ belong to the same connected component of $G_{\partial T}$. Since $t'$ is a connected component of $G_{\partial T}$ and $t' \subseteq t$, it follows that $t \subseteq t'$. Therefore, it must be the case that $t=t'$. Hence, $\mathcal R_{\partial T}=T$ and we conclude that $T$ is identifiable. This completes the proof of the if part.
	
	\textbf{Uniqueness statement:} It remains to prove the uniqueness statement. Suppose $T$ is identifiable. Let $\mathcal M^*$ be an irreducible experiment that identifies $T$, that is, $\mathcal R_{\mathcal M^*}=T$.\footnote{Such $\mathcal M^*$ always exists. To see this, consider any experiment that identifies $T$. Since $X$ is finite, this experiment contains finitely many
		menus. Successively deleting redundant menus therefore yields
		an irreducible identifying experiment $\mathcal M^*$.} This implies that an edge $\{P,P'\}$ of the permutohedron is distinguished by $\mathcal M^*$ if and only if $t(P)\neq t(P')$. Therefore, $\{P,P'\} \in \widehat{\mathcal E}_{\mathcal M^*}$ if and only if $\{P,P'\} \in \partial T$, that is, $\widehat{\mathcal E}_{\mathcal M^*} = \partial T$. Hence, $\mathcal M^*$ identifies the set of edges $\partial T$. Since $\mathcal M^*$ is an irreducible experiment, by the uniqueness part of Theorem \ref{Characterization_of_Identifiability}, it follows that $\mathcal M^*$ is the unique irreducible experiment that identifies $\partial T$. 
	
	Now, let $\mathcal N$ be any irreducible experiment that identifies $T$. Then, by same argument provided above, it follows that $\mathcal N$ identifies $\partial T$. By the uniqueness part of Theorem \ref{Characterization_of_Identifiability}, it follows that $\mathcal N=\mathcal M^*$. Therefore $\mathcal M^*$ is the unique irreducible experiment identifying
	$T$. This completes the proof of the uniqueness statement, thereby completing the proof of the corollary.

\end{proof}

	\subsection{Proof of Proposition \ref{prop:applications-shortlist-dichotomy}} \label{D}
\vspace{2mm}

\noindent	
\begin{proof}
	Let $S$ be a shortlist policy that satisfies (A1)-(A4). First, we prove the only if part.
	
	\textbf{Only if part:} Suppose $T_S$ is identifiable. By Proposition \ref{prop:applications-equivalence}, it follows that $T_S$ is connected and $\partial T_S$ satisfies face
	consistency. 
	
	Symmetry (A4) implies that for any $P, P' \in \mathcal P$ with $|\operatorname{Acc}(P)|=|\operatorname{Acc}(P')|$, it must be the case that $|S(P)|=|S(P')|$. To see this, consider any such $P,P' \in \mathcal P$ with $|\operatorname{Acc}(P)|=|\operatorname{Acc}(P')|$. Then, we can always find a permutation of $Y$ (say $\sigma$) such that $P=\sigma P'$. By symmetry (A4), it follows that $S(\sigma P')=\sigma S(P')$. This, together with the fact that $P=\sigma P'$ implies $S(P)=\sigma S(P')$. Hence, $|S(P)|=|S(P')|$. We will use this consequence repeatedly below, referring to it simply as symmetry (A4).
	
	For any $P \in \mathcal P$, by Properties (A1)-(A3), it follows that
	\begin{enumerate}[(i)]
		\item $|S(P)|=0$ if $|\operatorname{Acc}(P)|=0$, 
		\item $1 \leq |S(P)| \leq |\operatorname{Acc}(P)|$ if  $|\operatorname{Acc}(P)| \geq 1$, and
		\item $S(P)=\{r_1(P),\ldots,r_{|S(P)|}(P)\}$ if $|\operatorname{Acc}(P)| \geq 1$.
	\end{enumerate} 
	
	Next, we characterize $S$ in three claims below.
	\begin{claim}\label{C_1}
		$|S(P)| \in \{1, |\operatorname{Acc}(P)|\}$ for any $P \in \mathcal P$ with $|\operatorname{Acc}(P)| \geq 1$.
	\end{claim}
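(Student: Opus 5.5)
The plan is to argue by contradiction, using only condition (i) of hexagon consistency for $\partial T_S$. Under the standing assumption of the only if part, $\partial T_S$ satisfies face consistency by Proposition~\ref{prop:applications-equivalence}. Suppose some $P$ with $a:=|\operatorname{Acc}(P)|\geq 1$ has $s:=|S(P)|$ with $1<s<a$; this forces $a\geq 3$. I will exhibit a hexagon in which a lower edge lies in $\partial T_S$ while its opposite higher edge does not.

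By the consequence of symmetry (A4) noted above, every preference $P'$ with $|\operatorname{Acc}(P')|=a$ satisfies $|S(P')|=s$. By fact (iii), $S(P')=\{r_1(P'),\ldots,r_s(P')\}$. Since $a$ products are acceptable at $P$, the alternatives $r_1(P),\ldots,r_a(P)$ are products and $o=r_{a+1}(P)$.

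Take $i=s-1$. Then $i\geq 1$ and $i+2=s+1\leq a$, so the ranks $i,i+1,i+2$ are all occupied by products at $P$. Consider the hexagon
$$
H=\bigl(P,P^i,P^{i(i+1)},P^{i(i+1)i},P^{(i+1)i},P^{(i+1)},P\bigr).
$$
Along $H$ only the three products in positions $s-1,s,s+1$ move, and $o$ stays at rank $a+1$ below them. Hence every vertex $P'$ of $H$ has $\operatorname{Acc}(P')=\operatorname{Acc}(P)$, so $|\operatorname{Acc}(P')|=a$, and therefore $S(P')$ is the set of its top $s$ alternatives.

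Next I compare shortlists across the edges of $H$. A higher edge swaps the alternatives at ranks $s-1$ and $s$. Both lie inside the top-$s$ set, so this set is unchanged, the two endpoints have the same shortlist, and the edge is not in $\partial T_S$. A lower edge swaps ranks $s$ and $s+1$. The top-$s$ set loses one product and gains another, so the shortlists differ and the edge lies in $\partial T_S$. In particular, the lower edge $\{P,P^{i+1}\}$ belongs to $\partial T_S$ while its opposite higher edge does not. This violates condition (i) of hexagon consistency, a contradiction. Hence $s\in\{1,a\}$.

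The main thing to check carefully is that the acceptable set, and hence the size $s$ of the shortlist, is constant across all six vertices of $H$. This holds because $o$ is ranked strictly below the changing triple, since $s+1\leq a$. This is exactly where the assumption $s<a$ is used; the assumption $s>1$ is what guarantees that $i=s-1\geq 1$, so the hexagon exists. Connectedness of $T_S$ is not needed for this claim.
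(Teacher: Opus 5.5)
Your proof is correct and follows essentially the same route as the paper. Both arguments take the hexagon with $i=|S(P)|-1$, note that $o$ stays below the changing triple so that the acceptable set and the shortlist size are constant on $H$, and conclude that every lower edge lies in $\partial T_S$ while every higher edge does not, which violates condition (i) of hexagon consistency; as you observe, connectedness is not used here, and the paper also invokes it only in the next claim.
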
 
	
	\begin{proof}
		Fix any $P \in \mathcal P$ with $|\operatorname{Acc}(P)| \geq 1$. Suppose, to the contrary, that $2 \leq |S(P)| \leq |\operatorname{Acc}(P)|-1$. Then, consider the hexagon $H=(P,P^i,P^{i{(i+1)}},P^{i{(i+1)}i},P^{{(i+1)}i},P^{(i+1)},P)$ where $i=|S(P)|-1$. Note that the changing triple of the hexagon $H$ remains above $o$ at every vertex of $H$. Therefore, acceptable set at every vertex of $H$ is the same as $\operatorname{Acc}(P)$. This implies that $|\operatorname{Acc}(P)|=|\operatorname{Acc}(P')|$ for every vertex $P'$ of $H$. By symmetry (A4), it follows that $|S(P)|=|S(P')|$ for every vertex $P'$ of $H$.
		
		Note that for the hexagon $H$, the changing triple is given by $C(H)=\{r_i(P),r_{i+1}(P),r_{i+2}(P)\}$, where $i=|S(P)|-1$. This, together with the fact that
		$|S(P)|=|S(P')|$ for every vertex $P'$ of $H$ implies that every lower edge of $H$ swaps a
		selected product with an unselected product and therefore belongs to
		$\partial T_S$. On the other hand, every higher edge of $H$ swaps two selected products and
		leaves the shortlist unchanged at both vertices of the edge, so it lies outside $\partial T_S$. Therefore, $\partial T_S$ violates hexagon consistency and hence violates face consistency. This leads to a contradiction to the fact that $\partial T_S$ satisfies face consistency. Hence,
		$|S(P)| \in \{1, |\operatorname{Acc}(P)|\} \mbox{ for every } P \in \mathcal P \mbox{ with } |\operatorname{Acc}(P)| \geq 1$. This proves the claim.
	\end{proof} 
	
	\begin{claim}\label{C_2}
		If $|S(P)|=|\operatorname{Acc}(P)|$ for some $P \in \mathcal P$ with $|\operatorname{Acc}(P)| \geq 2$, then for every $P' \in \mathcal P$ with $|\operatorname{Acc}(P')|>|\operatorname{Acc}(P)|$, $|S(P')|=|\operatorname{Acc}(P')|$.
	\end{claim}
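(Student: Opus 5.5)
The plan is to reduce the claim to a statement about how the shortlist size depends on the number of acceptable products, and then to derive a contradiction from connectedness of $T_S$, which holds by Proposition~\ref{prop:applications-equivalence} because $T_S$ is identifiable.

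First, by symmetry (A4), $|S(P)|$ depends only on $|\operatorname{Acc}(P)|$. So for each $a\in\{1,\ldots,m\}$ I write $s(a)$ for the common value of $|S(P)|$ over preferences with $|\operatorname{Acc}(P)|=a$. By Claim~\ref{C_1}, $s(a)\in\{1,a\}$. When $a=1$ the two options coincide, so $s(1)=1=a$. It then suffices to show, for every $a\geq 2$: if $s(a)=a$, then $s(a+1)=a+1$. Repeating this step from $|\operatorname{Acc}(P)|$ up to $|\operatorname{Acc}(P')|$ gives the claim.

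Suppose instead that $s(a)=a\geq 2$ but $s(a+1)=1$. Fix a product $x$ and consider the type $t=\{P'':S(P'')=\{x\}\}$ of $T_S$. It contains the following two preferences.
\begin{enumerate}[(i)]
\item A preference $Q$ with $r_1(Q)=x$ and $|\operatorname{Acc}(Q)|=a+1$. Here $S(Q)=\{r_1(Q)\}=\{x\}$, because $s(a+1)=1$ and, by property (iii) listed before Claim~\ref{C_1}, the shortlist is an initial segment of the ranking.
\item A preference $Q'$ with $r_1(Q')=x$ and $r_2(Q')=o$, so $|\operatorname{Acc}(Q')|=1$ and $S(Q')=\{x\}$.
\end{enumerate}
By connectedness there is a path in the permutohedron from $Q$ to $Q'$ that stays inside $t$.

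Along any path, an adjacent swap changes $|\operatorname{Acc}(\cdot)|$ by at most one: it changes $\operatorname{Acc}$ only when it swaps $o$ with a product. The path starts at level $a+1$ and ends at level $1\le a-1$, so it must pass through some preference $R$ with $|\operatorname{Acc}(R)|=a$. Then $|S(R)|=s(a)=a\geq 2$, so $S(R)\neq\{x\}$ and $R\notin t$. This contradicts the choice of the path, and hence establishes the inductive step.

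I expect no serious obstacle. The only point needing care is the role of $a\geq 2$. It guarantees that level $a$ differs from level $1$, so the full-acceptability level $a$ genuinely separates the two occurrences of the singleton shortlist $\{x\}$; it also makes $|S(R)|\geq 2$, which is what excludes $R$ from $t$. The counting fact that adjacent swaps move $|\operatorname{Acc}|$ by at most one is immediate from the definition of $\operatorname{Acc}$.
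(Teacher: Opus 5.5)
Your proof is correct and matches the paper's argument. Connectedness of $T_S$ forces a path inside the type $\{x\}$ from a preference with $|\operatorname{Acc}|=1$ to one with more acceptable products, and since adjacent swaps change $|\operatorname{Acc}|$ by at most one, the path must pass through a level where $|S|\geq 2$. The only difference is your induction over consecutive levels, which is unnecessary: the paper applies the same crossing argument directly between level $1$ and level $|\operatorname{Acc}(P')|$, passing through level $|\operatorname{Acc}(P)|$.
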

	
	\begin{proof}
		Let $P \in \mathcal P$ with $|\operatorname{Acc}(P)| \geq 2$ such that $|S(P)|=|\operatorname{Acc}(P)|$. Suppose, to the contrary that there exists $P' \in \mathcal P$ with $|\operatorname{Acc}(P')|>|\operatorname{Acc}(P)|$ such that $|S(P')|\neq|\operatorname{Acc}(P')|$. By Claim \ref{C_1}, it follows that $|S(P')|=1$. Fix any $a \in Y$. Consider a preference $\tilde P \in \mathcal P$ such that $r_1(\tilde P)= a$ and $r_2(\tilde P)=o$. Since $|\operatorname{Acc}(\tilde P)|=1$, by claim \ref{C_1} and Property (A2), it follows that $S(\tilde P)=\{a\}$. Also, consider a preference $\widehat P \in \mathcal P$ such that $|\operatorname{Acc}(P')|=|\operatorname{Acc}(\widehat P)|$ and $r_1(\widehat P)=a$. Notice that such preferences $\tilde P$ and $\widehat P$ always exist. Since $|\operatorname{Acc}(P')|=|\operatorname{Acc}(\widehat P)|$, by symmetry (A4), it follows that $|S(P')|=|S(\widehat P)|$. This, together with the fact that $|S(P')|=1$ and $r_1(\widehat P)=a$ implies $S(\widehat P)=\{a\}$. Thus, we have $S(\widehat P)=\{a\}=S(\tilde P)$. Hence, $\widehat P$ and $\tilde P$ belong to the same type (say $\tilde t$). Recall that $T_S$ is connected. Therefore, there exists a path $\pi=(P_1=\tilde P,\ldots,P_i=\widehat P)$ in the permutohedron between $\tilde P$ and $\widehat P$ such that $\pi$ is contained in $\tilde t$. Since consecutive preferences in the path $\pi$ are adjacent preferences, $|\operatorname{Acc}(\tilde P)|=1$ and $|\operatorname{Acc}(\widehat P)|=|\operatorname{Acc}(P')|>|\operatorname{Acc}(P)|\geq 2$, it follows that there must exist $j \in \{1,\ldots,i\}$ such that $|\operatorname{Acc}(P_j)|=|\operatorname{Acc}(P)|$. By symmetry (A4), this implies that $|S(P_j)|=|S(P)|$. This, together with the facts that $|S(P)|=|\operatorname{Acc}(P)|$ and $|\operatorname{Acc}(P)| \geq 2$ implies  $|S(P_j)|\geq 2$. This establishes that $P_j \notin \tilde t$. Therefore, the path $\pi$ is not contained in $\tilde t$, which is a contradiction. This completes the proof of the claim.
	\end{proof}
	
	In the next and final claim, we show that $S$ must be either $S^{\mathrm{first}}$ or $S^{\mathrm{acc}}$, which would complete the proof of the only if part of the proposition.
	
	\begin{claim}\label{C_3} One of the following must hold:
		\begin{enumerate}[(i)]
			\item $|S(P)|=1$ for every $P \in \mathcal P$ with $|\operatorname{Acc}(P)|\geq 1$.
			\item $|S(P)|=|\operatorname{Acc}(P)|$ for every $P \in \mathcal P$ with $|\operatorname{Acc}(P)|\geq 1$
		\end{enumerate}
		
	\end{claim}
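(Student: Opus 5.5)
We will prove Claim~\ref{C_3} by combining Claim~\ref{C_1}, Claim~\ref{C_2} and symmetry (A4). By symmetry (A4), $|S(P)|$ depends only on $k=|\operatorname{Acc}(P)|$, so we may write $s(k)=|S(P)|$ for any $P$ with $|\operatorname{Acc}(P)|=k$, where $k\in\{1,\ldots,m\}$. By Claim~\ref{C_1}, $s(k)\in\{1,k\}$ for every $k\geq 1$. For $k=1$ the two options coincide, so $s(1)=1$ and both (i) and (ii) hold at $k=1$. It therefore suffices to show that either $s(k)=1$ for all $k\geq 2$, or $s(k)=k$ for all $k\geq 2$.

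Suppose (i) fails. Then some $k_0\geq 2$ has $s(k_0)\neq 1$, hence $s(k_0)=k_0$. Take $k_0$ to be the smallest such index. Claim~\ref{C_2}, applied to any $P$ with $|\operatorname{Acc}(P)|=k_0$, gives $s(k)=k$ for all $k>k_0$. It remains to rule out $k_0>2$, that is, to show $s(2)=2$. Equivalently, we must exclude a pattern in which $s(k)=1$ for $2\leq k<k_0$ and $s(k)=k$ for $k\geq k_0$, with $k_0\geq 3$.

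For this last step we will mimic the connectedness argument in the proof of Claim~\ref{C_2}, run in the opposite direction. Fix $a\in Y$. Choose $\tilde P$ with $r_1(\tilde P)=a$, $r_2(\tilde P)=o$, so that $S(\tilde P)=\{a\}$. Choose $\widehat P$ with $r_1(\widehat P)=a$ and $|\operatorname{Acc}(\widehat P)|=2$, so that $s(2)=1$ gives $S(\widehat P)=\{a\}$. This alone does not yield a contradiction, because a path can move between acceptable-set sizes $1$ and $2$ without passing through size $k_0$.

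Instead, we will use hexagon consistency at the threshold. Take $P$ with $|\operatorname{Acc}(P)|=k_0-1\geq 2$ and ranking $a_1\cdots a_{k_0-1}\,o\,b\cdots$. Consider the hexagon with changing triple $\{a_{k_0-1},o,b\}$ at ranks $k_0-1,k_0,k_0+1$. At its vertices the shortlist is $\{a_1\}$ whenever $o$ lies above some product of the triple; when $o$ is ranked last in the triple, the shortlist is all of $\{a_1,\ldots,a_{k_0-2}\}$ together with both triple products. We will then check which edges lie in $\partial T_S$:
\begin{enumerate}[(1)]
\item the lower edge swapping $o$ with $b$ (with $o$ below $a_{k_0-1}$) changes $|S|$ from $1$ to $k_0$, so it lies in $\partial T_S$;
\item its opposite higher edge swaps $o$ and $b$ while $a_{k_0-1}$ is below both; the shortlist is $\{a_1\}$ at both endpoints, so this edge lies outside $\partial T_S$.
\end{enumerate}
This violates condition (i) of hexagon consistency. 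For $k_0=2$ the same hexagon gives no violation, since then $S=\{a_1\}$ at one endpoint and $\{a_1,b\}$ at the other, consistent with (ii).

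The main obstacle is making this computation careful, in particular ensuring $k_0-1\geq 2$ so that $a_1\notin$ triple and the shortlist at the higher edge's endpoints really is the singleton $\{a_1\}$. Having this violation rules out $k_0\geq 3$, so $k_0=2$ and (ii) holds.
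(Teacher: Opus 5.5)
Your proof is correct and follows the paper's route: take the minimal threshold $k_0$ (the paper's $s_0$), use symmetry and Claims~\ref{C_1} and~\ref{C_2} to get $s(k)=1$ below $k_0$ and $s(k)=k$ from $k_0$ on, then force $k_0=2$ by a violation of condition (i) of hexagon consistency. Your hexagon and edge pair are exactly the paper's. The paper describes the hexagon from the vertex $a_{k_0-1}\,b\,o$, where $o$ is last in the triple, and your lower edge swapping $\{o,b\}$ and its opposite higher edge are precisely its $e$ and $e^h$. The abandoned connectedness detour can simply be deleted.
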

	
	\begin{proof}
		If (i) holds, then by Properties (A1) and (A2) it follows that $S=S^{\mathrm{first}}$. Let us assume that (i) does not hold. Define $$s_0=\min\{|\operatorname{Acc}(P)| : P \in \mathcal P, |S(P)|=|\operatorname{Acc}(P)| \mbox{ and }|\operatorname{Acc}(P)|\geq 2\}.$$
		
		Since (i) does not hold, using Claim \ref{C_1} we can conclude that $s_0$ exists. Note that by definition, $s_0\geq 2$. By symmetry (A4), together with Claim \ref{C_1} and Claim \ref{C_2}, it follows that $|S(P)|=1$ for all $P \in \mathcal P$ such that $1 \leq |\operatorname{Acc}(P)|<s_0$ and $|S(P)|=|\operatorname{Acc}(P)|$ for all $P \in \mathcal P$ such that $|\operatorname{Acc}(P)|\geq s_0$. We will show that $s_0=2$, which would establish that $S=S^{\mathrm{acc}}$, thereby completing the proof of the claim.	
		
		Suppose, to the contrary, that $s_0\geq 3$. We will show that this leads to a contradiction, thereby completing the proof of the claim. Consider a preference $\tilde P \in \mathcal P$ such that $|\operatorname{Acc}(\tilde P)|=s_0$.  Note that such a preference $\tilde P$ always exists because $\mathcal P$ consists of all possible preferences. Since $|\operatorname{Acc}(\tilde P)|=s_0$, it follows that $r_{s_0+1}(\tilde P)=o$. Let $r_1(\tilde P)=b$. Consider the hexagon  $H=(\tilde P,\tilde P^i,\tilde P^{i{(i+1)}},\tilde P^{i{(i+1)}i},\tilde P^{{(i+1)}i},\tilde P^{(i+1)},\tilde P)$ where $i=s_0-1$. Since $s_0\geq 3$, we have $i=s_0-1\geq 2$.
		The changing triple of $H$ therefore excludes the first position at every vertex,
		so $r_1(P')=b$ at every vertex $P'$ of $H$.
		Consider the lower edge $e=\{\tilde P, \tilde P^{(i+1)}\}$ of the hexagon $H$. Note that $|\operatorname{Acc}(\tilde P)|=s_0$ and $|\operatorname{Acc}(\tilde P^{(i+1)})|=s_0-1$. This, together with the definition of $s_0$ and the fact that $r_1(\tilde P^{(i+1)})=b$, it follows that  $S(\tilde P^{(i+1)})=\{b\}$ and $|S(\tilde P)|=|\operatorname{Acc}(\tilde P)|=s_0\geq 3$. This implies that the edge $e \in \partial T_S$. Now, consider the opposite higher edge of $e$, $e^h=\{\tilde P^{i{(i+1)}},\tilde P^{i{(i+1)}i}\}$. Notice that $|\operatorname{Acc}(\tilde P^{i{(i+1)}})|=s_0-1$ and $|\operatorname{Acc}(\tilde P^{i{(i+1)}i})|=s_0-2$. Since $s_0\geq 3$, it follows that $s_0-2\geq 1$. These, together with the definition of $s_0$ and the fact that $r_1(\tilde P^{i{(i+1)}})=b=r_1(\tilde P^{i{(i+1)}i})$ implies $S(\tilde P^{i{(i+1)}})=S(\tilde P^{i{(i+1)}i})=\{b\}$. Therefore, both preferences $\tilde P^{i{(i+1)}}$ and $\tilde P^{i{(i+1)}i}$ belong to the same type. Thus, $e^h=\{\tilde P^{i{(i+1)}},\tilde P^{i{(i+1)}i}\} \notin \partial T_S$. Hence, $\partial T_S$ violates hexagon consistency, and consequently violates face consistency, a contradiction. This completes the proof of the claim. 
	\end{proof}

	This completes the proof of the only if part of the proposition. Next, we prove the if part of the proposition.
	
	\textbf{If part and unique irreducible experiment:} Suppose either $S=S^{\mathrm{first}}$ or $S=S^{\mathrm{acc}}$. It is straightforward to verify that both $S^{\mathrm{first}}$ and $S^{\mathrm{acc}}$ satisfy (A1)-(A4). First, we show that $T_{S^{\mathrm{first}}}$ and $T_{S^{\mathrm{acc}}}$ are connected. Recall that two preferences $P, P' \in \mathcal P$ belong to the same type of $T_{S^{\mathrm{first}}}$ (respectively, $T_{S^{\mathrm{acc}}}$) if and only if $r_1(P)=r_1(P')$ (respectively, $\operatorname{Acc}(P)=\operatorname{Acc}(P')$). Therefore, for any two preferences $P,P'$ belonging to the same type, consider a shortest path $\pi=(P_1=P,\ldots,P_i=P')$ in the permutohedron between $P$ and $P'$.\footnote{As already mentioned in Footnote \ref{shortest-path}, existence of a shortest path in the permutohedron between any two preferences is well established in the literature. See Proposition 4.1 in \citet{sato2013sufficient}.} Along a shortest path, that is, a path with minimum number of edges, every pair of alternatives on which
	both $P$ and $P'$ agree retains its relative ranking. If
	$r_1(P)=r_1(P')$, the common top ranked alternative therefore remains
	top ranked throughout the path. Similarly, if
	$\operatorname{Acc}(P)=\operatorname{Acc}(P')$, every product
	remains on the same side of $o$ throughout the path. Since $\pi$ is a shortest path, it follows that $r_1(P_j)=r_1(P)=r_1(P')$ for all $j \in \{1,\ldots,i\}$ (respectively, $\operatorname{Acc}(P_j)=\operatorname{Acc}(P)=\operatorname{Acc}(P')$ for all $j \in \{1,\ldots,i\}$). Therefore, $\pi$ lies inside the same type. Hence, $T_{S^{\mathrm{first}}}$ (respectively, $T_{S^{\mathrm{acc}}}$) is connected. Next, we show that $\partial T_{S^{\mathrm{first}}}$ and $\partial T_{S^{\mathrm{acc}}}$ satisfy face consistency, which would establish (using Proposition \ref{prop:applications-equivalence}) that $T_{S^{\mathrm{first}}}$ and $T_{S^{\mathrm{acc}}}$ are identifiable.
	
	First, we consider $\partial T_{S^{\mathrm{first}}}$. Notice that $\partial T_{S^{\mathrm{first}}}=\{\{P,P^1\}: P \in \mathcal P\}$, that is, $\partial T_{S^{\mathrm{first}}}$ contains precisely every edge that swaps the first two positions. Since opposite edges of a square swap the same pair of alternatives in the same positions, it follows that $\partial T_{S^{\mathrm{first}}}$ satisfies square consistency. Next, we show that $\partial T_{S^{\mathrm{first}}}$ satisfies hexagon consistency. Consider any hexagon $H$ with $A(H)=X$.\footnote{Recall that $A(H)$ is the associated set of the hexagon $H$, consisting of the changing triple together with all alternatives ranked below them.} Every higher edge of $H$ is of the form $\{P,P^1\}$, where $P \in \mathcal P$. Therefore, all higher edges of $H$ belong to $\partial T_{S^{\mathrm{first}}}$. Also, none of the lower edges of $H$ belong to $\partial T_{S^{\mathrm{first}}}$ because none of the lower edges have the form $\{P,P^1\}$ where $P \in \mathcal P$. Hence, $H$ is an active hexagon. For every other hexagon $H'$ with $A(H')\neq X$, none of its edges belong to $\partial T_{S^{\mathrm{first}}}$ because none of the edges have the form $\{P,P^1\}$ where $P \in \mathcal P$. Therefore, $H$ is also a primitive hexagon. This implies that $\partial T_{S^{\mathrm{first}}}$ contains precisely all higher edges of every primitive hexagon $H$ with $A(H)=X$. Hence, $\partial T_{S^{\mathrm{first}}}$ satisfies hexagon consistency. Therefore, $\partial T_{S^{\mathrm{first}}}$ satisfies face consistency. By Proposition \ref{prop:applications-equivalence}, it follows that $T_{S^{\mathrm{first}}}$ is identifiable.
	
	Let $\mathcal M^*_{\mathrm{first}}$ be the unique irreducible experiment that identifies $T_{S^{\mathrm{first}}}$. As discussed in Section \ref{Ex_illustrate_results} and also discussed in the proof of Theorem \ref{Characterization_of_Identifiability}, we have $$\mathcal M^*_{\mathrm{first}}=\mathcal M^*_2 \cup \mathcal M^*_3 \cup \cdots \cup \mathcal M^*_n,$$
	where $\mathcal M^*_2=\{\lambda(e) : e \in \partial T_{S^{\mathrm{first}}} \mbox{ and } A(e)=\lambda(e)\}$ and $$\mathcal M^*_j = \{A(H): H \in \mathcal H \mbox{ is } \partial T_{S^{\mathrm{first}}} \mbox{-primitive}
	\mbox{ and } |A(H)|=j\} \mbox{ for each }j \in \{3,\ldots,n\}.$$ 
	Every edge $e\in\partial T_{S^{\mathrm{first}}}$ swaps the
	first two positions, so $A(e)=X$. Since $|X|=n=m+1\geq3$,
	we have $A(e)\neq\lambda(e)$, and hence
	$\mathcal M_2^*=\emptyset$. Since $\partial T_{S^{\mathrm{first}}}$ contains precisely all higher edges of every primitive hexagon $H$ with $A(H)=X$, it follows that $\mathcal M^*_j =\emptyset$ for all $3 \leq j<n=|X|$. Hence, $\mathcal M^*_{\mathrm{first}}=\{X\}$.
	
	Next, we consider $\partial T_{S^{\mathrm{acc}}}$. Notice that $\partial T_{S^{\mathrm{acc}}}$ contains precisely those edges that
	swap $o$ with a product, that is, $\partial T_{S^{\mathrm{acc}}}=\{e \in \mathcal E:\lambda (e)=\{o,x\}, \mbox{ where } x \in Y\}$.\footnote{Recall that $\mathcal E$ is the set of all edges of the associated permutohedron.} Since opposite edges of a square swap the same pair of alternatives in the same positions, it follows that $\partial T_{S^{\mathrm{acc}}}$ satisfies square consistency. Next, we show that $\partial T_{S^{\mathrm{acc}}}$ satisfies hexagon consistency. Consider any hexagon $H$ and any lower edge $e$ of $H$. Both edges $e$ and $e^h$ swap the same pair of alternatives. Therefore, either both $e$ and $e^h$ belong to $\partial T_{S^{\mathrm{acc}}}$ or both lie outside $\partial T_{S^{\mathrm{acc}}}$, hence, condition (i) of hexagon consistency is satisfied. Further, this also implies that there is no active hexagon. Therefore, there is no primitive hexagon and hence condition (ii) of hexagon consistency is vacuously satisfied. This establishes that $\partial T_{S^{\mathrm{acc}}}$ satisfies hexagon consistency. Thus, $\partial T_{S^{\mathrm{acc}}}$ satisfies face consistency. By Proposition \ref{prop:applications-equivalence}, it follows that $T_{S^{\mathrm{acc}}}$ is identifiable.
	
	Let $\mathcal M^*_{\mathrm{acc}}$ be the unique irreducible experiment that identifies $T_{S^{\mathrm{acc}}}$. As discussed in Section \ref{Ex_illustrate_results} and also discussed in the proof of Theorem \ref{Characterization_of_Identifiability}, we have $$\mathcal M^*_{\mathrm{acc}}=\mathcal M^*_2 \cup \mathcal M^*_3 \cup \cdots \cup \mathcal M^*_n,$$
	where $\mathcal M^*_2=\{\lambda(e) : e \in \partial T_{S^{\mathrm{acc}}} \mbox{ and } A(e)=\lambda(e)\}$ and $$\mathcal M^*_j = \{A(H): H \in \mathcal H \mbox{ is } \partial T_{S^{\mathrm{acc}}} \mbox{-primitive}
	\mbox{ and } |A(H)|=j\} \mbox{ for each }j \in \{3,\ldots,n\}.$$ 
	Since $\partial T_{S^{\mathrm{acc}}}$ contains no primitive hexagons, it follows that  $\mathcal M^*_j =\emptyset$ for all $3\leq j\leq n=|X|$. For each $x\in Y$, choose a preference (say, $\tilde P$) that ranks $o$ and
	$x$ in the last two positions. The edge $e=\{\tilde P, \tilde P^m\}$ obtained by
	swapping them belongs to $\partial T_{S^{\mathrm{acc}}}$
	and satisfies $A(e)=\lambda(e)=\{o,x\}$. Since $\partial T_{S^{\mathrm{acc}}}=\{e \in \mathcal E:\lambda (e)=\{o,x\}, \mbox{ where } x \in Y\}$, it follows that
	$\mathcal M_2^*=\{\{o,x\}:x\in Y\}$. Hence, $\mathcal M^*_{\mathrm{acc}}=\{\{o,x\}: x\in Y\}$. This completes the proof of the proposition.

\end{proof}

\end{appendices}
\bibliographystyle{ecta}
\bibliography{thesis}

\end{document}